\renewcommand{\backref}[1]{}
\renewcommand{\backrefalt}[4]{%
\ifcase #1 %
\or
[p.\ #2]%
\else
[pp.\ #2]%
\fi}
\renewcommand{\paragraph}{%
 \@startsection{paragraph}{4}%
 {\z@}{2.25ex \@plus .5ex \@minus .3ex}{-1em}%
 {\normalfont\normalsize\bfseries}%
}
\newcommand{\para}{%
 \@startsection{paragraph}{4}%
 {\z@}{2ex \@plus 3.3ex \@minus .2ex}{-1em}%
 {\normalfont\normalsize\bfseries}%
}
\DeclarePairedDelimiter{\ceil}{\lceil}{\rceil}
\newcolumntype{P}[1]{>{\centering\arraybackslash}p{#1}}
\newtheorem*{rep@theorem}{\rep@title}
\newcommand{\newreptheorem}[2]{%
\newenvironment{rep#1}[1]{%
 \def\rep@title{#2 \ref{##1} (formal)}%
 \begin{rep@theorem}}%
 {\end{rep@theorem}}}
\newtheorem{lem}{Lemma}
\newtheorem{thm}[lem]{Theorem}
\newtheorem*{thm*}{Theorem}
\newtheorem{prop}[lem]{Proposition}
\newtheorem{cor}[lem]{Corollary}
\newtheorem{defn}[lem]{Definition}
\theoremstyle{definition}
\newtheorem{problem}{Problem}
\newcommand\cE{{\cal E}}
\newcommand\eps{{\varepsilon}}
\newcommand{\nc}{\newcommand}
\nc{\rnc}{\renewcommand}
\nc\benum{\begin{enumerate}}
\nc\eenum{\end{enumerate}}
\nc\ot{\otimes}
\nc\bit{\begin{itemize}}
\nc\eit{\end{itemize}}
\def\be#1\ee{\begin{equation}#1\end{equation}}
\def\ba#1\ea{\begin{align}#1\end{align}}
\def\bas#1\eas{\begin{align}#1\end{align}}
\def\bpm#1\epm{\begin{pmatrix}#1\end{pmatrix}}
\def\be#1\ee{\begin{equation}#1\end{equation}}
\def\bea#1\eea{\begin{eqnarray}#1\end{eqnarray}}
\def\beas#1\eeas{\begin{eqnarray*}#1\end{eqnarray*}}
\def\ba#1\ea{\begin{align}#1\end{align}}
\def\bas#1\eas{\begin{align}#1\end{align}}
\def\bpm#1\epm{\begin{pmatrix}#1\end{pmatrix}}
\def\bbm#1\ebm{\begin{bmatrix}#1\end{bmatrix}}
\def\bbml#1\ebml{\begin{bmatrix*}[l]#1\end{bmatrix*}}
\newcommand{\inputlc}[1]{LC(#1)}
\newcommand{\outputlc}[1]{LC(#1)}
\mathchardef\mhyphen="2D
\newclass{\QNCcat}{\QNC^{0}/\Cat}
\newcommand{\NCz}{$\NC^0$\xspace}
\newcommand{\ACz}{$\AC^0$\xspace}
\newcommand{\rpoly}{\mathsf{rpoly}}
\renewcommand{\>}{\rangle}
\newcommand{\B}{\{0,1\}}
\newcommand{\PHP}{\mathrm{PHP}}
\newcommand{\RPHP}{\mathrm{RPHP}}
\newcommand{\GRPHP}{\mathrm{Grid{\mhyphen}RPHP}}
\newcommand{\PGRPHP}{\mathrm{Parallel\ Grid{\mhyphen}RPHP}}
\newcommand{\PPHP}{\mathrm{Parallel{\mhyphen}PHP}}
\newcommand{\DDHLF}{\textup{2D HLF}\xspace}
\newcommand{\HLF}{\textup{HLF}\xspace}
\renewcommand{\PBP}{\mathrm{PBP}}
\newcommand{\PPBP}{\mathrm{Parallel\ PBP}}
\title{Exponential separation between shallow quantum circuits\\ and unbounded fan-in shallow classical circuits}
\author{Adam Bene Watts\thanks{Massachusetts Institute of Technology. \texttt{abenewat@mit.edu}. Supported by NSF grant CCF-1729369.
} \and
Robin Kothari\thanks{Microsoft Quantum and Microsoft Research. \texttt{robin.kothari@microsoft.com}} \and
Luke Schaeffer\thanks{Massachusetts Institute of Technology. \texttt{lrs@mit.edu}. Part of this work was done while the author was an intern in the Quantum Architectures and Computation group (QuArC), Microsoft Research.} \and
Avishay Tal\thanks{Stanford University. \texttt{avishay.tal@gmail.com}. This work was done in part while the author was visiting the Simons Institute for the Theory of Computing. Partially supported by a Motwani Postdoctoral Fellowship and by NSF grant CCF-1763311.}}
\begin{document}
\date{}
\maketitle


\begin{abstract}
Recently, Bravyi, Gosset, and K\"{o}nig (Science, 2018) exhibited a search problem called the 2D Hidden Linear Function ($\DDHLF$) problem that can be solved exactly by a constant-depth quantum circuit using bounded fan-in gates (or $\QNC^0$ circuits), but cannot be solved by any constant-depth classical circuit using bounded fan-in AND, OR, and NOT gates (or $\NC^0$ circuits). In other words, they exhibited a search problem in $\QNC^0$ that is not in $\NC^0$. 

We strengthen their result by proving that the 2D HLF problem is not contained in $\AC^0$, the class of classical, polynomial-size, constant-depth circuits over the gate set of \emph{unbounded} fan-in AND and OR gates, and NOT gates. 
We also supplement this worst-case lower bound with an average-case result:
There exists a simple distribution under which any $\AC^0$ circuit (even of nearly exponential size) has exponentially small correlation with the 2D HLF problem. 
Our results are shown by constructing a new problem in $\QNC^0$, which we call the Relaxed Parity Halving Problem, which is easier to work with. We prove  our $\AC^0$ lower bounds for this problem, and then show that it reduces to the 2D HLF problem. 

As a step towards even stronger lower bounds, we present a search problem that we call the Parity Bending Problem, which is in $\QNC^0/\qpoly$ ($\QNC^0$ circuits that are allowed to start with a quantum state of their choice that is independent of the input), but is not even in $\AC^0[2]$ (the class $\AC^0$ with unbounded fan-in XOR gates).

All the quantum circuits in our paper are simple, and the main difficulty lies in proving the classical lower bounds. 
For this we employ a host of techniques, including a refinement of H{\aa}stad's switching lemmas for multi-output circuits that may be of independent interest, the Razborov-Smolensky $\AC^0[2]$ lower bound, Vazirani's XOR lemma, and lower bounds for non-local games.

\end{abstract}

\pagenumbering{gobble}
\clearpage

\tableofcontents
\clearpage
\pagenumbering{arabic}

\section{Introduction}

One of the basic goals of quantum computing research is to identify problems that quantum computers can solve more efficiently than classical computers. 
We now know several such problems, such as the integer factorization problem, which we believe can be solved exponentially faster on a quantum computer~\cite{Sho97}.
However, running this algorithm requires a large general-purpose quantum computer, which we do not yet have.
Hence it is interesting to find examples of quantum speedup using weaker models of quantum computation, such as models with limited space or time, limited gate sets, or limited geometry of interactions.

\para{Shallow quantum circuits.} One such model of quantum computation that has been studied for over 20 years is the class of shallow or constant-depth quantum circuits~\cite{MN02,Moo99,GHMP02,TD04,FGHZ03,HS05,FFGHZ06,TT11}. 
Such circuits may be viewed as parallel quantum computers with a constant running time bound.
Several variations on this theme have been studied (see \cite{BGH07} for a survey of older results), and in recent years there has been a resurgence of interest~\cite{TT18,BHSRE18,BGK18,CSV18,LG18} in constant-depth quantum circuits, for at least two reasons.

First, shallow quantum circuits are well motivated from a practical perspective, as we might actually be able to implement such circuits on near-term quantum computers!
In the current era of Noisy Intermediate-Scale Quantum (NISQ) computers, due to high error rates of quantum gates, we are limited to running quantum algorithms for a short amount of time before errors accumulate and noise overwhelms the signal.
Hence we seek interesting problems that can still be implemented by limited quantum hardware.

Second, constant-depth circuits (either classical or quantum) are very interesting to theoretical computer scientists, as it is possible to prove unconditional impossibility results about constant-depth circuits. 
For example, while we strongly believe that the factoring problem mentioned above requires exponential time on a classical computer, we cannot prove this.
On the other hand, many of the early successes of complexity theory involved exhibiting explicit functions that could not be computed by constant-depth classical circuits~\cite{Ajtai83, FSS84,Yao85, Has86}. 
Indeed, constant-depth circuits remain the frontier of circuit lower bounds and an active area of research in classical complexity theory today~\cite{Wil14,MW18}.

This motivates the search for problems that can be solved by constant-depth quantum circuits, while being hard for constant-depth (or even more powerful) classical circuits. 

\para{Prior work.} 
While there has been prior work on establishing the power of shallow quantum circuits assuming complexity theoretic conjectures~\cite{TD04,BHSRE18}, this work is not directly related to our work as we prove unconditional lower bounds.

In this realm the most relevant result is the recent exciting result of Bravyi, Gosset, and K\"{o}nig~\cite{BGK18}, who defined a search or relational problem\footnote{A search or relational problem can have many valid outputs for a given input, unlike a function problem that has exactly one valid output. A decision problem is a function problem with a $1$-bit output.} 
called the 2D Hidden Linear Function (2D HLF) problem.
(We define this problem in \Cref{sec:HLF}.)
The 2D HLF problem can be solved by a constant-depth quantum circuit that uses bounded fan-in quantum gates. 
Indeed, the quantum circuit solving 2D HLF can be implemented on a 2-dimensional grid of qubits with spatially local quantum gates.

Furthermore, Bravyi, Gosset, and K\"{o}nig~\cite{BGK18} show that the \DDHLF problem cannot be solved by any constant-depth classical circuit using unbounded fan-out and bounded fan-in gates. 
Their lower bound even holds when the classical circuit is allowed to sample from an arbitrary probability distribution on polynomially many bits that does not depend on the input. (In complexity theory, this resource is called ``randomized advice.'')
More formally, the class of classical circuits of polynomial-size, constant-depth, unbounded fan-out, and bounded fan-in gates is called $\NC^0$.\footnote{In this paper, we will employ a common abuse of notation and use class names like $\NC^0$ and $\AC^0$ to generally talk about a type of circuit, as opposed to decision problems solved by such circuits. Hence, for example, we speak of ``decision problems in $\AC^0$'' and ``search problems in $\AC^0$'' although formally $\AC^0$ would be the class of decision problems solved by such circuits, and $\mathsf{FAC}^0$ would be the class of search problems solved by such circuits.}
An $\NC^0$ circuit with the additional ability to sample from any probability distribution on polynomially many bits that is independent of the input, but that can depend on the input size, is called an $\NC^0/\rpoly$ circuit. The class of polynomial-size, constant-depth quantum circuits with bounded fan-in gates is called $\QNC^0$.
Note that because quantum gates have the same number of inputs and outputs, $\QNC^0$ circuits also have bounded fan-out, unlike classical $\NC^0$ circuits, which have unbounded fan-out.

With this notation, we can now summarize the Bravyi et al. result as follows~\cite{BGK18}.

\begin{thm*}[Bravyi, Gosset, and K\"{o}nig]
The \DDHLF problem can be solved exactly by a $\QNC^0$ circuit on a 2D grid, but no $\NC^0/\rpoly$ circuit can solve the problem with probability greater than $7/8$ on every input.
\end{thm*}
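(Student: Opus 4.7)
The plan is to prove the two halves of the theorem separately. For the $\QNC^0$ upper bound I would exhibit an explicit constant-depth circuit on a 2D grid of qubits. Since 2D HLF is a problem about evaluating a hidden linear function encoded in a quadratic form $q$ supported on the grid's edges, the natural circuit is a graph-state measurement: (i) prepare $|+\rangle^{\otimes n}$, (ii) apply a $\mathrm{CZ}$ across every edge of the grid at which the quadratic form is present (constant depth by two-coloring the edges), (iii) apply an $S$ gate to each qubit whose corresponding diagonal/linear coefficient of $q$ is $1$, and (iv) measure every qubit in the $X$ basis. A direct stabilizer calculation then shows that the joint distribution of outcomes is supported exactly on valid outputs of 2D HLF, placing 2D HLF in $\QNC^0$.

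For the $\NC^0/\rpoly$ lower bound the strategy is to reduce a classical--quantum separating non-local game to 2D HLF and then exploit locality of $\NC^0/\rpoly$ circuits. I would pick a family of 2D HLF instances differing only in a few ``choice'' input bits localized in several spatially separated regions of the grid. Across these choices, the valid outputs are forced to obey XOR constraints coupling output bits drawn from the separated regions; comparing constraints across choices reproduces a GHZ- or Mermin-magic-square-style game whose quantum value is $1$ but whose classical value is bounded strictly below $1$. Each output of an $\NC^0/\rpoly$ circuit depends on $O(1)$ input bits, so by placing the choice regions at distance exceeding twice the lightcone radius I can arrange that each output depends on at most one choice region, given the shared randomness. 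Conditioning on the advice yields a classical strategy for the underlying game, which by its value bound cannot satisfy all choices simultaneously with probability exceeding $7/8$; hence some 2D HLF instance is answered correctly with probability at most $7/8$.

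The main obstacle I expect is the reduction itself: one must design the family of 2D HLF instances so that the parity constraints on valid outputs exactly match the winning conditions of a separating non-local game, while simultaneously arranging the geometry so that the relevant output regions are forced into disjoint lightcones once the advice is fixed. Pinning the constant down at $7/8$ requires a careful pairing of the chosen game (whose classical value dictates this constant) with a gadget geometry that respects the $O(1)$-locality of $\NC^0/\rpoly$; the stabilizer-level verification that the $\QNC^0$ circuit satisfies every constraint in the family on the nose is the central calculation that ties the upper and lower bounds together.
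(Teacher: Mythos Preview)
This theorem is not proved in the present paper; it is stated as prior work and attributed to Bravyi, Gosset, and K\"{o}nig~\cite{BGK18}, so there is no ``paper's own proof'' to compare against. Your proposal is in fact a faithful sketch of the original argument in~\cite{BGK18}: the $\QNC^0$ upper bound via a graph-state/CZ circuit followed by $X$-basis measurement, and the $\NC^0/\rpoly$ lower bound by embedding a GHZ-type non-local game into a family of \DDHLF instances and using the bounded light cones of $\NC^0$ to extract a classical game strategy.

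It may still be useful to contrast this with what the present paper does to obtain its \emph{stronger} lower bound (\Cref{thm:main}) against $\AC^0$. Rather than working directly with \DDHLF, the paper isolates a clean intermediate problem, the Parity Halving Problem, proves an $\AC^0$ average-case lower bound for it via switching lemmas (\Cref{thm:AC0}), relaxes it to $\GRPHP$ so that a genuine $\QNC^0$ circuit (no cat-state advice) solves it, amplifies via parallel repetition using an XOR-lemma argument, and only then reduces the whole package to \DDHLF (\Cref{thm:hlfreduction}, \Cref{cor:mainreduction}). Your direct game-embedding approach is the right tool for the $\NC^0$ separation of~\cite{BGK18}, but it does not obviously scale to $\AC^0$, which is why the paper takes this longer detour.
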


The fact that the separating problem in \cite{BGK18} is a search problem and not a function (or decision) problem is unavoidable, since any function in $\QNC^0$ has output bits that only depend on a constant number of input bits, due to the bounded fan-in gates, and hence such a function would also be in $\NC^0$.

This result was also recently improved by Coudron, Stark, and Vidick~\cite{CSV18}, and (independently) Le Gall~\cite{LG18}, who extended the lower bound to an average-case lower bound. As opposed to saying that no $\NC^0$ circuit can solve the problem on \emph{all} inputs, an average-case hardness result says that no $\NC^0$ circuit can solve the problem even on some fraction of the inputs.\footnote{Note that  \cite[Appendix C.3]{BGK18} already shows mild average-case hardness for this problem.}  
These results show that no $\NC^0$ circuit can solve the problem with input size $n$ on an $\exp(-n^\alpha)$ fraction of the inputs for some $\alpha>0$.

\para{Main result.} In this work, we strengthen these results and prove a strong average-case lower bound for the 2D HLF problem against the class $\AC^0$.
$\AC^0$ is a natural and well-studied class that generalizes $\NC^0$ by allowing the circuit to use unbounded fan-in AND and OR gates. 
Note that $\NC^0 \subsetneq \AC^0$ because $\AC^0$ can compute functions that depend on all bits, such as the logical OR of all its inputs, whereas $\NC^0$ cannot. Our main result is the following.

\begin{thm}[\DDHLF]\label{thm:main}
The \DDHLF problem on $n$ bits cannot be solved by an $\AC^0$ circuit of depth $d$ and size at most $\exp(n^{1/10d})$.
Furthermore, there exists an (efficiently sampleable) 
input distribution on which any $\AC^0$ circuit (or  $\AC^0/\rpoly$ circuit) of depth $d$ and size at most $\exp(n^{1/10d})$ only solves the \DDHLF problem with probability at most $\exp(-{n^\alpha})$ for some $\alpha>0$.
\end{thm}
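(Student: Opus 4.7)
The plan is to factor the theorem through an intermediate search problem, the Relaxed Parity Halving Problem ($\RPHP$), designed so that (i) a $\QNC^0$ circuit solves it exactly by a single layer of entangling gates followed by measurements in a fixed basis, (ii) any $\AC^0$ circuit solving $\RPHP$ would be forced to compute large XORs of its input bits, and (iii) $\RPHP$ reduces to $\DDHLF$ by a local (essentially $\NC^0$) mapping that also pushes forward a nice input distribution. Step (iii) transfers the lower bound: a small $\AC^0$ circuit solving $\DDHLF$, composed with the reduction, would contradict the $\RPHP$ lower bound established in step (ii).

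The main technical obstacle sits in step (ii). Since $\RPHP$ is a relational problem, any solver has many output bits and many valid answers, so a direct application of H{\aa}stad's switching lemma output-by-output is too lossy: the union bound over outputs dissipates the whole depth/size budget. I would therefore develop a multi-output refinement of the switching lemma stating that under a $p$-random restriction with $p = n^{-\Theta(1/d)}$, a depth-$d$, size-$s$ $\AC^0$ circuit collapses simultaneously to a shallow joint decision tree on its surviving inputs, with failure probability at most $s \cdot \exp(-n^{\Omega(1/d)})$. On a surviving-variable block, the circuit's outputs can then depend on only a few bits jointly, making it impossible to meet the parity-type constraints of $\RPHP$ on that block with non-negligible probability. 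Averaging over restrictions bootstraps this into an exponentially small success probability under the planted $\RPHP$ input distribution.

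To lift this into the stated average-case statement and then transfer it to $\DDHLF$, I would proceed in two further steps. First, I would invoke Vazirani's XOR lemma: the switching-lemma argument yields correlation bounds against each individual parity constraint, and Vazirani's lemma converts such per-constraint correlation bounds into a bound on the joint success probability, producing the claimed $\exp(-n^\alpha)$. Second, I would design an efficiently sampleable $\DDHLF$ input distribution whose pullback through the $\NC^0$ reduction of step (iii) coincides with (or dominates) the hard $\RPHP$ distribution, so the average-case $\RPHP$ bound transfers. The $\AC^0/\rpoly$ extension follows by a standard averaging argument that fixes the best advice string. The hardest and most novel component by a wide margin is the multi-output switching lemma of step (ii): existing switching-lemma technology is tuned to single-output functions, and the joint collapse required for a search problem demands an honest strengthening rather than a black-box appeal.
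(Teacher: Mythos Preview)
Your overall architecture matches the paper's: factor through the Relaxed Parity Halving Problem, use a multi-output switching lemma to collapse $\AC^0$ to low-locality circuits, invoke Vazirani's XOR lemma to amplify, and transfer via an $\NC^0$ reduction to \DDHLF. The $\AC^0/\rpoly$ averaging is also handled the same way.

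There is, however, a real gap in your amplification step. The switching-lemma argument applied to a single $\RPHP$ instance only yields success probability $\tfrac{1}{2} + \exp(-n^{\alpha})$, not $\exp(-n^{\alpha})$: after restriction the circuit becomes a low-locality strategy for (essentially) a single nonlocal parity game, and the correct answer is a coin flip that the circuit can barely bias. To drive the success probability down to $\exp(-n^{\alpha})$ the paper explicitly passes to a \emph{parallel} version---$k$ independent copies of $\GRPHP$ that must all be solved---and applies Vazirani's XOR lemma to the $k$-bit string of per-copy success indicators. The crucial input to Vazirani is then that the parity of any subset of these indicators is nearly unbiased, which is not the single-instance correlation bound but a separate statement about a \emph{constrained} Parity Halving game on the union of the chosen copies (each copy contributing an even-parity constraint). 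Your description (``correlation bounds against each individual parity constraint'') does not capture this: there is essentially one parity constraint per instance, so without parallel repetition there is nothing for the XOR lemma to act on. You need to insert the parallel-repetition layer and prove the subset-parity statement (the paper's \Cref{thm:strongPHG}), and the reduction to \DDHLF must embed all $k$ copies into a single \DDHLF instance (the paper does this via a block-diagonal argument, \Cref{lem:ddhlf_sum}).

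Two smaller points. First, the paper does not prove the $\AC^0$ lower bound directly for $\RPHP$; it reduces $\GRPHP_n$ to $\PHP_{n,\,O(n^{3/2})}$ and proves the lower bound for the latter, which is cleaner because $\PHP$ has a simple nonlocal-game structure with no auxiliary $d$-output. Second, the multi-output switching lemma you flag as the hardest and most novel component is, in the paper, a modest adaptation of Rossman's existing multi-switching lemma (building on H{\aa}stad's); the genuinely new work lies more in the combinatorics linking the restricted circuit to the nonlocal game and in the constrained-game analysis feeding Vazirani.
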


Thus our result proves a separation against a larger complexity class and implies the worst-case lower bound of Bravyi, Gosset, and K\"{o}nig~\cite{BGK18}. It also implies the average-case lower bounds of  Coudron, Stark, and Vidick~\cite{CSV18} and Le Gall~\cite{LG18}.

\subsection{High-level overview of the main result}

We now describe the problems we study en route to proving \Cref{thm:main} and give a high-level overview of the proof.

\Cref{thm:main} is proved via a sequence of increasingly stronger results. We first introduce a problem we call the Parity Halving Problem (PHP). PHP is not in $\QNC^0$, but it can be solved exactly by a $\QNC^0/\qpoly$ circuit, which is a $\QNC^0$ circuit with quantum advice. Similar to randomized advice, a circuit class with quantum advice is allowed to start with any polynomial-size quantum state that is independent of the input, but can depend on the input length. 
For the Parity Halving Problem (and other problems introduced later), the quantum advice state is a very simple state called the cat state, which we denote by $|\Cat_n\> := \frac{1}{\sqrt{2}} (|0^n\>+|1^n\>)$.
We denote the subclass of $\QNC^0/\qpoly$ where the advice state is the cat state $\QNCcat$.

Here's a bird's eye view of our proof: Our first result establishes that $\PHP$ is in $\QNCcat$, but any nearly exponential-size $\AC^0$ circuit only solves the problem with probability exponentially close to $1/2$. Next we define a new problem called the Relaxed Parity Halving Problem on a grid ($\GRPHP$), which is indeed in $\QNC^0$, but any nearly exponential-size $\AC^0$ circuit only solves the problem with probability exponentially close to $1/2$. We then define  parallel versions of these two problems, which we call $\PPHP$ and  $\PGRPHP$. 
We show that $\PPHP \in \QNC^0/\qpoly$ and $\PGRPHP \in \QNC^0$, but any nearly exponential-size $\AC^0$ circuit only solves these problems with exponentially small probability. 
Finally we show that $\PGRPHP$ can be reduced to \DDHLF, and hence our lower bound applies to \DDHLF as well. We now describe these problems and our proof techniques in more detail.

\para{Parity Halving Problem.} In the Parity Halving Problem on $n$ bits, which we denote by $\PHP_n$, we are given an input string $x \in \B^n$ promised to have even parity: i.e., the Hamming weight of $x$, denoted $|x|$, satisfies $|x| \equiv 0 \pmod 2$. The goal is to output a string $y \in \B^n$ that satisfies
\begin{equation}
    |y| \equiv |x|/2 \pmod 2.
\end{equation} 
In other words, the output string's Hamming weight (mod 2) is half of that of the input string.
Note that $|x|/2$ is well defined above because $|x|$ is promised to be even.
An alternate way of expressing this condition is that $|y|\equiv 0 \pmod 2$ if $|x|\equiv 0 \pmod 4$ and $|y|\equiv 1 \pmod 2$ if $|x|\equiv 2 \pmod 4$. 

We show in \Cref{sec:PHP} that $\PHP$ can be solved with certainty on every input by a simple depth-2 $\QNCcat$ circuit. A quantum circuit solving $\PHP_3$ is shown in \Cref{fig:php}. The circuit has one layer of controlled phase gates followed by Hadamard gates on the output qubits, followed by measurement.

\begin{figure}[b]
\vspace{-0.75em}
\begin{center}
\begin{tikzpicture}[scale=1.000000,x=1pt,y=1pt]
\filldraw[color=white] (0.000000, -7.500000) rectangle (96.000000, 82.500000);
\draw[color=black] (0.000000,75.000000) -- (96.000000,75.000000);
\draw[color=black] (0.000000,75.000000) node[left] {$x_1$};
\draw[color=black] (0.000000,60.000000) -- (96.000000,60.000000);
\draw[color=black] (0.000000,60.000000) node[left] {$x_2$};
\draw[color=black] (0.000000,45.000000) -- (96.000000,45.000000);
\draw[color=black] (0.000000,45.000000) node[left] {$x_3$};
\draw[color=black] (0.000000,30.000000) -- (84.000000,30.000000);
\draw[color=black] (84.000000,29.500000) -- (96.000000,29.500000);
\draw[color=black] (84.000000,30.500000) -- (96.000000,30.500000);
\draw[color=black] (0.000000,15.000000) -- (84.000000,15.000000);
\draw[color=black] (84.000000,14.500000) -- (96.000000,14.500000);
\draw[color=black] (84.000000,15.500000) -- (96.000000,15.500000);
\draw[color=black] (0.000000,0.000000) -- (84.000000,0.000000);
\draw[color=black] (84.000000,-0.500000) -- (96.000000,-0.500000);
\draw[color=black] (84.000000,0.500000) -- (96.000000,0.500000);
\filldraw[color=white,fill=white] (0.000000,-3.750000) rectangle (-4.000000,33.750000);
\draw[decorate,decoration={brace,amplitude = 4.000000pt},very thick] (0.000000,-3.750000) -- (0.000000,33.750000);
\draw[color=black] (-4.000000,15.000000) node[left] {$\ket{\Cat_3}$};
\draw (12.000000,75.000000) -- (12.000000,30.000000);
\begin{scope}
\draw[fill=white] (12.000000, 30.000000) +(-45.000000:8.485281pt and 8.485281pt) -- +(45.000000:8.485281pt and 8.485281pt) -- +(135.000000:8.485281pt and 8.485281pt) -- +(225.000000:8.485281pt and 8.485281pt) -- cycle;
\clip (12.000000, 30.000000) +(-45.000000:8.485281pt and 8.485281pt) -- +(45.000000:8.485281pt and 8.485281pt) -- +(135.000000:8.485281pt and 8.485281pt) -- +(225.000000:8.485281pt and 8.485281pt) -- cycle;
\draw (12.000000, 30.000000) node {$S$};
\end{scope}
\filldraw (12.000000, 75.000000) circle(1.500000pt);
\draw (24.000000,60.000000) -- (24.000000,15.000000);
\begin{scope}
\draw[fill=white] (24.000000, 15.000000) +(-45.000000:8.485281pt and 8.485281pt) -- +(45.000000:8.485281pt and 8.485281pt) -- +(135.000000:8.485281pt and 8.485281pt) -- +(225.000000:8.485281pt and 8.485281pt) -- cycle;
\clip (24.000000, 15.000000) +(-45.000000:8.485281pt and 8.485281pt) -- +(45.000000:8.485281pt and 8.485281pt) -- +(135.000000:8.485281pt and 8.485281pt) -- +(225.000000:8.485281pt and 8.485281pt) -- cycle;
\draw (24.000000, 15.000000) node {$S$};
\end{scope}
\filldraw (24.000000, 60.000000) circle(1.500000pt);
\draw (36.000000,45.000000) -- (36.000000,0.000000);
\begin{scope}
\draw[fill=white] (36.000000, 0.000000) +(-45.000000:8.485281pt and 8.485281pt) -- +(45.000000:8.485281pt and 8.485281pt) -- +(135.000000:8.485281pt and 8.485281pt) -- +(225.000000:8.485281pt and 8.485281pt) -- cycle;
\clip (36.000000, 0.000000) +(-45.000000:8.485281pt and 8.485281pt) -- +(45.000000:8.485281pt and 8.485281pt) -- +(135.000000:8.485281pt and 8.485281pt) -- +(225.000000:8.485281pt and 8.485281pt) -- cycle;
\draw (36.000000, 0.000000) node {$S$};
\end{scope}
\filldraw (36.000000, 45.000000) circle(1.500000pt);
\begin{scope}
\draw[fill=white] (60.000000, 30.000000) +(-45.000000:8.485281pt and 8.485281pt) -- +(45.000000:8.485281pt and 8.485281pt) -- +(135.000000:8.485281pt and 8.485281pt) -- +(225.000000:8.485281pt and 8.485281pt) -- cycle;
\clip (60.000000, 30.000000) +(-45.000000:8.485281pt and 8.485281pt) -- +(45.000000:8.485281pt and 8.485281pt) -- +(135.000000:8.485281pt and 8.485281pt) -- +(225.000000:8.485281pt and 8.485281pt) -- cycle;
\draw (60.000000, 30.000000) node {$H$};
\end{scope}
\begin{scope}
\draw[fill=white] (60.000000, 15.000000) +(-45.000000:8.485281pt and 8.485281pt) -- +(45.000000:8.485281pt and 8.485281pt) -- +(135.000000:8.485281pt and 8.485281pt) -- +(225.000000:8.485281pt and 8.485281pt) -- cycle;
\clip (60.000000, 15.000000) +(-45.000000:8.485281pt and 8.485281pt) -- +(45.000000:8.485281pt and 8.485281pt) -- +(135.000000:8.485281pt and 8.485281pt) -- +(225.000000:8.485281pt and 8.485281pt) -- cycle;
\draw (60.000000, 15.000000) node {$H$};
\end{scope}
\begin{scope}
\draw[fill=white] (60.000000, 0.000000) +(-45.000000:8.485281pt and 8.485281pt) -- +(45.000000:8.485281pt and 8.485281pt) -- +(135.000000:8.485281pt and 8.485281pt) -- +(225.000000:8.485281pt and 8.485281pt) -- cycle;
\clip (60.000000, 0.000000) +(-45.000000:8.485281pt and 8.485281pt) -- +(45.000000:8.485281pt and 8.485281pt) -- +(135.000000:8.485281pt and 8.485281pt) -- +(225.000000:8.485281pt and 8.485281pt) -- cycle;
\draw (60.000000, 0.000000) node {$H$};
\end{scope}
\draw[fill=white] (78.000000, 24.000000) rectangle (90.000000, 36.000000);
\draw[very thin] (84.000000, 30.600000) arc (90:150:6.000000pt);
\draw[very thin] (84.000000, 30.600000) arc (90:30:6.000000pt);
\draw[->,>=stealth] (84.000000, 24.600000) -- +(80:10.392305pt);
\draw[fill=white] (78.000000, 9.000000) rectangle (90.000000, 21.000000);
\draw[very thin] (84.000000, 15.600000) arc (90:150:6.000000pt);
\draw[very thin] (84.000000, 15.600000) arc (90:30:6.000000pt);
\draw[->,>=stealth] (84.000000, 9.600000) -- +(80:10.392305pt);
\draw[fill=white] (78.000000, -6.000000) rectangle (90.000000, 6.000000);
\draw[very thin] (84.000000, 0.600000) arc (90:150:6.000000pt);
\draw[very thin] (84.000000, 0.600000) arc (90:30:6.000000pt);
\draw[->,>=stealth] (84.000000, -5.400000) -- +(80:10.392305pt);
\draw[color=black] (96.000000,75.000000) node[right] {$x_1$};
\draw[color=black] (96.000000,60.000000) node[right] {$x_2$};
\draw[color=black] (96.000000,45.000000) node[right] {$x_3$};
\draw[color=black] (96.000000,30.000000) node[right] {$y_1$};
\draw[color=black] (96.000000,15.000000) node[right] {$y_2$};
\draw[color=black] (96.000000,0.000000) node[right] {$y_3$};
\end{tikzpicture}
\end{center}
\vspace{-1em}
\caption{Quantum circuit for the Parity Halving Problem on $3$ bits, $\PHP_3$.}
\label{fig:php}
\end{figure}
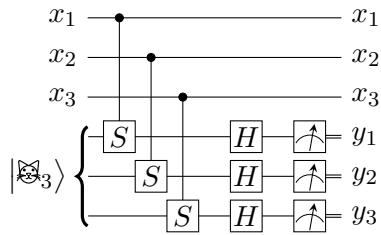

Although the problem is easy for constant-depth quantum circuits, we show that even an exponential-size $\AC^0/\rpoly$ circuit cannot solve the problem on the uniform distribution (over valid inputs) with probability considerably better than $1/2$, which is trivially achieved by the circuit that outputs the all-zeros string on all inputs.

\begin{thm}[PHP]\label{thm:PHP}
The Parity Halving Problem $(\PHP_n)$ can be solved exactly by a $\QNCcat$ circuit. 
But on the uniform distribution over all valid inputs (even parity strings), any $\AC^0/\rpoly$ circuit of depth $d$ and size at most $\mathrm{exp}\bigl(n^{\frac{1}{10d}}\bigr)$ only solves the problem with probability $\frac{1}{2}+\exp(-n^{\alpha})$ for some $\alpha>0$. 
\end{thm}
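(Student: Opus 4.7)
The circuit of \fig{php} generalizes immediately to $n$ qubits: starting from $\ket{x}\otimes\tfrac{1}{\sqrt{2}}(\ket{0^n}+\ket{1^n})$, the layer of controlled-$S$ gates (each $x_i$ controlling the $i$-th cat qubit) multiplies the $\ket{1^n}$ branch by $i^{|x|}$, which equals $(-1)^{|x|/2}$ since $|x|$ is promised even. A Hadamard on each cat qubit then produces the uniform superposition over those $y$ with $|y|\equiv|x|/2\pmod 2$, and measurement returns such a $y$ with certainty. The circuit has depth $2$, uses only bounded-fan-in gates, and the only nontrivial ancilla is the $n$-qubit cat state, so it lies in $\QNCcat$.

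\textbf{Setup and multi-output switching.} For the lower bound, by fixing the $\rpoly$ advice to its most-helpful value we may assume $C$ is a deterministic $\AC^0$ circuit of depth $d$ and size $s=\exp(n^{1/10d})$. The success probability on uniform even-parity inputs rewrites as
\[
\Pr[\text{success}]=\tfrac{1}{2}+\tfrac{1}{2}\,\Gamma,\qquad \Gamma:=\mathbb{E}_{x}\!\left[(-1)^{|C(x)|}\cdot i^{|x|}\right],
\]
using $(-1)^{|x|/2}=i^{|x|}$ for even-weight $x$. It therefore suffices to show $|\Gamma|\le\exp(-n^\alpha)$. I would then invoke the multi-output switching lemma developed later in the paper: under a random $p$-restriction with $p=\Theta(1/\log^{d-1}s)$, with probability at least $1-\exp(-n^{\Omega(1)})$ all $n$ outputs of $C$ become simultaneously computable by a single common decision tree of depth $t=O(\log s)$. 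The XOR $(-1)^{|C|_\rho(y)|}$ of the restricted outputs is then a $t$-junta $J_\rho$ of the $m\approx pn$ surviving variables $y$.

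\textbf{Correlating the junta with $i^{|x|}$, and the main difficulty.} For a fixed good restriction $\rho$, split $y=(\alpha,y')$ with $\alpha\in\{0,1\}^t$ running over the junta coordinates. The conditional expectation $\mathbb{E}_{y:\,|\rho|+|y|\text{ even}}\!\left[J_\rho(y)\,i^{|y|}\right]$ decomposes into $2^t$ inner character sums
\[
\sum_{b \equiv c' \bmod 2}\binom{m-t}{b}i^{b}=\tfrac{1}{2}\bigl[(1+i)^{m-t}+(-1)^{c'}(1-i)^{m-t}\bigr],
\]
each of magnitude at most $2^{(m-t)/2}$ (since $|1\pm i|=\sqrt 2$). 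Summing $2^t$ such terms, each weighted by $J_\rho(\alpha)\in\{\pm 1\}$, and dividing by the $2^{m-1}$ admissible $y$'s yields $|\Gamma_\rho|\le 2\cdot 2^{-(m-t)/2}$. For our parameters $m=n^{(9d+1)/10d}$ and $t=n^{1/10d}$, so $m-t=\Theta(n^{(9d+1)/10d})$, whence $|\Gamma|\le\exp(-n^\alpha)$ for a constant $\alpha>0$ (depending only on $d$), after absorbing the small probability that the switching step fails. The main obstacle is the multi-output switching lemma itself: Håstad's classical switching lemma handles only a single output, and a naive union bound over the $n$ outputs destroys the parameters; producing a \emph{common} shallow decision tree that computes \emph{all} outputs simultaneously with the required depth/probability tradeoff is the technical crux and the reason a new switching lemma has to be developed.
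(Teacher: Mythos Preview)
Your quantum upper bound is correct and matches the paper. The gap is in the classical lower bound, specifically in what the multi-switching lemma actually delivers.

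You assert that after a $p$-restriction ``all $n$ outputs of $C$ become simultaneously computable by a single common decision tree of depth $t=O(\log s)$,'' so that the XOR $(-1)^{|C|_\rho|}$ is a $t$-junta. This is false, and no switching lemma can give it. Consider the identity circuit $C(x)=x$: after a $p$-restriction, $\approx pn$ coordinates survive and any decision tree computing all of them must have depth $\approx pn$, not $O(\log s)$. What the paper's multi-switching lemma (\Cref{cor:Rossman}) actually yields is $f|_\rho\in \DT(2t)\circ\DT(q)^m$: a common tree of depth $2t$ whose \emph{leaves} are $m$-tuples of depth-$q$ decision trees, one per output bit. After following the common tree, each output still has locality $2^q$, so the XOR of the $m$ outputs depends on up to $m\cdot 2^q$ surviving variables --- essentially all of them --- and your character-sum bound $2\cdot 2^{-(m-t)/2}$ does not apply.

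The paper closes this gap with two extra steps you are missing. After switching, each output has locality $\ell=2^q$ (with $q=\sqrt{\log m}$). One then finds (\Cref{prop:independentinputs}, via Tur\'an's theorem) a set $S$ of $\Omega((pn)^2/(\ell^2 m))$ surviving inputs whose output light cones are pairwise disjoint, fixes all inputs outside $S$, and only then is each output a function of at most one input in $S$. At that point the players' XOR really is a degree-$1$ polynomial in the bits of $S$, and the character-sum calculation you wrote (which is exactly \Cref{thm:parityhalvinggame}) gives correlation $2^{-\Omega(|S|)}$. The dependence of the final bound on $m$ (and the paper's remark that $\PHP$ with $m=\binom{n}{2}$ outputs is trivially in $\NC^0$) is a symptom of why the locality-reduction step cannot be skipped.
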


Note that the parameters of the \ACz lower bound in this theorem are essentially optimal, since the parity function on $n$ bits can be computed by a depth-$d$ \ACz circuit of size $\mathrm{exp}\bigl(n^{\frac{1}{d-1}}\bigr)$ \cite[Theorem 2.2]{Has86}. Once we can compute the parity of the input bits, it is easy to solve $\PHP$.

Since the quantum circuit for $\PHP$ is simple, it is clear that the difficult part of \Cref{thm:PHP} is  the $\AC^0$ lower bound. One reason for this difficulty is that if we allowed the output string $y$ in $\PHP$ to be of quadratic size, then there is a simple depth-$1$ $\NC^0$ circuit that solves this problem! The circuit simply computes the AND of every pair of input bits and outputs this string of size $\binom{n}{2}$. 
A simple calculation shows that the Hamming weight of this string will be $\binom{|x|}{2}$, which satisfies the conditions of the problem.
Hence to prove the $\AC^0$ lower bound, the technique used has to be sensitive to the output size of the problem. 
However, traditional \ACz lower bound techniques were developed for decision problems, and do not explicitly take the output size of the problem into account. 
Hence we modify some known techniques and establish the this lower bound in three steps.

First, we use H{\aa}stad's switching lemmas~\cite{Has86,Hastad14}, or more precisely a recent refinement of it due to Rossman~\cite{Rossman2017}. However, directly using the result of Rossman off the shelf gives us a weaker result than \Cref{thm:PHP}; we are only able to establish the theorem with a quasi-polynomially small correlation instead of the exponentially small correlation in \Cref{thm:PHP}.
To obtain the result we want, we refine Rossman's result to work better for multi-output functions (\Cref{cor:Rossman}). 
This result is quite technical, but the conceptual ideas already appear in the works of H{\aa}stad \cite{Hastad14} and Rossman~\cite{Rossman2017}. Applying this switching lemma reduces the problem of proving an average-case $\AC^0$ lower bound to that of showing an average-case $\NC^0$ lower bound for a modified version of the Parity Halving Problem.
This modified version of the problem is similar to PHP, except it has $n$ inputs and slightly more (say, $n^{1.01}$) outputs.

Our second step is to use a combinatorial argument to reduce this question to showing an average-case lower bound against $\NC^0$ circuits with locality $1$ (i.e., where each output only depends on a single input) for a further modified version of PHP. 

The third and final step is to show that $\NC^0$ circuits with locality $1$ cannot solve this modifed PHP  on a random input. We prove this by generalizing known lower bounds in the literature on quantum non-local games. Specifically we generalize lower bounds present in the work of Mermin \cite{Mermin90}, and Brassard, Broadbent, and Tapp \cite{BBT05}.

This proof is presented in \Cref{sec:PHP}. We first prove the lower bound against $\NC^0$ circuits of locality 1 in \Cref{sec:NClocalityone}, then show the lower bound against general $\NC^0$ circuits in \Cref{sec:NC}, and finally introduce the switching lemma and conclude the proof of \Cref{thm:PHP} in \Cref{sec:AC}. The switching lemma itself is proved in \Cref{app:Rossman}.

Now \Cref{thm:PHP} is weaker than what we want (\Cref{thm:main}) in two ways. Aside from the fact that the lower bound is for a problem different from the \DDHLF problem, the problem in \Cref{thm:PHP} is in $\QNCcat$ and not $\QNC^0$, and the correlation lower bound is close to $1/2$ instead of being exponentially small. We now tackle the first problem and get rid of the cat state.

\para{Relaxed Parity Halving Problem.} 
Since the cat state cannot be constructed in $\QNC^0$ (proved in \Cref{thm:nocat}), 
we have to modify the Parity Halving Problem  to get by without a cat state.

Although we cannot create the cat state in $\QNC^0$, we can construct a state we call a ``poor man's cat state,'' which is the state 
\begin{equation}
    \frac{1}{\sqrt{2}}\bigl(|z\>+|\bar{z}\>\bigr),
\end{equation}
where $z \in \B^n$ is a bit string and $\bar{z}$ denotes its complement. 
When $z=0^n$, this is indeed the cat state, but in general this is some entangled state that can be converted to the cat state by applying the $X$  gate to some subset of the qubits. 

Interestingly, we can create a poor man's cat state in $\QNC^0$ for a uniformly random $z$. 
Here is one simple construction. First arrange the $n$ qubits on a line and set them all to be in the $|+\>$ state. Then in a separate set of $n-1$ qubits, compute the pairwise parities of adjacent qubits. In other words, we store the parity of qubit 1 and 2, 2 and 3, 3 and 4, and so on until qubit $n-1$ and $n$. 
And then we measure these $n-1$ qubits, and denote the measurement outcomes $d_1, \ldots, d_{n-1}$, which we will call the ``difference string.'' It is easy to verify that if all the $d_i=0$, then the resulting state is indeed the cat state. 
In general, for any $d\in\B^{n-1}$, the resulting state is a poor man's cat state, and $z$ can be determined from the string $d$ up to the symmetry between $z$ and $\bar{z}$.
Since $z$ and $\bar{z}$ are symmetric in the definition of the poor man's cat state, let us choose the convention that $z_1 = 0$. 
Now we can determine the remaining $z_i$ from $d$ using the fact that $d_1 = z_1 \oplus z_2 = z_2$, $d_2 = z_2 \oplus z_3$, and so on until $d_{n-1} = z_{n-1} \oplus z_n$. 
Note that because of this construction, some $z_i$ depend on many bits of $d_i$. 
For example, $z_{n}$ is the parity of all the bits in $d$. 

This construction of the poor man's cat state easily generalizes to graphs other than the 1D line. We could place the $n$ qubits on a balanced binary tree and measure the parity of all adjacent qubits, and hence get one $d_i$ for every edge in the tree. If we call the root node $z_1 = 0$, then the value of any $z_i$ will be the parity of all $d_i$ on edges between vertex $i$ and the root. In this case each $z_i$ depends on at most $\log n$ bits of $d_i$. Similarly, we can choose a 2D grid instead of a balanced binary tree, and set the top left qubit to be $z_1=0$. Then each $z_i$ will depend on at most $2\sqrt{n}$ bits of $d$. This grid construction is described more formally in \Cref{sec:PoorMansCatState}. 

Now there's an obvious strategy to try: Simply use a poor man's cat in our quantum circuit for $\PHP$ instead of using an actual cat state, and redefine the problem to match the output of this quantum circuit! So we simply run the circuit in \Cref{fig:php} on a poor man's cat state $\frac{1}{\sqrt{2}}\bigl(|z\>+|\bar{z}\>\bigr)$ and see what the quantum circuit outputs. 
Unfortunately the output depends on $z$, but the poor man's cat state has been destroyed by the circuit and we do not have a copy of $z$ around.
But we do still have the string $d$ from which it is possible to recover $z$, although this may not be computationally easy since a single bit of $z$ may depend on a large number of bits of $d$. 
More subtly, a single bit of $d$ may be involved in specifying many bits of $z$, which is also a problem for circuits without fan-out, such as $\QNC^0$ circuits.  
Instead of trying to recover $z$, we can just modify the problem to include $d$ as an output.
The problem will now have two outputs, one original output $y$, and a second output string $d$, which is the difference string of the $z$ in the poor man's cat state.
This is the Relaxed Parity Halving Problem, which is more formally defined in \Cref{sec:RelaxedParityHalving}.

More precisely, the Relaxed Parity Halving Problem, or $\RPHP$, depends on the choice of the underlying graph, and is well defined for any graph. 
We choose the 2D grid to get a problem that reduces to \DDHLF.\footnote{Picking the balanced binary tree would give better parameters, but qualitatively similar results. We choose the 2D grid so that our problem can be solved by a constant-depth quantum circuit acting on qubits laid out in 2D.}
We call this problem $\GRPHP$.

We show in \Cref{sec:RelaxedParityHalving} that $\GRPHP$ can be solved by the 2D $\QNC^0$ circuit we described, but even a nearly exponentially large $\AC^0$ circuit cannot solve the problem with probability significantly larger than $1/2$ on the uniform distribution over valid inputs.

\begin{thm}[$\GRPHP$]
\label{thm:GRPHP}
$\GRPHP_n$ can be solved exactly by a $\QNC^0$ circuit on a 2D grid.
But on the uniform distribution over all valid inputs (even parity strings), any $\AC^0$ circuit (or  $\AC^0/\rpoly$ circuit) of depth $d$ and size at most $\exp(n^{1/10d})$ can solve the problem with probability at most $\frac{1}{2}+\exp(-n^{\alpha})$ for some $\alpha>0$.
\end{thm}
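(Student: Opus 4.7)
The upper bound is constructive. The $\QNC^0$ circuit first builds a poor man's cat state $\tfrac{1}{\sqrt 2}(|z\rangle+|\bar z\rangle)$ on an $\sqrt n\times\sqrt n$ grid of output qubits: initialize each output qubit in $|+\rangle$ and, in parallel for every grid edge $e=(u,v)$, bring in a fresh ancilla in $|0\rangle$, apply $\mathrm{CNOT}_{u\to e}$ followed by $\mathrm{CNOT}_{v\to e}$, and measure the ancilla to obtain $d_e$. Because the grid has bounded degree, each qubit sees only a constant number of gates, so this phase is $\QNC^0$. The second phase is exactly the $\PHP$-circuit of \Cref{fig:php}: a layer of controlled-$S$ from $x_i$ to the $i$-th output qubit, a layer of Hadamards, and a measurement yielding $y$. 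Expanding the amplitudes shows that the outcome is nonzero only when $|y|\equiv|x|/2+\langle x,z\rangle\pmod 2$, which is exactly the $\GRPHP$ relation between $y$, $x$, and $z(d)$.

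\textbf{Classical lower bound.} A direct $\AC^0$ reduction from $\GRPHP$ to $\PHP$ is unavailable: converting a $\GRPHP$ output $(y,d)$ into a $\PHP$ output $y'$ requires XORing one bit of $y$ with the correction $\langle x,z(d)\rangle=\bigoplus_{j} d_j\,S_j(x)$, a bilinear form over $\mathbb{F}_2$ that is not computable by polynomial-size $\AC^0$. I would therefore rerun the three-step proof of \Cref{thm:PHP} with the $\GRPHP$ relation in place of $\PHP$. Step~1 applies the multi-output switching lemma (\Cref{cor:Rossman}) under a random restriction of the inputs, collapsing the $\AC^0$ circuit to an $\NC^0$ circuit whose outputs include both the $y$- and $d$-bits and depend on a large surviving subset of input bits. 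Step~2 invokes the locality-reduction of \Cref{sec:NC}, which is insensitive to the precise relation being computed and only needs that the number of outputs is polynomial; it hands us a locality-$1$ circuit for an analogous modified $\GRPHP$ instance. Step~3 generalizes the Mermin and Brassard--Broadbent--Tapp non-local-game bound of \Cref{sec:NClocalityone} to the $\GRPHP$ relation.

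\textbf{Main obstacle.} Step~3 is the crux. In a locality-$1$ strategy the $i$-th party sees only $x_{\pi(i)}$ plus shared advice, outputs constantly many bits of $y$ and $d$, and the parties must collectively satisfy $|y|\oplus\langle x,z(d)\rangle\equiv|x|/2\pmod 2$. The pure $\PHP$ bound handled the $\langle x,z(d)\rangle\equiv 0$ case; the new twist is that the players may use their private $d$-bits to shift the correction, since $S_j(x)$ involves many input bits and toggling a single $d_e$ can flip the target parity. The plan is: fix all $d$-outputs to particular values, note that the conditional task becomes a $\PHP$-style XOR game whose target parity is shifted by a constant determined by $d$ and the parties' input assignment, and apply the locality-$1$ $\PHP$ bound to each shift. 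Averaging over the distribution of $d$ produced by the circuit, Vazirani's XOR lemma (or an equivalent Fourier calculation) shows that the $\GRPHP$ correlation is bounded by the corresponding $\PHP$ correlation up to an exponentially small slack, completing the proof with the same $\frac12+\exp(-n^\alpha)$ conclusion as \Cref{thm:PHP}.
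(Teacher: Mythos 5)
Your upper-bound sketch is fine and matches the paper's. The lower bound, however, starts from a false premise and then chooses a route that is both unnecessary and incompletely justified.

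You claim ``a direct $\AC^0$ reduction from $\GRPHP$ to $\PHP$ is unavailable'' because converting $(y,d)$ into a length-$n$ string $y'$ would require computing the correction $\langle z(d),x\rangle$. But $\PHP$ is defined with an arbitrary output length, $\PHP_{n,m}$, and the $\AC^0$ lower bound of \Cref{thm:AC0} is proved for all $m$ up to $n^2$. So you do not need to fold the correction into a single parity bit; you can simply \emph{emit more output bits}. Writing $z_i=\bigoplus_{j\in D_i} d_j$ with $|D_i|=O(\sqrt n)$, the correction expands as $\langle z,x\rangle=\bigoplus_i\bigoplus_{j\in D_i} d_j x_i$, a sum of $O(n^{3/2})$ monomials. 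Emitting each monomial $d_j x_i$ as its own output bit (one AND gate each), concatenated with $y$, gives a string whose Hamming weight is $|y|+\langle z,x\rangle\equiv |x|/2\pmod 2$, i.e.\ a valid answer to $\PHP_{n,\,O(n^{3/2})}$. This is a depth-$1$ $\NC^0$ reduction from $\PHP_{n,O(n^{3/2})}$ to $\GRPHP_n$, after which \Cref{thm:AC0} with $m=O(n^{3/2})$ immediately yields the $\tfrac12+\exp(-n^\alpha)$ bound.

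By contrast, your plan of rerunning the three-step $\PHP$ argument against the $\GRPHP$ relation has a real gap in Step 3: you propose to ``fix all $d$-outputs to particular values'' and treat the residual as a $\PHP$-style game with a constant parity shift, but the $d$-outputs are functions of $x$, so the correction $\bigoplus_j d_j(x)\,S_j(x)$ is a \emph{quadratic} form in $x$, not a constant shift, and the locality-$1$ bound of \Cref{thm:parityhalvinggame} does not apply as is. Even if an XOR-lemma style averaging could be made to work, you would be redoing a substantial amount of work that the output-padding reduction avoids for free.
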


Note that just like \Cref{thm:PHP}, the lower bound here is essentially optimal, since the parity function itself can be computed by a depth-$d$ \ACz circuit of size $\mathrm{exp}\bigl(n^{\frac{1}{d-1}}\bigr)$ \cite[Theorem 2.2]{Has86}. 
Our separation essentially works for any graph with sublinear diameter, such as the grid or the balanced binary tree, but not the 1D line. In fact, when the underlying graph is the 1D line, $\RPHP$ becomes easy to solve, even for $\NC^0$ circuits.%
\footnote{One can output $y = 0^n$ and $d_i = x_i$ for all $i\in\{1, \ldots, n-1\}$ to solve the Relaxed Parity Halving Problem on the 1D line.\label{footnote:line}}

We prove \Cref{thm:GRPHP} by showing a reduction from the Parity Halving Problem with input size $n$ and output size $O(n^{3/2})$ to $\GRPHP$. This version of PHP is indeed hard for $\AC^0$ circuits and this result follows from the work done in \Cref{sec:PHP}. This reduction and theorem are proved formally in \Cref{sec:RelaxedParityHalving}.

Now \Cref{thm:GRPHP} is still weaker than what we want (\Cref{thm:main}). The correlation lower bound is still close to $1/2$ and not exponentially small. 
We now fix this issue using a simple idea.

\para{Parallel Grid-RPHP.} Let $\PGRPHP$ be the problem where we are given many instances of $\GRPHP$ in parallel and are required to solve all of them correctly. 
For this problem the quantum circuit is obvious: Simply use the quantum circuit for $\GRPHP$ for each instance of the problem. Clearly if the quantum circuit solves each instance correctly, it solves all of them correctly. But since a classical circuit only solves an instance with some probability close to $1/2$, we expect that solving many copies of the problem gets much harder.

\begin{thm}[$\PGRPHP$]
\label{thm:PGRPHP}
$\PGRPHP_n$ can be solved exactly by a $\QNC^0$ circuit on a 2D grid.
But on the uniform distribution over all valid inputs (even parity strings for each instance of $\GRPHP$), any $\AC^0$ circuit (or  $\AC^0/\rpoly$ circuit) of depth $d$ and size at most $\exp(n^{1/10d})$ can solve the problem with probability at most  $\exp(-{n^\alpha})$ for some $\alpha>0$.
\end{thm}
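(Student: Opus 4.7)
The quantum upper bound is immediate: running the constant-depth 2D $\QNC^0$ circuit from Theorem~\ref{thm:GRPHP} on each of the parallel $\GRPHP$ copies simultaneously solves all of them exactly with the same constant depth and polynomial size.

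For the classical lower bound, the plan is to use Vazirani's XOR lemma to amplify the single-copy bound of Theorem~\ref{thm:GRPHP}. Write the input as $x = (x_1, \dots, x_k)$ where $k$ is the number of copies, and let $S_i(x, C(x)) \in \{0, 1\}$ be the indicator that an $\AC^0/\rpoly$ circuit $C$ answers copy $i$ correctly. Vazirani's lemma shows that if
\[
    |\eta_T| \;:=\; \Bigl| \e_x\Bigl[(-1)^{\bigoplus_{i \in T} S_i}\Bigr] \Bigr| \;\le\; \eps
\]
for every non-empty $T \subseteq [k]$, then the distribution of $(S_1,\dots,S_k)$ is within total-variation distance $2^{k/2}\eps$ of uniform on $\{0,1\}^k$, so $\Pr[\forall i\colon S_i = 1] \le 2^{-k} + 2^{k/2}\eps$. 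Choosing $k$ polynomially in $n$ and taking $\eps = \exp(-n^{\alpha'})$ yields the desired $\exp(-n^{\alpha})$ overall bound.

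The core substep is thus establishing $|\eta_T| \le \exp(-n^{\alpha'})$ for every non-empty $T$. The key structural observation is that $\bigoplus_{i \in T} S_i$ has the \emph{same algebraic form} as a single success indicator: expanding each $S_i$ for $\GRPHP$, one sees it is a parity of designated output bits of $y_i$ XOR a parity-type function of $(x_i, d_i)$, so $\bigoplus_{i \in T} S_i$ is again a parity of designated output bits XOR a parity-type function of the inputs (just spanning the copies in $T$). Consequently the entire proof chain underpinning Theorem~\ref{thm:GRPHP} --- Rossman's refined switching lemma (\Cref{cor:Rossman}) reducing an $\AC^0/\rpoly$ circuit to low locality, the combinatorial reduction to locality $1$, and the Mermin-/BBT-style non-local game argument forcing a locality-1 circuit to have negligible bias on the relevant parity check --- can be invoked essentially unchanged with $\bigoplus_{i \in T} S_i$ in place of a single $S_i$, yielding the required bound.

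The main obstacle will be verifying that the locality-1 lower bound at the bottom of this chain really does apply cleanly to compound success indicators across multiple copies, and that the quantitative tradeoff in the switching lemma still leaves enough free input bits per copy after the random restriction to support the Mermin argument. The key saving is that in the locality-1 regime each output bit depends on a single input bit from a single copy, so the compound parity check structurally decomposes across $T$ and the per-copy Mermin bound composes. As a fallback, a direct black-box reduction is also available: if $|\eta_T|$ were large, one could sample valid inputs $(x_j)_{j \neq i^*}$ as $\rpoly$ advice together with a uniformly random guess bit for the XOR of the other successes, then post-process the copy-$i^*$ output, yielding a single-copy $\AC^0/\rpoly$ solver that contradicts Theorem~\ref{thm:GRPHP}.
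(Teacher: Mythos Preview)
Your overall plan---apply Vazirani's XOR lemma to the per-copy success bits $S_i$ and bound the bias of $\bigoplus_{i\in T}S_i$ for every nonempty $T$---is exactly the paper's strategy, and your observation that this XOR has the same algebraic shape as a single success indicator is the right one. But there is a real gap at the locality-$1$ endgame. After the switching lemma and the reduction to locality $1$, the combined game on the copies in $T$ is a Parity Halving Game with an \emph{extra} promise: each of the $|T|$ input blocks is separately constrained to even parity. The basic Mermin/BBT bound (\Cref{thm:parityhalvinggame}) only allows fixed single bits, not block-parity constraints, so ``the per-copy Mermin bound composes'' is not a proof. The paper supplies precisely the missing lemma (\Cref{thm:strongPHG}), showing bias at most $2^{-(n-d_1)/2+d_2}$ when $d_2$ block-parity constraints are imposed; this is a fresh calculation (expanding the indicator of the new promise as a signed sum of $(\pm i)^{x_j}$ products) and is what makes the XOR-lemma step close. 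The paper also first reduces $\PGRPHP$ to $\PPHP$ by applying the $\NC^0$ reduction of \Cref{thm:reduction} copy-wise (\Cref{thm:GridRPHPlb}); this is what makes the ``same algebraic form'' claim literally true---$\bigoplus_{i\in T}S_i$ becomes a single $\PHP$ condition with $|T|$ block-parity constraints---rather than something you have to argue through the $z,d$ variables of $\GRPHP$.

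Your fallback black-box reduction does not work in $\AC^0$. Even with $(x_j)_{j\neq i^*}$ hard-wired as advice, each $S_j$ for $j\neq i^*$ is a parity of output bits of the big circuit (bits of $y_j$ together with, after expanding $\langle z_j,x_j\rangle$ as in \Cref{thm:reduction}, selected bits of $d_j$), and those outputs may depend on $x_{i^*}$. So $\bigoplus_{j\in T\setminus\{i^*\}}S_j$ is a parity of $\poly(n)$ circuit outputs that varies with $x_{i^*}$; an $\AC^0$ post-processor cannot compute it, and a ``uniformly random guess bit'' cannot stand in for a quantity that is a function of the live input rather than of the advice alone.
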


As before, the difficult part of \Cref{thm:PGRPHP} is proving the classical lower bound.
While it seems intuitive that repeating the problem (in parallel)  several times reduces the success probability, similarly intuitive statements can be false or difficult to prove~\cite{Raz98}. More precisely, what we need is a direct product theorem, which also may not hold in some models of computation~\cite{Sha04}.

We consider the parallel version of the standard $\PHP$, denoted $\PPHP$, and reduce $\PPHP$ to $\PGRPHP$ as above.
We then establish a lower bound for $\PPHP$ by using Vazirani's XOR lemma~\cite{Vaz86}.
Vazirani's XOR lemma is an intuitive statement about how a probability distribution that is ``balanced'' in a certain sense must be close to the uniform distribution.
The implication for our problem is the following: To understand the probability that a circuit solves all the instances of $\PHP$ in $\PPHP$, or equivalently that it fails to solve $0$ instances, it is enough to understand the probability that it fails to solve an even number of instances. 
This task turns out to be similar to the original $\PHP$ with larger input and output size, but with some additional constraints on the input. 
The techniques we have developed allow us to upper bound this probability, and hence (using the XOR lemma) upper bound the probability that a circuit solves all instances correctly.

Now we are almost done, since \Cref{thm:PGRPHP} looks very similar to \Cref{thm:main}, except that the hardness is shown for $\PGRPHP$ and not the \DDHLF problem.

\para{Reduction to the Hidden Linear Function problem.} The final step of our program is carried out in  \Cref{sec:HLF}. First we show via a simple reduction in \Cref{thm:hlfreduction} that the Relaxed Parity Halving Problem (for any graph $G$) can be reduced to the Hidden Linear Function problem (not necessarily the \DDHLF). In particular, our reduction reduces $\GRPHP$ reduces to the \DDHLF problem, as we describe in \Cref{cor:mainreduction}.
So far this shows that one instance of $\GRPHP$ reduces to the \DDHLF problem. We then show, in \Cref{lem:ddhlf_sum}, that we can embed multiple instances of \DDHLF in parallel into one instance of \DDHLF. Hence $\PGRPHP$ reduces to \DDHLF as well, and hence \Cref{thm:PGRPHP} implies \Cref{thm:main}.

\subsection{Additional results}

We also consider the question of showing a separation between $\QNC^0$ and $\AC^0[2]$, where $\AC^0[2]$ is $\AC^0$ with unbounded fan-in XOR gates. We implement the first two steps of the strategy above, where we come up with a problem in $\QNCcat$ that cannot be solved by an $\AC^0[2]$ circuit, even on a $o(1)$ fraction of the inputs. 
But we do not know how to remove the reliance on the cat state in this setting.

\para{Parity Bending Problem.} 
In the Parity Bending Problem, which we denote $\PBP_n$, we are given a string $x\in\B^n$, and our goal is to output a string $y\in\B^n$ such that if $|x| \equiv 0 \pmod 3$ then $|y|\equiv 0 \pmod 2$, and if $|x|\in\{1,2\} \pmod 3$ then $|y|\equiv1 \pmod 2$. 
Our main result is \Cref{thm:PBP}, which says that $\PBP$ can be solved with high probability by a $\QNCcat$ circuit, but needs exponential-size $\AC^0[2]$ circuits to solve with probability significantly greater than half.

\begin{thm}[PBP]\label{thm:PBP}
There exists a $\QNCcat$ circuit that solves the Parity Bending Problem $(\PBP_n)$ on any input with probability $\geq\frac{3}{4}$. 
But there exists an input distribution on which any $\AC^0[2]/\rpoly$ of depth $d$ and size at most $\mathrm{exp}\bigl(n^{\frac{1}{10d}}\bigr)$ only solves the problem with probability $\frac{1}{2}+\frac{1}{n^{\Omega(1)}}$. 
\end{thm}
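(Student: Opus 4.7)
The plan for the $\QNCcat$ circuit is to adapt the $\PHP$ circuit of \Cref{fig:php}, replacing the controlled-$S$ gates with controlled phase gates of angle $2\pi/3$. Concretely, I would start from $\ket{\Cat_n}$ and apply $\mathrm{diag}(1,\omega)$ to qubit $i$ whenever $x_i = 1$, where $\omega = e^{2\pi i/3}$. This maps the cat state to $\tfrac{1}{\sqrt{2}}\bigl(\ket{0^n} + \omega^{|x|}\ket{1^n}\bigr)$. Following with a Hadamard layer and a computational-basis measurement, the outcome $y$ satisfies $\Pr[|y|\text{ even}] = \tfrac{1}{4}|1+\omega^{|x|}|^2$ and $\Pr[|y|\text{ odd}] = \tfrac{1}{4}|1-\omega^{|x|}|^2$. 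A case analysis on $|x| \bmod 3$ then gives success probability $1$ when $|x| \equiv 0 \pmod{3}$ and exactly $3/4$ otherwise, so the circuit always succeeds with probability at least $3/4$.

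\textbf{Lower bound.} The key observation is that every valid $\PBP$ output $y$ satisfies $|y| \equiv g(x) \pmod{2}$, where $g(x) := \I{|x| \not\equiv 0 \pmod{3}}$. Hence for any $\AC^0[2]$ circuit $C$ of depth $d$ and size $s$ that solves $\PBP$, the function $\tilde C(x) := \bigoplus_i C(x)_i$ is a single-bit $\AC^0[2]$ circuit of depth $d+1$ and size $s+1$ that agrees with $g$ with the same success probability. It therefore suffices to exhibit a distribution under which no small $\AC^0[2]$ circuit computes $g$ with probability better than $\tfrac{1}{2} + n^{-\Omega(1)}$. I would take $D$ to be uniform over $\{x : |x| \not\equiv 2 \pmod{3}\}$; the roots-of-unity identity $\sum_k \binom{n}{k}\omega^k = (1+\omega)^n$ shows $g$ is balanced under $D$ up to an $O(2^{-n})$ error. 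Writing $p_k := \Pr_U[\tilde C(x) = 1 \mid |x| \equiv k \pmod{3}]$ for the biases under the uniform distribution $U$, the success probability becomes $\Pr_D[\tilde C(x) = g(x)] = \tfrac{1}{2} + \tfrac{1}{2}(p_1 - p_0) + O(2^{-n})$, reducing everything to controlling $p_1 - p_0$.

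\textbf{Razborov--Smolensky step and advice.} A Fourier expansion on $\bbZ_3$ expresses $p_1 - p_0$ as a linear combination of the character correlations $\e_U\bigl[(-1)^{\tilde C(x)} \omega^{j|x|}\bigr]$ for $j \in \{1,2\}$, so the final task is to upper bound these. The Razborov--Smolensky theorem, applied to $\tilde C$ of depth $d+1$ and size at most $\exp(n^{1/(10d)})$, gives a bound of $\exp(-n^{\Omega(1)})$ on each character correlation, yielding an advantage of $\exp(-n^{\Omega(1)})$, which is stronger than the claimed $n^{-\Omega(1)}$. The $\rpoly$ advice can be absorbed by the standard averaging argument --- fixing the random advice to its best value only increases success --- so the same bound transfers to $\AC^0[2]/\rpoly$. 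The main obstacle I anticipate is checking that the quantitative form of Razborov--Smolensky applied to characters of $\bbZ_3$ translates cleanly into the $p_1 - p_0$ gap, with attention paid to the fact that $\tilde C$ is a single bit while the MOD $3$ function is $3$-valued; this is the $\AC^0[2]$ analog of how switching lemmas drive the $\AC^0$ proofs elsewhere in the paper.
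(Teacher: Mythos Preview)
Your upper bound and the reduction step (append a parity gate to turn a $\PBP$ solver into a single-bit circuit for the Boolean $\bmod\ 3$ function) are exactly what the paper does. The divergence is in how the average-case $\bmod\ 3$ hardness is obtained.

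The paper does not appeal to a correlation version of Razborov--Smolensky. Instead it argues by amplification: assuming some small $\AC^0[2]$ circuit achieves advantage $n^{-o(1)}$ against every distribution, Yao's minimax yields a randomized circuit with that advantage on every input, and then $n^{o(1)}$-fold majority (followed by an $\AC^0$ approximate majority over $O(n)$ copies) boosts this to a worst-case solver, contradicting the worst-case Smolensky bound. Your route---fix a balanced distribution and directly bound the $\mathbb{Z}_3$-character correlations $\E_U[(-1)^{\tilde C(x)}\omega^{j|x|}]$---is a legitimate and arguably more transparent alternative, and your reduction of the success probability under $D$ to $p_1-p_0$ is correct.

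However, your quantitative claim is wrong. The Razborov--Smolensky method gives correlation between depth-$d$ size-$s$ $\AC^0[2]$ circuits and $\mathbb{Z}_3$-characters of order $\mathrm{polylog}(s)^{O(d)}/\sqrt{n}$, which for $s=\exp(n^{1/(10d)})$ is $n^{-\Omega(1)}$, not $\exp(-n^{\Omega(1)})$. Exponentially small correlation of $\AC^0[2]$ with $\mathrm{MOD}_3$ is not known and would be a breakthrough. So your approach recovers exactly the stated $\tfrac{1}{2}+n^{-\Omega(1)}$ bound, not a stronger one; you should correct the claimed rate and drop the remark that your bound is ``stronger than the claimed $n^{-\Omega(1)}$.''
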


As with the Parity Halving Problem, the quantum circuit that solves this problem with bounded error is very simple as shown in \Cref{sec:PBP}. 
For this problem, the classical lower bound is easier to show than before, and follows from the work of Razborov and Smolensky~\cite{Raz87,Smo87}, which shows that $\AC^0[2]$ circuits correlate poorly with the Mod 3 function. 

As before, we can strengthen the separation to make the quantum circuit's success probability arbitrarily close to $1$ and the classical circuit's success probability arbitrarily close to $0$ by defining a new version of the Parity Bending Problem that we call the Parallel Parity Bending Problem. In this problem, we are given many instances of the Parity Bending Problem, and required to solve at least $2/3$ of them. Since $\QNCcat$ can solve this problem with probability $3/4$, it can solve more than $2/3$ of the instances with high probability.

\begin{thm}[Parallel PBP]\label{thm:RPBP}
The Parallel Parity Bending Problem can be solved with probability $1-o(1)$ by a $\QNC/\qpoly$ circuit, but any $\AC^0[2]/\rpoly$ circuit can only solve the problem with probability $\frac{1}{n^{\Omega(1)}}$. 
\end{thm}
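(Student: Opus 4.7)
I would run $m = \Theta(\log n)$ independent copies of the $\QNCcat$ circuit from Theorem \ref{thm:PBP} in parallel, giving each a fresh cat state as advice. The full advice is a tensor product of $m$ cat states ($O(n\log n)$ qubits), still polynomial in size, so the whole construction is a $\QNC^0/\qpoly$ circuit. Each instance is then solved independently with probability at least $3/4$, and a Chernoff bound guarantees at least $2m/3$ correct instances except with probability $\exp(-\Omega(m)) = n^{-\Omega(1)} = o(1)$.

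\textbf{Classical lower bound, setup.} For the hard direction I would follow the same XOR-lemma strategy used for $\PPHP$ and $\PGRPHP$. Take the input distribution to be the product of $m = c\log n$ independent copies of Theorem \ref{thm:PBP}'s hard single-instance distribution (for a small constant $c$ to be fixed later), and let $Z_i \in \{0,1\}$ indicate that the $\AC^0[2]/\rpoly$ circuit solves the $i$-th instance correctly; Parallel PBP success is the event $\sum_i Z_i \geq 2m/3$. Define $g:\B^n \to \B$ by $g(x) = 0$ if $|x| \equiv 0 \pmod 3$ and $g(x)=1$ otherwise, so that correctness on instance $i$ means $(|y^{(i)}| \bmod 2) = g(x^{(i)})$. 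Then for every non-empty $S \subseteq [m]$,
\[
\bigoplus_{i \in S} Z_i \;\equiv\; |S| \;+\; \sum_{i \in S} |y^{(i)}| \;+\; \sum_{i \in S} g(x^{(i)}) \pmod{2}.
\]
The first two summands are computed by a single $\AC^0[2]/\rpoly$ circuit $C_S$ of the same depth and essentially the same size as the original circuit (just XOR the appropriate output bits), so $\bigl|\mathbb{E}[(-1)^{\bigoplus_{i\in S} Z_i}]\bigr|$ equals the correlation of $C_S$ with $\bigoplus_{i \in S} g(x^{(i)})$.

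\textbf{Bias bound and XOR lemma.} To bound this correlation by $\epsilon := n^{-\Omega(1)}$, I would pick any $i_0 \in S$ and average first over the other blocks: conditional on $x^{(j)}$ for $j \neq i_0$, the factor $(-1)^{\sum_{j\neq i_0} g(x^{(j)})}$ is a fixed sign, and $C_S$ restricts to an $\AC^0[2]/\rpoly$ circuit on $x^{(i_0)}$ of the same depth and no larger size. Theorem \ref{thm:PBP} then bounds its correlation with $g(x^{(i_0)})$ by $\epsilon$, and averaging over the conditioning propagates the bound. With every non-empty XOR of the $Z_i$ having bias at most $\epsilon$, Vazirani's XOR lemma yields $\|(Z_1,\dots,Z_m)-U_m\|_{\mathrm{TV}} \leq 2^{m/2}\epsilon$, which is $n^{-\Omega(1)}$ once $c$ is small enough. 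Under the uniform distribution on $\B^m$, a Chernoff bound gives $\Pr[\sum_i Z_i \geq 2m/3] \leq \exp(-\Omega(m)) = n^{-\Omega(1)}$, so the true classical success probability is at most $n^{-\Omega(1)}$, as required.

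\textbf{Main obstacle.} The step needing most care is the conditioning step above: it relies on Theorem \ref{thm:PBP}'s hard distribution being a product distribution so that fixing all-but-one block leaves a hard instance on the remaining block. If that distribution is uniform (as one would expect from a Razborov-Smolensky-style proof of Theorem \ref{thm:PBP}) this is automatic; otherwise a mild strengthening of Theorem \ref{thm:PBP} is needed. I would also double-check that depth/size compose cleanly: an $\AC^0[2]/\rpoly$ circuit of depth $d$ and size $\exp(N^{1/(10d)})$ on the $N=mn$-bit Parallel PBP input restricts to a depth-$d$ circuit of no larger size on a single $n$-bit block, so Theorem \ref{thm:PBP}'s per-instance bound applies with matching asymptotic parameters.
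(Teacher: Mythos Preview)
Your approach is correct but genuinely different from the paper's, and the comparison is instructive.

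\textbf{What the paper does.} The paper first \emph{changes the problem}: Parallel PBP is defined with inputs in $\{0,1,2\}^n$ rather than $\{0,1\}^n$, and with threshold $\tfrac{2}{3}+0.05$. It then works over $\mathbb{Z}_3$ rather than $\mathbb{F}_2$: it introduces a ``3 Output Mod 3'' problem (output a trit equal to $|x|\bmod 3$), proves a random self-reducibility lemma showing that for any $\AC^0[2]/\rpoly$ circuit the error trit $C(x)-|x|\bmod 3$ is $n^{-\Omega(1)}$-close to uniform on $\{0,1,2\}$ under the \emph{uniform} input distribution, and then applies the XOR lemma for the group $\mathbb{Z}_3^k$ to conclude that the vector of error trits is close to uniform. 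Parallel PBP is then reduced to this parallel trit problem by a randomized post-processing step.

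\textbf{What you do.} You stay over $\mathbb{F}_2$, use the non-explicit hard single-instance distribution $D$ from \Cref{thm:PBP}, and apply the binary XOR lemma to the success indicators $Z_i$. Your conditioning argument is sound: fixing all blocks except $i_0$ and the circuit's randomness yields a deterministic $\AC^0[2]$ circuit on $x^{(i_0)}$ of essentially the same size and depth, and (the Mod-3 correlation bound underlying) \Cref{thm:PBP} bounds its correlation with $g$ by $n^{-\Omega(1)}$ on $D$. The rest goes through.

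\textbf{Trade-offs and one correction.} The paper's route buys an \emph{explicit} hard distribution (uniform over $\{0,1,2\}^n$); yours is shorter but inherits the non-explicit distribution from the minimax argument behind \Cref{thm:PBP}. Your ``Main obstacle'' paragraph slightly mis-identifies the issue: the product structure is automatic since you \emph{define} the parallel distribution as $D^{\otimes m}$, so conditioning on the other blocks always leaves exactly $D$ on block $i_0$. What is \emph{not} true is your guess that $D$ is uniform: the Boolean Mod-3 indicator $g$ is unbalanced on uniform binary inputs (a constant circuit achieves success $2/3$), which is precisely why the paper migrates to ternary inputs and $\mathbb{Z}_3$. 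Fortunately your argument never uses uniformity of $D$, only that every small $\AC^0[2]$ circuit has $n^{-\Omega(1)}$ correlation with $g$ under $D$---and that is exactly what \Cref{thm:PBP} (via the average-case Mod-3 bound) provides.
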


At a high level this lower bound proceeds similar to \Cref{thm:PGRPHP}, again employing Vazirani's XOR lemma~\cite{Vaz86}, but there are technical difficulties caused by the fact that the Boolean version of the Mod 3 function is unbalanced and easy to compute on a $2/3$ fraction of the inputs.

\subsection{Discussion and open problems}
Our main results show that there is a search problem (either the \DDHLF problem or the $\PGRPHP$) in $\QNC^0$ that is not in $\AC^0$, and that there is a search problem ($\PPBP$) in $\QNCcat$ that is not in $\AC^0[2]$. 
One open problem is to generalize both separations and show that there is a search problem in $\QNC^0$ that is not in $\AC^0[2]$, or more generally $\AC^0[p]$ for any prime $p$.
This is essentially the frontier of circuit lower bounds, and it will be difficult to go further without radically new techniques. 

One could try to achieve a quantum advantage using even weaker classes than $\QNC^0$ or classes incomparable to $\QNC^0$. 
The recent result of Raz and Tal~\cite{RT18} exhibits a decision problem in $\mathsf{BQLOGTIME}$ (bounded-error quantum logarithmic time) that is not in $\AC^0$. 
Note that as classes of search problems, $\mathsf{BQLOGTIME}$ and $\QNC^0$ are incomparable, since both can solve search problems the other cannot.

\section{Parity Halving Problem} \label{sec:PHP}

Recall the Parity Halving Problem from the introduction. We now define a more general version of the problem with $n$ input bits and $m$ output bits.

\begin{problem}[Parity Halving Problem, $\PHP_{n,m}$] \label{prob:PHP}
Given an input $x \in \{ 0, 1 \}^{n}$ of even parity, output a string $y \in \{ 0, 1 \}^{m}$ such that 
\begin{equation}
|y| \equiv \frac{1}{2}|x| \pmod{2}.    
\end{equation}
Alternately, $y$ must have even parity if $|x| \equiv 0 \pmod{4}$ and odd parity if $|x| \equiv 2 \pmod{4}$. We also define $\PHP_n$ to be $\PHP_{n,n}$.
\end{problem}

The main result of this section is to show this problem is in $\QNC^{0}/\Cat$, but not $\AC^{0}/\rpoly$. We now restate this result (\Cref{thm:PHP}) more formally:

\begin{repthm}{thm:PHP}
The Parity Halving Problem $(\PHP_n)$ can be solved exactly by a depth-$2$, linear-size quantum circuit starting with the $\ket{\Cat_n}$ state. 
But on the uniform distribution over all valid inputs (even parity strings), any $\AC^0/\rpoly$ circuit of depth $d$ and size $s \leq \mathrm{exp}\bigl(n^{\frac{1}{2d}}\bigr)$ only solves the problem with probability $\frac{1}{2} + \exp(-n^{1-o(1)} \big/ O(\log s)^{2(d-1)})$. 
\end{repthm}

We prove this theorem in several parts. 
First we prove the quantum upper bound in \Cref{sec:PHPquantum} (\Cref{thm:PHPquantum}).
The lower bound on $\AC^0$ circuits via a sequence of incrementally stronger lower bounds, culminating in the claimed lower bound. 
We start in \Cref{sec:NClocalityone} by showing a lower bound (\Cref{thm:parityhalvinggame}) for a very simple class of circuits, $\NC^0$ circuits of locality $1$, i.e., $\NC^0$ circuits where every output is an arbitrary function of exactly one input bit.
We then extend the lower bound to arbitrary $\NC^0$ circuits in \Cref{sec:NC} (\Cref{thm:NC0}), and to $\AC^0$ circuits in \Cref{sec:AC} culminating in the \ACz lower bound for $\PHP_{n,m}$ in \Cref{thm:AC0}, from which the lower bound in \Cref{thm:PHP} follows straightforwardly by setting $m=n$. 

\subsection{Quantum upper bound}
\label{sec:PHPquantum}

Before we get into the details of the proof, let us motivate the problem.
Observe that the problem naturally defines an interesting $n$-player cooperative non-local game, which we call the Parity Halving Game.
In this game, there are $n$ players, and each player gets one of the $n$ input bits and outputs a single bit, with no communication with the other players.
The input and output conditions are the same as in $\PHP_n$: The input is promised to be of even Hamming weight, and the players win the game if their output's parity satisfies the condition in \Cref{prob:PHP}.

Because the players are not allowed to communicate, the strategies permitted in the non-local game are far more restricted than an $\AC^0$ circuit or even an $\NC^0$ circuit for $\PHP_n$ since each output bit is only allowed to depend on one input bit. 
We will call this model $\NC^0$ with locality $1$.

Now that we have defined a game, we can study the probability of success for classical players versus the probability of success for quantum players who share entanglement before the game begins. 
In fact, when $n = 3$, the Parity Halving Game coincides with the well-known Greenberger--Horne--Zeilinger (GHZ) game~\cite{GHZ89}. 
It is known that quantum players sharing entanglement, and specifically the state $|\Cat_3\>=\frac{1}{\sqrt{2}}(\ket{000} + \ket{111})$, can always win the GHZ game with certainty, but classical players can win the GHZ game with probability at most $3/4$.

This $n$-player generalization of the GHZ game is very natural and quantum players can win the Parity Halving Game exactly using a $|\Cat_n\>$ state. This game has been studied before, and we are aware of two other works that analyze this game: the first by Mermin \cite{Mermin90}, and the second by Brassard, Broadbent, and Tapp \cite{BBT05}. Both papers exhibit the quantum strategy that wins perfectly and argue that classical strategies fail (as we do in the next section). 

The strategy for winning the $3$-player GHZ game generalizes to yield a perfect strategy for winning the $n$-player game as well, which yields a depth-$2$ linear-size quantum circuit for $\PHP_n$. We now describe the quantum strategy and the corresponding constant-depth quantum circuit.

\begin{thm}[Quantum circuit for $\PHP_n$]
\label{thm:PHPquantum}
The Parity Halving Problem $(\PHP_n)$ can be solved exactly by a depth-$2$, linear-size quantum circuit starting with the $\ket{\Cat_n}$ state.
\end{thm}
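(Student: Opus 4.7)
The plan is to analyze the explicit $n$-qubit generalization of the depth-$2$ circuit shown in Figure~\ref{fig:php}, which is essentially Mermin's quantum strategy for the $n$-player GHZ-type non-local game. Consuming $\ket{\Cat_n}$ as advice on a register of $n$ auxiliary qubits, the circuit proceeds in two layers: first, for each $i \in \{1,\ldots,n\}$, apply a phase gate $S$ to the $i$-th advice qubit controlled on input bit $x_i$; second, apply a Hadamard to every advice qubit and measure it in the computational basis to produce $y$. Because each $x_i$ controls exactly one $S$ gate (so no fan-out is used) and the $n$ controlled-$S$ gates act on pairwise disjoint qubit pairs $(x_i,\text{cat}_i)$, the first layer executes in one parallel step, the Hadamards form a second parallel layer, and the total gate count is $O(n)$.

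The heart of the argument is a short phase-tracking calculation. Since $S\ket{0}=\ket{0}$ and $S\ket{1}=i\ket{1}$, applying $\prod_i S^{x_i}$ to $\ket{\Cat_n}$ produces
\begin{equation}
    \tfrac{1}{\sqrt{2}}\bigl(\ket{0^n} + i^{|x|}\ket{1^n}\bigr),
\end{equation}
and the even-parity promise on $x$ lets us rewrite $i^{|x|}$ as $(-1)^{|x|/2}$. Applying $H^{\otimes n}$ and using the identities $H^{\otimes n}\ket{0^n} = 2^{-n/2}\sum_y\ket{y}$ and $H^{\otimes n}\ket{1^n} = 2^{-n/2}\sum_y (-1)^{|y|}\ket{y}$, the amplitudes interfere constructively exactly on those strings $y$ with $|y| + |x|/2 \equiv 0 \pmod 2$ and destructively on the others. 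The post-Hadamard state is therefore a uniform superposition over the $2^{n-1}$ valid outputs, so a standard-basis measurement returns such a $y$ with probability $1$, solving $\PHP_n$ exactly on every valid input.

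There is no substantive obstacle: the argument reduces to direct simulation of a small Clifford-plus-$S$ circuit. The only points that deserve explicit mention are (i) that since each input bit is read only once the circuit is a legitimate bounded-fan-in $\QNC^0$ circuit acting on the $n$-qubit input register together with the $n$-qubit cat-state advice register, and (ii) that the controlled-$S$ gates act on pairwise disjoint qubit pairs and hence compose into a single time step, matching the claimed depth of $2$ and linear size.
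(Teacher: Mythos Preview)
Your proposal is correct and is essentially identical to the paper's own proof: both apply controlled-$S$ gates to the cat state to obtain $\tfrac{1}{\sqrt{2}}(\ket{0^n}+i^{|x|}\ket{1^n})$, use the even-parity promise to reduce the phase to $\pm 1$, and then apply $H^{\otimes n}$ to map this to a uniform superposition over strings of the correct parity. Your write-up is slightly more explicit than the paper's about why the circuit has depth $2$, linear size, and bounded fan-in (the controlled-$S$ gates act on disjoint qubit pairs and each input is read once), which is a welcome addition but not a different approach.
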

\begin{proof}
We describe this circuit in the language of the $n$-player Parity Halving Game described above.
The circuit is depicted in \Cref{fig:php} (on page \pageref*{fig:php}).  
Let the input to the $i^\mathrm{th}$ player in the Parity Halving Game be called $x_i$, and their output be called $y_i$.
In our protocol, the players will share an $n$-qubit cat state $\ket{\Cat_n} = \frac{1}{\sqrt{2}}\left(\ket{0^n} + \ket{1^n} \right)$, and each player receives one qubit of the cat state at the beginning.

Each player starts by applying a phase gate, $S = \bigl(\begin{smallmatrix} 1 & 0 \\ 0 & i \end{smallmatrix}\bigr)$, to their qubit of the cat state if their input bit is $1$. If their input bit is $0$, they do nothing.
In other words, the player applies a control-$S$ gate with $x_i$ as the source and their qubit of the cat state as the target.
After this step, the cat state has been transformed to 
\begin{equation}
\frac{1}{\sqrt{2}}\left(\ket{0^n} + i^{|x|} \ket{1^n} \right).    
\end{equation}
But since $x$ has even parity, this state is either $\ket{\Cat_n}$ or the ``minus cat state'' $\frac{1}{\sqrt{2}}\left(\ket{0^n} - \ket{1^n} \right)$. We will denote this state by $Z\ket{\Cat_n}$ since this is the state one obtains by applying the $Z=\bigl(\begin{smallmatrix} 1 & 0 \\ 0 & -1 \end{smallmatrix}\bigr)$ gate to any one qubit of the cat state.
When $|x|\equiv 0 \pmod 4$, this state will be $|\Cat_n\>$ and when $|x|\equiv 2 \pmod 4$, this will be $Z\ket{\Cat_n}$.
Note that $\ket{\Cat_n}$ and $Z\ket{\Cat_n}$ are orthogonal states.

Finally, each player applies the Hadamard gate $H=\frac{1}{\sqrt{2}}\bigl(\begin{smallmatrix} 1 & 1 \\ 1 & -1 \end{smallmatrix}\bigr)$ to their qubit of the cat state, measures the qubit, and outputs that as $y_i$. The operator $H^{\otimes n}$ maps the cat state $|\Cat_n\>$ to a uniform superposition over even parity strings, and maps $Z|\Cat_n\>$ to a uniform superposition over odd parity strings. This follows from the following equations:
\begin{align}
    H^{\otimes n} |0^n\> = \frac{1}{\sqrt{2^n}}\sum_{x\in\B^n} |x\>, \qquad \mathrm{and} \qquad 
    H^{\otimes n} |1^n\> = \frac{1}{\sqrt{2^{n}}}\sum_{x\in\B^n} (-1)^{|x|}|x\>.
\end{align}

Thus, when the players measure their qubits, they will get either a random even parity string when $|x|\equiv 0 \pmod 4$ or a random odd parity string when $|x|\equiv 2 \pmod 4$, as desired.
\end{proof}  

Note that the idea of inducing a relative phase proportional to the Hamming weight of a string is studied more generally and called ``rotation by Hamming weight'' in \cite{HS05}.

\subsection{Lower bound for \texorpdfstring{$\NC^{0}$}{NC0} circuits of locality 1}
\label{sec:NClocalityone}

We now discuss the success probability of classical strategies for the Parity Halving Game. 
This was already studied by Mermin \cite{Mermin90}, and Brassard, Broadbent, and Tapp \cite{BBT05}. 
Both papers argue that classical strategies only succeed with probability exponentially close to $1/2$ on the uniform distribution over even-parity inputs.

We reprove these lower bounds on the Parity Halving Game and also prove lower bounds for a restricted version of the game. In the restricted version of the game we only consider inputs consistent with some restriction of the input bits, i.e., where the values of some input bits have been fixed and are known to all the players,
and we only consider all even-parity inputs consistent with this fixing of input bits. We need this generalization later on in the proof since some input bits will be fixed by a random restriction in the  $\AC^0$ lower bound argument.

\begin{thm}[Classical lower bound for Parity Halving Game]
\label{thm:parityhalvinggame}
On the uniform distribution over even-parity strings, the success probability of any classical strategy for the Parity Halving Game with $n$ players is at most $\frac{1}{2} + 2^{-\ceil{n/2}}$.

Now consider the \emph{restricted} Parity Halving Game with $n$ players, where $d$ of the input bits have fixed values known to all players. On the uniform distribution over even-parity strings consistent with the fixed input bits, the success probability of any classical strategy is at most  $\frac{1}{2} + 2^{- \ceil{(n-d)/2}}$.
\end{thm}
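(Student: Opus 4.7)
The plan is to reduce the problem to a short Fourier calculation on the Boolean cube. In a locality-$1$ strategy the $i$-th player's output depends only on $x_i$, so I would record the strategy by signs $c_i := (-1)^{f_i(0)}$ and $d_i := (-1)^{f_i(0) + f_i(1)}$, giving the clean formula $(-1)^{y_i} = c_i d_i^{x_i}$. Using the identity $(-1)^{|x|/2} = i^{|x|}$, which is valid for even $|x|$, the winning condition becomes $\prod_i c_i d_i^{x_i} = i^{|x|}$, so the success probability equals $\tfrac12(1 + \Lambda)$ where
\[
\Lambda = \Bigl(\textstyle\prod_i c_i\Bigr) \cdot \e_{|x| \text{ even}}\Bigl[\textstyle\prod_i \alpha_i^{x_i}\Bigr], \qquad \alpha_i := i\, d_i \in \{+i, -i\}.
\]
All the work is now in bounding $|\Lambda|$.

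Next, I would split the average over even-parity strings via the projection identity $\e_{|x|\text{ even}}[F] = \tfrac{1}{2^n}[\sum_x F(x) + \sum_x (-1)^{|x|}F(x)]$. Because both sums factor over coordinates, the inner expectation reduces to $\tfrac{1}{2^n}[\prod_i (1 + \alpha_i) + \prod_i (1 - \alpha_i)]$. Each factor $1 \pm \alpha_i$ equals $\sqrt{2}\, e^{\pm i\pi/4}$, so the two products have modulus $2^{n/2}$ with conjugate phases $e^{\pm i\theta}$, where $\theta$ is an integer multiple of $\pi/4$ whose parity matches that of $n$. Their sum has magnitude $2^{n/2+1} |\cos \theta|$, with $|\cos\theta| \le 1$ when $n$ is even and $|\cos\theta| = 1/\sqrt{2}$ when $n$ is odd. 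This yields $|\Lambda| \le 2^{1 - \lceil n/2 \rceil}$ in both cases, and hence the claimed bound $\tfrac12 + 2^{-\lceil n/2 \rceil}$ on the success probability.

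For the restricted game, I would separate the coordinates into the $d$ fixed and $n - d$ free ones. The fixed block contributes an overall scalar of modulus $1$ to $\prod_i c_i d_i^{x_i} \cdot i^{|x|}$, while the even-parity constraint on the whole input reduces to a single parity constraint $|x_{\mathrm{free}}| \equiv p \pmod 2$ on the free block, where $p$ is determined by the fixed input weight. The analogous Fourier identity $\tfrac{1}{2^{n-d}}[\prod(1 + \alpha_i) + (-1)^p \prod(1 - \alpha_i)]$ and the same modulus analysis (with the sign flip replacing $\cos\theta$ by $\sin\theta$ when $p = 1$, but giving the same $1/\sqrt{2}$ bound in the odd case) yields $|\Lambda| \le 2^{1 - \lceil (n-d)/2 \rceil}$. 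No step is really an obstacle; the argument is a clean character-sum calculation, and the only thing to watch is the case analysis on the parity of $n$ (respectively $n - d$), which is exactly what the ceiling in the bound reflects.
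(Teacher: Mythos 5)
Your proposal is correct and takes essentially the same route as the paper. You parametrize each player's deterministic locality-$1$ strategy by a pair of signs, turn the success advantage into a factorizing character sum over the Boolean cube, and bound it by observing that each factor is $\sqrt{2}$ times an eighth root of unity; this is exactly the paper's computation. The only cosmetic difference is how the even-parity promise is handled: you insert the projector $\frac{1}{2}\bigl(1 + (-1)^{|x|}\bigr)$ and obtain $\frac{1}{2^n}\bigl[\prod_i(1+\alpha_i) + \prod_i(1-\alpha_i)\bigr]$, while the paper sums over all of $\{0,1\}^n$ against $\Re(i^{|x|})$, getting $\frac{1}{2^{n-1}}\bigl|\Re\prod_i(1+\alpha_i)\bigr|$; these are identical since $\alpha_i$ is purely imaginary, so $\prod_i(1-\alpha_i)=\overline{\prod_i(1+\alpha_i)}$. (One small omission: you should note up front that it suffices to bound deterministic strategies, since a randomized strategy is a mixture of deterministic ones and one of them must do at least as well on the fixed input distribution; the paper states this via Yao's minimax principle.)
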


\begin{proof}
We start with the lower bound for the unrestricted Parity Halving Game.
Since we consider classical strategies against a fixed input distribution, we can without loss of generality only consider deterministic strategies.
This is because a randomized strategy is simply a probability distribution over deterministic strategies, and we can pick the strategy that does the best against the chosen input distribution. (This is the easy direction of Yao's minimax principle.)

Since each player only has one input bit $x_i$, and one output bit $y_i$, there are only four deterministic strategies: output $y_i = 0$, $y_i = 1$, $y_i = x_i$, or $y_i = x_i \oplus 1$. In any case, each $y_i$ is a degree-$1$ polynomial (over $\mathbb F_2$) in $x_i$. It follows that the parity of the outputs, $\bigoplus_{i=1}^{n} y_i$, can be expressed as multivariate linear polynomial in $x_1, \ldots, x_n$, say $a + b \cdot x$ for some $a \in \mathbb F_2$ and $b \in \mathbb F_2^n$. 
We want to upper bound the success probability of any such strategy.

Now consider the function $f(x) = \Re(i^{|x|})$. We have 
\begin{equation}
f(x) = \begin{cases}
1 & \text{if $|x|\equiv 0 \pmod{4}$} \\
-1 & \text{if $|x|\equiv 2 \pmod{4}$} \\
0 & \text{otherwise.}
\end{cases}    
\end{equation}
The function $f(x)$ matches the parity of the output bits (as $\pm 1$) of the $\PHP_n$ function on an input $x$. 
More precisely, $f(x)$ gives the correct parity (as $\pm 1$) when $x$ satisfies the promise of $\PHP_n$, and evaluates to $0$ for inputs outside the promise.

It follows that the product $(-1)^{a + b \cdot x} f(x)$ is $1$ if the strategy corresponding to $a + b \cdot x$ is correct, $-1$ if it is incorrect, and $0$ on inputs that are outside the promise. We define the correlation $\chi$ of a classical strategy as the absolute value of the fraction of valid inputs on which it is correct minus the fraction of valid inputs on which it is incorrect. We can compute this quantity as follows:
\begin{align}
\chi 
&= \left| \mathop{\mathbb{E}}_{\substack{x\in \mathbb F_2^n: \sum_i x_i = 0}} \left[(-1)^{a + b \cdot x} f(x)\right] \right| \\
&= \left| \frac{1}{2^{n-1}} \sum_{x \in \mathbb F_2^n} (-1)^{a + b \cdot x} \Re(i^{|x|}) \right| \\
&\leq \frac{1}{2^{n-1}} \left| \Re\left( \sum_{x \in \mathbb F_2^n} (-1)^{b_1 x_1 + \cdots + b_n x_n} \cdot i^{x_1 + \cdots + x_n} \right) \right| \\
&= \frac{1}{2^{n-1}} \left| \Re\left( \sum_{x_1 \in \mathbb F_2} (-1)^{b_1 x_1} i^{x_1} \cdots \sum_{x_n \in \mathbb F_2} (-1)^{b_n x_n} i^{x_n}\right) \right|  \\
&= \frac{1}{2^{n-1}} \left|\Re \left((1 + i^{1 + 2b_1}) \cdots (1 + i^{1+2b_n})\right)\right|.
\end{align}
That is, we want to know the real part of a product of $n$ terms, each of which is $1 \pm i$. Since $1 \pm i$ is $\sqrt{2}$ times a primitive eighth root of unity, the product is $2^{n/2}$ times an eighth root of unity. After factoring out the $\sqrt{2}$ from each term, we have to determine the possible values of the product of $n$ numbers of the form $\frac{1}{\sqrt{2}}(1\pm i)$. When $n$ is even, their product must lie in the set $\{\pm 1, \pm i\}$, and when $n$ is odd it must lie in the set $\frac{\pm 1 \pm i}{\sqrt{2}}$. In both cases, we see that the real part of the product is either $0$ or $\pm 2^{\lfloor n/2 \rfloor}$, so the correlation is $\chi = 0$ or $\chi = 2^{-\lceil n/2 \rceil+1}$. Since the success probability is $(1+\chi)/2$, this proves the first part of the theorem. 

Now let us move on the to restricted version of the game and fix some of the inputs. If some individual bit $x_j$ is restricted, then the term $\sum_{x_j \in \mathbb F_2} (-1)^{b_j x_j} i^{x_j}$ in the analysis above becomes either $1$ or $(-1)^{b_j} i$. This term is a fourth root of unity, so it does not contribute to the magnitude of the product, since the fourth roots of unity have magnitude 1. Furthermore, it does not change the set of potential phases, since both the sets above are invariant under multiplication by a fourth root of unity. Since the constraint also halves the number of possible inputs, the effect on the correlation is the same as just removing that bit. In other words, $\chi$ is at most $2^{- \lceil (n-d)/2  \rceil + 1}$. It follows that the success probability of a classical strategy is
\begin{equation}
\frac{1+\chi}{2} = \frac{1}{2} + 2^{-\lceil (n-d)/2 \rceil}.     \qedhere
\end{equation}
\end{proof}
It is interesting to note that for the unrestricted game, Brassard, Broadbent, and Tapp \cite{BBT05} show that there are strategies matching this upper bound.

\subsection{From \texorpdfstring{$\NC^{0}$}{NC0} circuits of locality 1 to general \texorpdfstring{$\NC^{0}$}{NC0} circuits}
\label{sec:NC}

We can view $\NC^{0}$ circuits as a more powerful model of computation than the game considered in the previous section. Now each player is allowed to look at the input bits of a constant number of other players before deciding what to output. 
For example, the players could band together into constant-sized groups and look at all the other bits in the group to make a slightly more informed choice. 
However, intuitively it seems that the players cannot do much better than before. We will show this formally by proving that $\NC^{0}$ circuits cannot solve $\PHP_n$.

First, we define some terms. Fix a circuit $C$ and define the \emph{interaction graph} of the circuit $C$ to be a bipartite graph on the input bits and output bits where there is an edge from an input bit $x_i$ to an output bit $y_j$ if there is a path from $x_i$  to $y_j$ in the circuit $C$ (i.e., if $x_i$ can affect $y_j$ in $C$). 
The neighborhood of a vertex in this graph is sometimes called its \emph{light cone}. That is, the light cone of an output bit, $\outputlc{y_i}$, is the set of input bits which can affect it, and the light cone of an input bit, $\inputlc{x_i}$ is the set of output bits which it can affect. 
For example, if all gates have fan-in $2$, then the light cone of any output bit in a circuit of depth $d$ is of size at most $2^d$. In general, we say that a circuit $C$ has locality $\ell$ if the light cone of any output bit is of size at most $\ell$.

Note that while the fan-in of gates sets an upper bound on the light cone of an output bit, the fan-out sets an upper bound for the light cone of input bits. In all the classical circuit classes we study in this paper, fan-out is unbounded, hence even in a constant-depth circuit one input bit can affect all output bits.

\begin{prop}
\label{prop:independentinputs}
Let $C$ be a circuit with $n$ inputs, $m$ outputs, and locality $\ell$. 
There exists a subset of inputs bits $S$ of size $\Omega\left(\min\left\{n, \frac{n^2}{\ell^2 m}\right\}\right)$ such that each output bit depends on at most one bit from $S$.
\end{prop}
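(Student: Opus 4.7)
The plan is to reduce this to finding a large independent set in an auxiliary graph. Define a graph $H$ on vertex set $[n]$ (the input indices) where $i$ and $j$ are adjacent iff there exists some output bit $y_k$ whose light cone $\outputlc{y_k}$ contains both $x_i$ and $x_j$. A subset $S$ of input bits satisfies the conclusion of the proposition precisely when $S$ is an independent set in $H$: each output sees at most one element of $S$.

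Next I would bound the number of edges of $H$. Since the circuit has locality $\ell$, each output bit $y_k$ has $|\outputlc{y_k}| \leq \ell$, and it therefore contributes at most $\binom{\ell}{2} \leq \ell^2/2$ edges to $H$. Summing over the $m$ outputs,
\begin{equation}
|E(H)| \;\leq\; m \binom{\ell}{2} \;\leq\; \frac{m\ell^2}{2}.
\end{equation}
The average degree of $H$ is therefore at most $\bar{d} \leq m\ell^2/n$.

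Finally I would apply a standard independent-set lower bound (for instance, the greedy bound, or equivalently a weak form of Turán's theorem): any graph on $n$ vertices of average degree $\bar{d}$ contains an independent set of size at least $n/(\bar{d}+1)$. Plugging in,
\begin{equation}
\alpha(H) \;\geq\; \frac{n}{\bar{d}+1} \;\geq\; \frac{n}{m\ell^2/n + 1} \;=\; \frac{n^2}{m\ell^2+n} \;=\; \Omega\!\left(\min\!\left\{n,\, \frac{n^2}{\ell^2 m}\right\}\right),
\end{equation}
and any such independent set $S$ is the subset required by the proposition. The argument is short and there is no real obstacle; the only thing to be careful about is the direction of the bound (splitting the two regimes according to which term in the denominator dominates), which is exactly what produces the minimum inside the $\Omega(\cdot)$.
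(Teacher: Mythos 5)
Your proof is correct and follows essentially the same strategy as the paper: build the intersection graph on input indices (two inputs adjacent iff some output's light cone contains both), bound its average degree by $\ell^2 m/n$, and apply Tur\'an's theorem to extract a large independent set. The only cosmetic difference is that you count edges of $H$ directly by summing $\binom{\ell}{2}$ over outputs, whereas the paper first counts edges of the bipartite interaction graph and then multiplies by $\ell$; the resulting bound and the rest of the argument coincide.
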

\begin{proof}
Since each output bit has a light cone of size at most $\ell$, the interaction graph has at most $\ell m$ edges. 
This implies that, on average, an input bit has a light cone of size $\frac{\ell m}{n}$. 
Our goal is to find a set of input bits $S$ such that their light cones are pairwise disjoint, since then the light cone of any output contains at most one element of $S$. 

Consider the intersection graph between input variables. That is, we consider the graph on $x_1, \ldots, x_n$, where $x_i$ is connected to $x_j$ if their light cones intersect. A variable $x_i$ that had degree $d$ in the original graph has degree at most $d\ell$ in the intersection graph, since each output vertex has locality $\ell$.
Hence the average degree in the intersection graph, denoted by $D$, is at most $\frac{\ell^2 m}{n}$.
By Tur\'{a}n's theorem, in any graph on $n$ vertices with average degree at most $D$ there exists an independent set of size at least $n/(1+D)$. 
Thus, we get a set $S\subseteq \{x_1, \ldots, x_n\}$ of size $\Omega\left(\min\left\{n, \frac{n^2}{\ell^2 m}\right\}\right)$ such that the light cones of every pair of input bits in $S$ do not intersect.
\end{proof}

We are now ready to prove a lower bound on \NCz circuits of locality $\ell$ solving $\PHP_{n,m}$.

\begin{thm}[PHP is not in $\NC^0$]\label{thm:NC0}
Let $C$ be an \NCz circuit with $n$ inputs, $m$ outputs, and locality $\ell$. 
Then $C$ solves $\PHP_{n,m}$ on a random even-parity input with probability at most $\frac{1}{2} + 2^{-\Omega\left(\min\left\{n, \frac{n^2}{\ell^2 m}\right\}\right)}$.
\end{thm}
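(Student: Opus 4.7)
The plan is to combine \Cref{prop:independentinputs} with the Fourier-analytic argument from the proof of \Cref{thm:parityhalvinggame}. First, apply \Cref{prop:independentinputs} to the circuit $C$ to obtain a subset $S \subseteq \{1,\ldots,n\}$ of size $k := \Omega(\min\{n,\, n^2/(\ell^2 m)\})$ such that each output bit of $C$ depends on at most one input bit in $S$. Write the input as $x = (x_S, z)$ with $z = x_{\bar S}$, and let $g(x) := \bigoplus_{j=1}^{m} y_j(x)$ denote the parity of the outputs of $C$.

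The key structural observation is that once $z$ is fixed, $g(x_S, z)$ collapses to an affine function $a(z) + b(z)\cdot x_S$ over $\mathbb{F}_2$ with $b(z) \in \mathbb{F}_2^{|S|}$: each $y_j$ is a function of $z$ together with at most one coordinate of $x_S$, so the XOR of all of them is affine in $x_S$. This is exactly the form analyzed in the Parity Halving Game lower bound, except that the affine coefficients live on $S$ rather than on all of $\{1,\ldots,n\}$, and the constants $a(z)$ and $b(z)$ may depend on the setting of the remaining bits.

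With this structure in hand, I would replay the Fourier calculation from the proof of \Cref{thm:parityhalvinggame}. Using $\Re(i^{|x|})$ as the signed indicator of PHP-correctness on valid inputs, the correlation $\chi$ between $(-1)^{g}$ and PHP-correctness becomes
\begin{equation}
\chi \;=\; \frac{1}{2^{n-1}}\left|\Re\sum_{x \in \mathbb{F}_2^n} (-1)^{g(x)}\, i^{|x|}\right| \;=\; \frac{1}{2^{n-1}}\left|\Re\sum_{z} (-1)^{a(z)}\, i^{|z|} \prod_{i \in S}\bigl(1 + (-1)^{b(z)_i}\,i\bigr)\right|.
\end{equation}
Since each factor $1 \pm i$ has magnitude $\sqrt{2}$ and there are $2^{n-k}$ choices of $z$, the triangle inequality yields $\chi \le \frac{1}{2^{n-1}}\cdot 2^{n-k}\cdot 2^{k/2} = 2^{1 - k/2}$; consequently the success probability is at most $\tfrac{1}{2} + 2^{-\Omega(k)} = \tfrac{1}{2} + 2^{-\Omega(\min\{n,\, n^2/(\ell^2 m)\})}$, which is exactly the claimed bound.

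The only real obstacle in this program is \Cref{prop:independentinputs} itself, which was already established via a Tur\'an-type argument on the intersection graph of the input light cones. Once $S$ is in hand, the remainder is a direct generalization of the Fourier calculation already carried out for the restricted Parity Halving Game, and no new combinatorial or quantum ingredient is needed.
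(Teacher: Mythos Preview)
Your argument is correct. The only difference from the paper's proof is organizational: the paper first applies an averaging step to fix a single best assignment $z^*$ to the bits outside $S$ (so that the conditional success probability is at least $p$), then reinterprets the restricted circuit as a locality-$1$ strategy for the \emph{restricted} Parity Halving Game on $n$ players with $n-|S|$ fixed bits, and finally invokes \Cref{thm:parityhalvinggame} as a black box. You instead keep the sum over all $z$ and bound it termwise with the triangle inequality, unrolling the Fourier computation from \Cref{thm:parityhalvinggame} in place. Both routes rest on the same two ingredients---\Cref{prop:independentinputs} and the observation that fixing $z$ makes the output parity affine in $x_S$---and yield the same $2^{-\Omega(|S|)}$ bound; the paper's version modularizes cleanly through the game lemma, while yours is a touch more self-contained.
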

\begin{proof}
Let the circuit $C$ solve $\PHP_{n,m}$ on a random even-parity input with probability $p$.
By the previous theorem, there is a set of input bits with disjoint light cones, $S$, and $|S| = \Omega\left(\min\left\{n, \frac{n^2}{\ell^2 m}\right\}\right)$. For the remainder of this proof fix any such $S$. 

Now consider choosing an arbitrary assignment for the bits outside $S$ and running the circuit $C$ on the distribution of random even-parity strings consistent with this arbitrary assignment. The probability of success of circuit $C$ may depend on the arbitrary assignment chosen, but since the success probability for a random choice is $p$, there exists one assignment for which the success probability is at least $p$. Let us fix this assignment of bits outside $S$. Now we have an assignment for bits outside $S$ such that $C$ is correct with probability at least $p$ on a random even-parity input consistent with this assignment. 

We will now argue that the circuit gives a strategy for the restricted Parity Halving Game on $n$ players with $n-|S|$ restricted bits with probability of success at least $p$. To do so, we assign a player for every input bit. 
Only the players assigned to bits in $S$ will have unrestricted inputs. 
Since the light cones of bits in $S$ do not intersect, a player with input bit in $S$ can compute the values of all outputs in its light cone (since all the bits outside $S$ are fixed and known to everyone). 
This player can now output the parity of all these output bits.
Some output bits may not appear in any input light cone; we add the parity of these bits to an arbitrary player's output. 
Now the the parity of the players' outputs is the same as the parity of the circuit's output. 
This gives a classical strategy for the restricted Parity Halving Game with $n$ players and $n-|S|$ restricted bits with success probability at least $p$. Finally, from \Cref{thm:parityhalvinggame} we get that $p \leq \frac{1}{2} + 2^{-\Omega\left(\min\left\{n, \frac{n^2}{\ell^2 m}\right\}\right)}$.
\end{proof}

Note that this theorem is essentially tight. It says that to achieve a high probability of success, we need $n^2 = \Theta(\ell^2 m)$. We can indeed achieve success probability 1 at both extremes: when $m = \Theta(n^2)$ and $\ell=2$, or when $m=1$ and $\ell=n$. For the first setting of parameters, as noted in the introduction, there is a simple depth-$1$ \NCz circuit of locality 2 that solves the problem when $m = \binom{n}{2}$. The second parameter regime is even simpler, since any Boolean function can be computed by an \NCz circuit of locality $\ell=n$. 

\subsection{From \texorpdfstring{$\NC^{0}$}{NC0} circuits to \texorpdfstring{$\AC^{0}$}{AC0} circuits}
\label{sec:AC}

In this section we finally extend our lower bound to $\AC^0$ circuits as stated in \Cref{thm:PHP}.

To do this, we use a technical tool known as a switching lemma \cite{FSS84, Ajtai83, Yao85, Has86}. Informally, a switching lemma says that with high probability randomly restricting a large fraction of the input bits to an $\AC^0$ circuit produces a circuit with small locality.

Average-case reductions from $\NC^0$ to $\AC^0$ have previously appeared in the literature (cf. \cite{Viola14}), based on the original switching lemma~\cite{Has86}. 
In this paper, we will use multi-switching lemmas, which  handle multiple output circuits much better, and were recently proved by H{\aa}stad~\cite{Hastad14} and Rossman~\cite{Rossman2017}.
Using the multi-switching lemmas instead of H{\aa}stad's original switching lemma \cite{Has86} allows us to improve the parameters dramatically.%
\footnote{
Based on the original switching lemma, we can show that $\PHP$ is hard to compute by $\AC^0$ circuits on more than $1/2 + 1/n^{\Omega(\log n)}$ of the inputs. On the other hand, based on the multi-switching lemmas, we will show  that $\PHP$ is, in fact, hard to compute on more than  $1/2 + \mathrm{exp}\bigl(-n^{1-o(1)}\bigr)$ of the inputs.}

\subsubsection{Preliminaries}
We start with some definitions. In the following, we consider restrictions and random restrictions. A restriction $\rho \in \{0,1,*\}^n$ defines a partial assignment to the inputs of a Boolean string of length $n$.
For $i=1, \ldots, n$, when $\rho_i \in \{0,1\}$ we say that the restriction fixes the value of the $i$-th coordinate, and when $\rho_i = *$ we say that the restriction keeps the $i$-th coordinate alive.

 A $p$-random restriction is a restriction sampled according to the following process: for each $i=1, \ldots, n$ independently, sample $\rho_{i}=*$ with probability $p$, $\rho_{i}=0$ with probability $(1-p)/2$ and $\rho_{i}=1$ with probability $(1-p)/2$. We denote by $\mathbf{R}_p$ the distribution of $p$-random restrictions.
 
For a Boolean function $f: \{0,1\}^n \to \{0,1\}^m$ we denote by $f|_{\rho}: \B^n \to \B^m$ the restricted function defined by
\begin{equation}
f|_{\rho}(x) = f(y) \qquad \text{where} \qquad y_i = \begin{cases}
x_i & \rho_i = * \textrm{ and}\\
\rho_i & \textrm{otherwise.}
\end{cases}
\end{equation}

\newcommand{\Rest}{\mathbf{R}}
\newcommand{\Rp}{\mathbf{R}_p}
\newcommand{\DT}{\mathrm{DT}}
\newcommand{\CKT}{\mathrm{CKT}}
\newcommand{\N}{\mathbb{N}}

Next, we give the standard definition of a decision tree. For an excellent survey on this topic, please see \cite{BuhrmanW02}.

\begin{defn}[Decision Tree]
A {\sf decision tree} is a rooted ordered binary tree $T$, where each internal node of $T$ is labeled with a variable $x_i$ and each leaf is labeled with a value 0 or 1. Given an input $x\in \B^n$, the tree is evaluated as follows. Start at the root. If this is a leaf then stop. Otherwise, query the variable $x_i$ that labels the root. If $x_i=0$, then recursively evaluate the left subtree, if $x_i=1$ then recursively evaluate the right subtree. The output of the tree is the value (0 or 1) of the leaf that is reached eventually. Note that an input $x$ deterministically determines the leaf reached at the end, and thus the output.
We say a decision tree {\sf computes} $f$ if its output equals $f(x)$, for all $x\in\B^n$. 
The complexity of such a tree is its depth, i.e., the number of queries made on the worst-case input.
We denote by $\DT(t)$ the class of functions computed by decision trees of depth at most $t$.
\end{defn}

Note that the decision tree complexity of a function $f$ is also called the deterministic query complexity of $f$.

\begin{defn}[$\mathcal F$-Decision Tree]
Suppose $\mathcal F$ is a class of functions  mapping $\{0,1\}^n$ to $\{0,1\}^m$. An $\mathcal F$-partial decision tree is a standard decision tree, except that the leaves are marked with functions in $\mathcal F$ (instead of constants).
Given an input $x\in \B^n$, the $\mathcal F$-Decision Tree is evaluated as follows. Starting from the tree's root, we go along the path defined by the input $x$ until we reach a leaf.  Then, we evaluate the function $f_v \in \mathcal F$ that labels the leaf $v$ on the input $x$, and output its value, $f_v(x)$. We denote by $\DT(t) \circ \mathcal F$ the class of functions computed by $\mathcal F$-decision trees of depth at most $t$.
 \end{defn}
 
 Note that $\mathcal F$-decision trees compute functions from $\{0,1\}^n \to \{0,1\}^m$ where $n$ and $m$ are the input and output lengths for the functions in $\mathcal F$, respectively.
 
 \begin{defn}[Tuples of functions classes]
Suppose $\mathcal F$ is a class of functions mapping $\{0,1\}^n$ to $\{0,1\}$.
We denote by ${\mathcal F}^m$ the class of functions  $F:\B^n \to \B^m$ of the form $F(x) = (f_1(x), f_2(x), \ldots, f_m(x))$, where each $f_i\in \mathcal F$. That is, ${\mathcal F}^m$ is  the class of $m$-tuples of functions in $\mathcal F$.
\end{defn}

\subsubsection{The multi-switching lemma}
The main lemma that we are going to use is a slight adaption of Rossman's lemma \cite{Rossman2017}, which combines both switching lemmas of H{\aa}stad \cite{Has86,Hastad14}.
The lemma claims that a multi-output $\AC^0$ circuit mapping $\B^n \to \B^m$ would reduce under a random restriction, with high probability, to a function in the class $\DT(2t) \circ \DT(q)^m$ (for some parameters $t$ and $q$).

Let us pause for a second to spell out what is the class $\DT(2t) \circ \DT(q)^m$.
This is the class of depth-$2t$ decision trees, whose leaves are labeled by $m$-tuples of depth-$q$ decision trees, one per output bit. 
In other words, these are functions mapping $\B^n$ to $\B^m$ that can be evaluated by adaptively querying at most $2t$ coordinates globally, after which each of the $m$ output bits can be evaluated by making at most $q$ additional adaptive queries. Note that while the first $2t$ queries are global, the last $q$ queries could differ from one output bit to another. We would typically set the parameters so that $t$ is much larger than $q$ (for example, $t = n^{1-o(1)}$ and $q = o(\log n)$).

\begin{lem}[Multi-switching lemma]\label{cor:Rossman}
Let $f:\B^n \to \B^m$ be an $\AC^0$ circuit of size $s$, depth $d$.
Let $q \in \N$ be a parameter, and set $p = 1 / (m^{1/q} \cdot O(\log s)^{d-1})$.
Then 
\begin{equation}
\forall{t}: \Pr_{\rho \sim \Rp}[f|_{\rho} \notin \DT(2t) \circ \DT(q)^m] 
\le 
s \cdot 2^{-t}.
\end{equation}
\end{lem}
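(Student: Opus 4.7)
The plan is to prove the lemma by induction on the circuit depth $d$, iteratively applying a single-step multi-switching lemma (the key technical ingredient from \cite{Hastad14, Rossman2017}) to collapse the bottom two layers of the circuit into a shared decision tree prefix with short per-gate decision trees at its leaves. The single-step lemma says that for any collection of $N$ $k$-DNFs (or $k$-CNFs) and a $p'$-random restriction with $p' \leq 1/O(k)$, the probability that no shared decision tree of depth $\leq t'$ exists whose leaves label depth-$q'$ decision trees computing each of the $N$ formulas is at most $N \cdot 2^{-t'}$ (for suitably chosen $q'$).

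First, I would normalize the bottom layer so that bottom-fan-in is at most $O(\log s)$; this is either assumed via the standard normalization of $\AC^0$ or accomplished by a cheap preliminary restriction. Viewing the two lowest layers as an ensemble of at most $s$ CNFs/DNFs with bottom fan-in $O(\log s)$, I apply the single-step multi-switching lemma with restriction parameter $p_{d-1} = 1/O(\log s)$ and intermediate leaf-depth $q_{d-1} = O(\log s)$. With probability at least $1 - s \cdot 2^{-t_{d-1}}$, the restricted circuit then factors as a shared decision tree of depth $\leq 2 t_{d-1}$ followed by depth-$O(\log s)$ decision trees labeling its leaves, one per second-layer gate. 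Since a depth-$O(\log s)$ decision tree is simultaneously an $O(\log s)$-CNF and an $O(\log s)$-DNF, these leaf trees can be folded into the third layer (merging with adjacent gates of the same type), producing a fresh $\AC^0$ circuit of depth $d-1$ attached to each leaf of the shared prefix, and the induction can continue.

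Recursing, the $i$-th switching step for $i = d-1, d-2, \ldots, 2$ uses parameters $p_i = 1/O(\log s)$ and $q_i = O(\log s)$, while the final step handles the remaining depth-$2$ formula whose top-level gates are exactly the $m$ outputs; for this last switch I would use the target parameter $q$ together with $p_1 = 1/(m^{1/q} \cdot O(\log s))$ to guarantee depth-$q$ per-output leaves while switching $m$ formulas simultaneously. The composed restriction is $p = \prod_i p_i = 1 / \bigl(m^{1/q} \cdot O(\log s)^{d-1}\bigr)$, matching the statement. The successive shared prefixes stack into one shared prefix of depth $\sum_i 2 t_i$; choosing $t_{d-1}, \ldots, t_2$ of size $O(\log s)$ (absorbed into the $O(\log s)^{d-1}$ factor) and $t_1 = t$ yields a global prefix of depth $\leq 2t$, and a union bound over the $d-1$ switching steps gives total failure at most $s \cdot 2^{-t}$ after minor constant adjustments.

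The main obstacle is the parameter bookkeeping, in two respects. First, the $m^{1/q}$ factor should appear only once in $p$, not $d-1$ times; this works because only the topmost switching is required to produce depth-$q$ leaves (for the $m$ outputs), whereas intermediate switchings need only depth-$O(\log s)$ leaves to be absorbed into the layer above. Second, the shared prefixes from successive switchings must genuinely compose into a single shared decision tree across all outputs; this is the defining feature of the \emph{multi}-switching formulation and the reason we cannot simply apply H\aa stad's original switching lemma once per output, which would effectively multiply the prefix depth by $m$ and produce a far weaker bound. With these two observations the parameter telescoping and the $s \cdot 2^{-t}$ union bound fall out from the inductive application of the single-step multi-switching lemma.
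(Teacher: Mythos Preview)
Your overall plan---iterate a switching step from the bottom layer upward, with the $m^{1/q}$ factor entering only at the final (top) switch---matches the paper's strategy, and your two ``main obstacle'' observations are exactly the right ones. The gap is in how you manage the shared prefix across the $d-1$ iterations. You write that the successive prefixes stack to total depth $\sum_i 2t_i$, then set the intermediate $t_i$'s to $O(\log s)$ and claim this is ``absorbed'' to give global prefix depth $\le 2t$. But $\sum_i 2t_i = 2t + 2(d-2)\cdot O(\log s)$, which is not $\le 2t$; and if you instead shrink the intermediate $t_i$'s to make the sum $\le 2t$, the per-step failure probabilities blow up (and the union bound over the exponentially many leaves of the accumulated prefix makes this worse). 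The parenthetical ``absorbed into the $O(\log s)^{d-1}$ factor'' conflates the restriction parameter $p$ with the prefix depth; adjusting constants in $p$ does not help here.

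The paper avoids this entirely by invoking Rossman's induction lemma (\cite[Lemma~24]{Rossman2017}, stated in the appendix as \Cref{lemma:Rossman-Induction}): if $f \in \DT(t-1)\circ\CKT(d;\ldots)\circ\DT(k)$, then a single $p$-restriction lands $f|_\rho$ in $\DT(t-1)\circ\CKT(d-1;\ldots)\circ\DT(\ell)$ with the \emph{same} prefix depth $t-1$, except with probability $s_1(O(pk))^{t/2}$. The point is that the switching happens ``underneath'' the existing shared prefix without growing it; this is precisely the refinement of H\aa stad's lemma that Rossman proves. Applying it $d-1$ times keeps the prefix at $t-1$ throughout, and only the final H\aa stad multi-switch (applied per leaf, with a union bound over $2^{t-1}$ leaves that is absorbed into the $O(\cdot)^t$ term) doubles it to $2t-2$. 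Without this lemma the naive stacking you describe yields $\DT(O(dt))\circ\DT(q)^m$ or failure $s\cdot 2^{-\Omega(t/d)}$, not the stated bound.
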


We defer the proof of \Cref{cor:Rossman} to \Cref{app:Rossman} as this is an adaptation of Rossman's lemma \cite{Rossman2017}.

We would use the lemma as follows. 
First, we apply a $p$-random restriction that reduces the $\AC^0$ circuit to a $\DT(2t) \circ \DT(q)^m$ function with high probability.
Then, we further query at most $2t$ coordinates, and fix their values, by following a path in the common partial decision tree.
After which, the restricted function would be an $m$-tuple of depth-$q$ decision trees. Then, using the simple fact that a depth-$q$ decision tree is a function with locality at most $2^q$, we reduced an $\AC^0$ circuit to an $\NC^0$ circuit with locality at most $2^{q}$ with high probability.

\para{On the choice of parameters.}
We have the freedom to choose $q$ and $t$ when applying \Cref{cor:Rossman} in \Cref{thm:AC0}. 
First, we discuss the choice of $q$.
We would like the lemma to yield on one hand an $\NC^0$ circuit with small locality, and on the other hand to keep many input variables alive.
To get small locality, $q$ should be small, say $q = o(\log n)$. To keep many variables alive, $pn$ should be large, and since $p = 1/O(m^{1/q} (\log s)^{d-1})$, we would like $q$ to be large, say $q = \omega(1)$. Balancing these two requirements leads to the choice $q = \Theta(\sqrt{\log n})$.

Once $q$ is set, we would like to make $t$ as large as possible, as it controls the failure probability in \Cref{cor:Rossman}, but on the same time we want the number of alive variables after the two-step restriction process above to remain high. Since this number is roughly $pn - t$ we would choose $t$ to be a small constant fraction of $pn$ (which is $n^{1-o(1)}$). With these choices, we would be left with at least $\Omega(pn)$ variables alive and locality at most $2^{q}$ with extremely high probability.

\subsubsection{\texorpdfstring{$\AC^{0}$}{AC0} lower bound}
\begin{thm}[PHP is not in $\AC^0$]
\label{thm:AC0}
Let $n \le m \le n^2$.
Any $\AC^{0}/\rpoly$ circuit $F$ of depth $d$ and size $s \leq \mathrm{exp}\bigl(n^{\frac{1}{2d}}\bigr)$ solves $\PHP_{n,m}$ on the uniform distribution over valid inputs (even parity strings) with probability at most $\frac{1}{2} + \exp(-n^2 \big/ \bigl(m^{1+o(1)} \cdot O(\log s)^{2(d-1)}\bigr))$.
\end{thm}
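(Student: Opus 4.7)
The plan is to implement the switching-lemma strategy sketched in the ``On the choice of parameters'' remark just before the theorem. First, since the $\rpoly$ advice is independent of the input, fixing the advice to the value that maximizes the success probability reduces the claim to the deterministic case, so we may assume $F$ is a deterministic $\AC^0$ circuit of depth $d$ and size $s$. Choose $q := \lceil \sqrt{\log n}\,\rceil$, $p := 1/(m^{1/q} \cdot O(\log s)^{d-1})$ as in \Cref{cor:Rossman}, and $t := \lfloor pn/4 \rfloor$. Using $m \leq n^2$ and $s \leq \exp(n^{1/(2d)})$ one checks that both $2^q$ and $m^{1/q}$ are $n^{o(1)}$, while $pn \geq n^{(d+1)/(2d) - o(1)}$, so $t$ is large and the target locality $2^q$ is small.

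Next, draw $\rho \sim \Rp$. By \Cref{cor:Rossman}, with probability at least $1 - s \cdot 2^{-t}$ the restricted function $F|_\rho$ lies in $\DT(2t) \circ \DT(q)^m$. Condition on this event and feed the input through the global depth-$2t$ decision tree; this fixes an additional set of at most $2t$ coordinates. Let $S$ be the set of coordinates still alive after both stages, so $|S| \geq pn - 2t \geq pn/2 = n^{1-o(1)}/O(\log s)^{d-1}$. With all non-$S$ bits fixed, each output bit is computed by a depth-$q$ decision tree on $S$; in particular, the restricted circuit is an $\NC^0$ function of locality at most $\ell := 2^q$.

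It remains to bound the success probability of this restricted $\NC^0$ circuit on the induced input distribution, which is uniform over strings $x_S \in \{0,1\}^S$ whose parity matches the parity of the fixed bits (so that the full input has even parity). We apply the argument of \Cref{thm:NC0} verbatim: by \Cref{prop:independentinputs} there exists $S' \subseteq S$ of size $\Omega(\min(|S|, |S|^2/(\ell^2 m)))$ whose light cones (inside the restricted circuit) are pairwise disjoint, and fixing the bits outside $S'$ reduces the problem to the restricted Parity Halving Game on $|S'|$ alive players with $n - |S'|$ fixed bits. Invoking \Cref{thm:parityhalvinggame} --- which bounds the success probability for an arbitrary parity of the fixed bits --- gives conditional success at most $\frac{1}{2} + 2^{-\Omega(|S'|)}$. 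Using $m \geq n$ so that the minimum is attained by $|S|^2/(\ell^2 m)$, together with $m^{2/q} \cdot 2^{2q} = n^{o(1)}$, this exponent simplifies to $\Omega(n^2 / (m^{1+o(1)} \cdot O(\log s)^{2(d-1)}))$, matching the claim.

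Finally, combining with the switching-lemma failure probability $s \cdot 2^{-t} \leq \exp(-\Omega(n^{(d+1)/(2d) - o(1)}))$, which is absorbed into the same $\exp(-\Omega(\cdot))$ expression since $(d+1)/(2d) \geq 1/d$ for all $d \geq 1$, yields the stated bound. The genuine technical work here is the multi-switching lemma \Cref{cor:Rossman} itself, which is deferred to \Cref{app:Rossman}; within the above reduction, the only delicate bookkeeping is verifying that the $n^{o(1)}$ slack coming from the choice $q = \sqrt{\log n}$ propagates consistently through $m^{1/q}$, $2^{2q}$, and $|S|$ to give the claimed exponent.
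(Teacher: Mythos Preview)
Your argument follows the paper's proof essentially step for step: reduce to a deterministic circuit, apply \Cref{cor:Rossman} with $q \approx \sqrt{\log m}$ and $t = \Theta(pn)$, pass to a leaf of the global decision tree, and then invoke the $\NC^0$ bound (\Cref{prop:independentinputs} plus \Cref{thm:parityhalvinggame}) exactly as in \Cref{thm:NC0}.

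There is one small but genuine omission. You assert $|S| \geq pn - 2t$, treating the number of coordinates surviving the $p$-random restriction as if it were exactly $pn$. In fact that count is $\mathrm{Bin}(n,p)$, and the inequality can fail. The paper patches this by enlarging the ``bad'' event $\mathcal{E}$ to include the Chernoff failure $|\{i:\rho_i=*\}| < pn/2$, which contributes an additional $\exp(-\Omega(pn))$ and is absorbed just like the switching failure. Relatedly, your phrase ``feed the input through the global depth-$2t$ decision tree'' hides an averaging step: different inputs reach different leaves $\lambda$, so $S$ is not well-defined until a leaf is fixed. The paper makes this explicit with two averaging arguments (first over $\rho$ satisfying $\mathcal{E}$, then over leaves $\lambda$), each of which preserves success probability $\geq \tfrac{1}{2}+\varepsilon/2$ for at least one choice. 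Once you insert the Chernoff bound into the bad event and make the averaging over $\rho$ and $\lambda$ explicit, your proof matches the paper's.
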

\begin{proof}
Since we have a fixed input distribution, we can without loss of generality prove the lower bound against an $\AC^0$ circuit (instead of an $\AC^0/\rpoly$ circuit), since an $\AC^0/\rpoly$ circuit defines a distribution over $\AC^0$ circuits and we can simply pick the one that does the best against our input distribution.

Suppose that $F$ solves the Parity Halving Problem on a random even-parity input with probability $\frac{1}{2} + \eps$. We shall show that $\eps \le \exp(-n^2/(m^{1+o(1)} \cdot O(\log s)^{2(d-1)}))$.

We defer the choice of $q$ for later, to optimize the parameters. 
However, we will use the fact that $q = o(\log m)$ and that $q=\omega(1)$.
We set
\begin{equation} \label{eq:param-setting}
p = 1/(m^{1/q} \cdot O(\log s)^{d-1}), \qquad t = pn/8.
\end{equation}
Note that under this choice of parameters $s \le 2^{t/2}$, by the following calculation:
using the assumption that $s < \exp(n^{1/2d})$ twice, we have 
\begin{equation} \label{eq:s_vs_t}2^{t/2} = \exp(\Omega(pn)) = \exp(\Omega(m^{-o(1)} \cdot n^{(d+1)/2d}))  \gg \exp(n^{1/2d}) > s.\end{equation}

Let $\rho$ be a $p$-random restriction. 
Denote by $\mathcal{E}$ the event that: 
\begin{enumerate}
\item $F|_{\rho} \in \DT(2t) \circ \DT(q)^{m}$.
\item $\rho$ keeps alive at least $pn/2$ variables.
\end{enumerate}
Using \Cref{cor:Rossman} and Eq.~\eqref{eq:s_vs_t}, 
Item~1 holds with probability at least $1-s \cdot 2^{-t}  \ge 1-2^{-t/2} \ge 1-\exp(-\Omega(pn))$.
Item~2 holds with probability at least $1-\exp(-\Omega(pn))$ by Chernoff's bound.
Thus, by a simple union bound \begin{equation}\label{eq:good rest prob}\Pr[\mathcal{E}] \ge 1-\exp(-\Omega(pn)).\end{equation}

If $\Pr[\mathcal{E}] \le 1-\eps/2$, then we are done as Eq.~\eqref{eq:good rest prob} implies $\eps/2 \le \exp(-\Omega(pn))$.
Going forward, we may assume that $\Pr[\mathcal{E}] > 1-\eps/2$. 
In such a case, we claim that there exists a fixed restriction $\rho$ that satisfies $\mathcal{E}$, under which $F|_{\rho}$ solves the Parity Halving Problem on at least $1/2 + \eps/2$ fraction of the even-parity inputs consistent with $\rho$.
Assume by contradiction otherwise. Under our assumption:
\begin{itemize}
\item For restrictions satisfying $\mathcal{E}$, the  success probability of $F$ on the even-parity inputs consistent with $\rho$ is at most $1/2+\eps/2$.
\item
For other restrictions, the success probability of $F$  on the even-parity inputs consistent  with $\rho$ is at most $1$.
\end{itemize}
The key idea is that sampling a $p$-random restriction, and then sampling an input with even parity consistent with this restriction (if such an input exists), gives the uniform distribution over even-parity inputs.
Thus, under the above assumption, the probability that $F$ solves the Parity Halving Problem on a uniform input with even parity, is at most
\begin{equation}
\Pr[\mathcal{E}] \cdot (1/2+\eps/2)   + \Pr[\neg \mathcal{E}]\cdot 1  
\;<\;
1\cdot (1/2+\eps/2) + (\eps/2) \cdot 1 
\;= \;
1/2+\eps,
\end{equation}
yielding a contradiction.

We get that there exists a restriction $\rho$ keeping at least $pn/2$ of the variables alive, 
under which $F|_{\rho} \in \DT(2t) \circ \DT(q)^m$, such that the success probability of $F$ on the even-parity inputs consistent  with $\rho$ is at least $1/2 + \eps/2$.

In the next and final step, we will focus on a single leaf of the partial decision tree for $F|_{\rho}$.
For each leaf $\lambda$ consider the further restriction of  $F|_{\rho}$ according to the path leading to $\lambda$. This yields a new function, denoted $F_{\rho,\lambda} \in \DT(q)^m$. 
That is, $F_{\rho, \lambda}$ is a tuple of $m$ decision trees of depth $q$.
Moreover, for each $\lambda$, the number of variables left alive in $F_{\rho, \lambda}$ is at least $pn/2 - 2t \ge pn/4$.

We claim that there must exists a $\lambda$ such that $F_{\rho,\lambda}$ solves the Parity Halving Problem on even-parity inputs consistent with $\rho$ and $\lambda$ with probability at least $1/2+\eps/2$. This follows by an averaging argument similar to the one we performed above. Indeed, to uniformly sample an input with even-parity consistent with $\rho$, we can first uniformly sample a root-to-leaf path along the partial decision tree resulting in a leaf $\lambda$, and then uniformly sample an even-parity input consistent with $(\rho,\lambda)$. Since we succeed with probability at least $1/2 + \eps/2$ on uniform inputs with even-parity to $F_{\rho}$, we also succeed with probability at least $1/2 + \eps/2$  on the inputs to some $F_{\rho, \lambda}$.

We get that there exists a restriction defined by $(\rho,\lambda)$, leaving at least $pn/4$ variables alive, under which each output bit of $F$ can be computed as a depth-$q$ decision tree, and therefore depends on at most $2^q$ input bits.
Furthermore, $F_{\rho,\lambda}$ solves the Parity Halving Problem on even-parity inputs consistent with $\rho$ and $\lambda$ with probability at least $1/2+\eps/2$. 
Applying \Cref{thm:NC0} we get that 
\begin{equation}
\eps/2 \le \exp(-\Omega\left(\frac{(pn)^2}{m \cdot 2^{2q}}
\right)
),
\end{equation}
and by Eq.~\eqref{eq:param-setting}, plugging $p = 1/(m^{1/q} \cdot O(\log s)^{d-1})$,
\begin{align}
\eps/2 \le \exp(\frac{-n^2}{m \cdot 2^{2q}\cdot O(\log s)^{2(d-1)} \cdot m^{2/q}}).
\end{align}
Recall that we have not set $q$ yet.
To minimize $2^{2q} \cdot m^{2/q}$ we pick $q = \sqrt{\log m}$. This gives 
\begin{equation}
    \eps \le \exp(\frac{-n^2}{m \cdot O(\log s)^{2(d-1)} \cdot 2^{4\sqrt{\log m}}}),
\end{equation}
which concludes the proof.
\end{proof}

\section{Relaxed Parity Halving Problem}

In this section we deal with the issue that the $\QNC^0$ circuit for $\PHP$ (\Cref{prob:PHP}) needs a cat state, but $\QNC^0$ cannot create a cat state. 

In \Cref{sec:PoorMansCatState}, we first prove that a $\QNC^0$ circuit cannot create a cat state. 
But, as we show, $\QNC^0$ circuits can construct what we call a ``poor man's cat state."
This is a state of the form $\frac{1}{\sqrt{2}}(\ket{z} + \ket{\bar{z}})$ for some uncontrolled $z\in \B^n$ alongside classical ``side information'' about $z$ that allows us to determine it.
In \Cref{sec:RelaxedParityHalving} we show the poor man's cat state lets us solve a relaxed version of the Parity Halving Problem, which is nevertheless hard for $\AC^0$ circuits. 

The goal of this section is to establish \Cref{thm:GRPHP}, which will follow from \Cref{thm:GRPHPupper} and \Cref{thm:GRPHPlower}.

\subsection{A poor man's cat state} 
\label{sec:PoorMansCatState} 

We start by proving our claim that a $\QNC^{0}$ circuit cannot construct a cat state in constant depth.
\begin{thm}[Cat states cannot be created in $\QNC^0$]\label{thm:nocat}
Let $C$ be a depth-$d$ $\QNC^0$ circuit over the gates set of all $2$-qubit gates that maps $\ket{0^{n+m}}$ to $\ket{\Cat_n} \otimes \ket{0^m}$.
Then $d \geq \left( \log n \right) / 2$.
\end{thm}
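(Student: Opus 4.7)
The plan is a Heisenberg-picture light-cone argument. In any depth-$d$ circuit with $2$-qubit gates, the reverse light cone $L_i$ of an output qubit $i$ (the set of input positions whose value can affect output $i$) satisfies $|L_i| \le 2^d$, since each additional layer at most doubles $L_i$; symmetrically the forward light cone $F_k$ of an input qubit satisfies $|F_k| \le 2^d$, with $k \in L_i \iff i \in F_k$. My goal is to establish (i) that the reverse light cones of any two cat-register outputs $i, j$ must intersect, and (ii) that this forces $n \le 2^{2d}$.

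For step (i), I would use that $U^\dagger Z_i U$ is supported inside $L_i$ (acting as the identity on its complement), and similarly for $j$. If $L_i \cap L_j = \emptyset$, these two operators act on disjoint sets of qubits; since the input $\ket{0^{n+m}}$ is a tensor product across this partition, the correlator factorizes:
\begin{equation}
\bra{0^{n+m}} U^\dagger Z_i Z_j U \ket{0^{n+m}}
= \bra{0^{n+m}} U^\dagger Z_i U \ket{0^{n+m}} \cdot \bra{0^{n+m}} U^\dagger Z_j U \ket{0^{n+m}}.
\end{equation}
But on the target state $\ket{\Cat_n}\ket{0^m}$ the reduced density matrix on any two cat-register qubits is $\tfrac{1}{2}(\ket{00}\bra{00} + \ket{11}\bra{11})$, so $\langle Z_i \rangle = \langle Z_j \rangle = 0$ while $\langle Z_i Z_j \rangle = 1$. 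The factorization would then read $1 = 0 \cdot 0$, a contradiction, and hence $L_i \cap L_j \ne \emptyset$ for every pair of cat-register indices.

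For step (ii), I would fix one cat-register position, say $i = 1$. Then for every other cat-register position $j$ there is some $k \in L_1 \cap L_j$, equivalently $j \in F_k$. So the entire cat register lies in $\bigcup_{k \in L_1} F_k$, a set of size at most $|L_1| \cdot \max_k |F_k| \le 2^d \cdot 2^d = 2^{2d}$. Hence $n \le 2^{2d}$, i.e., $d \ge \tfrac{1}{2}\log n$. Crucially this count is independent of $m$, so the ancillas provide no help, which is exactly what makes the statement nontrivial.

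The piece of the argument that needs the most care is the Heisenberg support claim underpinning step (i): one inducts on circuit depth and checks that each $2$-qubit gate enlarges the support of an operator by at most one qubit per contact, uniformly over arbitrary (not necessarily geometrically local) $2$-qubit gates. Everything else is a short correlator computation plus a one-line pigeonhole, and the bound is tight up to a constant, since a balanced binary tree of CNOTs prepares $\ket{\Cat_n}$ in depth $\lceil \log_2 n \rceil$.
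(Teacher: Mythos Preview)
Your proof is correct and follows essentially the same approach as the paper. Both arguments bound the backward and forward light cones by $2^d$, show that any two cat-register output qubits must have intersecting backward light cones (you via a Heisenberg-picture correlator factorization $\langle Z_iZ_j\rangle=\langle Z_i\rangle\langle Z_j\rangle$ on the product input; the paper via the equivalent Schr\"odinger-picture statement that a separable input cannot become correlated across a cut untouched by any gate), and then count: you fix one cat qubit and cover the rest by $\bigcup_{k\in L_1}F_k$, while the paper phrases the same bound as ``the intersection graph of output light cones has maximum degree at most $2^{2d}$.'' These are the same combinatorics in slightly different language; your covering formulation is arguably the cleaner of the two.
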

\begin{proof}
Since $C$ is a depth-$d$ $\QNC^0$ circuit over the gate set of $2$-qubit gates, each input and output bit has a light cone of size $2^d$. Similar to \cref{prop:independentinputs}, consider the intesection graph of the first $n$ output bits, which hold the cat state. In this graph, the $n$ output bits are the vertices, and two output bits are adjacent if their light cones contain a common input. Since the maximum number of input bits in the light cone of an output bit is $2^d$, and each input bit can have at most $2^d$ output bits in its light cone, the maximum degree of an output bit in this intersection graph is $2^{2d}$. 
Assume toward a contradiction that $d < \left( \log n \right) / 2$. Then the maximum degree of the graph is less then $n$, and there must exist two disconnected vertices in the graph. 

Let two such output qubits of $\ket{\Cat_n}$ be called $y_i$ and $y_j$. These qubits depend on disjoint sets of input bits. Now we focus on the output of the the circuit $C$ on these two qubits. Since they are part of the cat state, on measuring these in the computational basis, we will see either $00$ or $11$, with equal probability. Since  the gates that are not in the light cones of these two qubits do not affect this, let us delete all these gates. Since the light cones were disjoint, we are now left with a circuit composed of two disjoint parts, one acting on a set of qubits that contains $y_i$ and another acting on a set of qubits that contains $y_j$. Consider the cut between these two sets of qubits. Observe that the initial state, $\ket{0^{n+m}}$, is separable across this cut, but the output state is correlated across this cut although we have not performed any gates that cross the cut. This is impossible, and hence $d \geq \left(\log n\right) /2 $.
\end{proof}

Now although $\QNC^0$ circuits cannot create the cat state, we show that $\QNC^0$ circuits are able to construct states of the form $\frac{1}{\sqrt{2}} \left(\ket{z} + \ket{\bar{z}}\right)$, where $z$ is some string in $\B^n$ and $\bar{z}$ the complement of $z$. Note that this state is exactly the cat state when $z=0^n$ or $z=1^n$. The circuits that create this state also output an auxiliary classical string $d$ such that $z$ can be determined from $d$, up to the symmetry between $z$ and $\bar{z}$. There is actually a family of $\QNC^0$ circuits which construct these states, which we now describe.

\begin{thm}[Poor man's cat state construction] \label{thm:PMCatState}
For any connected graph $G=(V,E)$ with maximum degree $\Delta$, there is a depth $\Delta + 2$ $\QNC^0$ circuit which outputs a $\abs{V}$ qubit state $\frac{1}{\sqrt{2}}\left(\ket{z} + \ket{\bar{z}}\right)$, along with a bit string $d\in \B^{E}$. Indexing the bits of $z$ by vertices of $V$ and the bits of $d$ by edges of $E$, $z$ and $d$ satisfy the property that 
\begin{align}
    z_u + z_v \equiv \sum_{e \in P(u,v)}d_e \pmod{2}
\end{align} 
for any two vertices $u,v\in V$, and any path $P(u,v)$ from $u$ to $v$. Note that this condition also implies that the sum of $d_e$ along any cycle in the graph is $0 \pmod 2$.
\end{thm}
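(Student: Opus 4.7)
My plan is to implement essentially the construction sketched in the introduction and then verify its correctness and depth. Concretely, place one qubit at every vertex $v \in V$ and one ancilla qubit $a_e$ at every edge $e \in E$, all initialized to $\ket{0}$. The circuit (i) applies a Hadamard layer to every vertex qubit, (ii) for each edge $e = (u,v) \in E$ applies the two CNOTs $\mathrm{CNOT}(u, a_e)$ and $\mathrm{CNOT}(v, a_e)$, and (iii) measures every ancilla in the computational basis, outputting the outcomes as $d \in \B^E$.

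Correctness will follow from a direct state-tracking computation. After steps (i) and (ii), the joint state equals $2^{-|V|/2}\sum_{x \in \B^V}\ket{x} \bigotimes_{e=(u,v)}\ket{x_u \oplus x_v}$, so the measurement yields some $d \in \B^E$ and collapses the vertex register to the uniform superposition over $S_d := \{x \in \B^V : x_u \oplus x_v = d_e \text{ for every } e = (u,v) \in E\}$. Connectedness of $G$ then forces $|S_d| = 2$: pick a reference vertex $v_0$, set $z_{v_0} := 0$, and define $z_v := \sum_{e \in P(v_0, v)} d_e \pmod 2$ along any path $P(v_0, v)$. The value is path-independent because any two paths differ by a cycle, and the $d$-sum around any cycle vanishes automatically, as $d$ lies in the support of the pre-measurement state (each vertex of the cycle contributes to $\sum_{e=(u,v)}(x_u \oplus x_v)$ exactly twice). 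Hence $S_d = \{z, \bar z\}$, the post-measurement state is exactly $\tfrac{1}{\sqrt 2}(\ket{z} + \ket{\bar z})$, and the identity $z_u + z_v \equiv \sum_{e \in P(u,v)} d_e \pmod 2$ follows by cancelling at $v_0$.

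The main obstacle is the depth bound $\Delta + 2$: a naive schedule would use up to $2\Delta$ CNOT layers, since each edge requires two CNOTs into a shared ancilla and each degree-$\Delta$ vertex sources $\Delta$ CNOTs. I would instead view the full list of CNOTs as the edge set of a bipartite graph $B$ between vertex qubits (with $B$-degree $\deg_G(v) \le \Delta$) and edge qubits (with $B$-degree exactly $2$) and invoke K\"{o}nig's edge-colouring theorem to properly colour $B$ with $\max(\Delta, 2)$ colours (which equals $\Delta$ when $\Delta \ge 2$; the trivial cases $\Delta \le 1$ are handled by inspection). Two CNOTs in the same colour class share no qubit, so each colour class composes into a single depth layer, giving $\Delta$ CNOT layers total. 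Together with the Hadamard layer at the start and the measurement layer at the end, this yields total depth $\Delta + 2$.
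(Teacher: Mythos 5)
Your proposal is correct and uses the same circuit as the paper; the two differences lie in how correctness and depth are justified. For correctness, the paper imagines measuring only the $|V|-1$ edges of a spanning tree first, arguing each measurement halves the superposition and supplies one linear equation, and then checks that the remaining (non-tree) edge measurements are deterministic and consistent. You instead write down the joint pre-measurement state $2^{-|V|/2}\sum_{x}\ket{x}\bigotimes_{e=(u,v)}\ket{x_u\oplus x_v}$ explicitly, observe that measuring the ancillas projects the vertex register onto the uniform superposition over the affine solution set $S_d$, and use connectedness to get $|S_d|=2$; your remark that cycle-consistency of $d$ is automatic because $d$ lies in the support of the state (so some $x$ realizes it, and every cycle vertex appears twice) is a clean way to justify path-independence that the paper achieves instead by the spanning-tree ordering. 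For depth, both you and the paper schedule the CNOTs by edge-colouring the bipartite graph obtained by subdividing each edge of $G$ into a degree-$2$ middle vertex, but the paper invokes Vizing's theorem ($\Delta+1$ colours) and does not count the measurement layer, while you invoke K\"{o}nig's theorem (the graph is bipartite, so $\Delta$ colours suffice for $\Delta\ge 2$) and count measurement as a final layer; both accountings give $\Delta+2$. Your use of K\"{o}nig's theorem is actually the sharper of the two -- combined with the paper's convention of not counting measurement, it would establish depth $\Delta+1$ -- so the two arguments are not merely cosmetic variants, though they land on the same stated bound.
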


\begin{proof}
 We first describe the $\QNC^0$ circuit. Begin with $\abs{V} + \abs{E}$ qubits in the state $\ket{0}$, and identify each of the qubits with either an edge or a vertex of the graph. Apply the Hadamard transform for each vertex qubit. 
 Now the state is $\ket{+}^{|V|} \otimes \ket{0}^{|E|}$.
 Then, for every edge $e = (u,v)$ in the graph, XOR the qubits indexed by $u$ and $v$ onto the edge qubit indexed by $e$ (i.e., let the edge qubit store the parity of the two vertex qubits).
 Explicitly, this can be done by implementing CNOT gates from qubits $u, v$ onto qubit $e$. (As discussed below, this can be done in $\Delta+1$ parallel local steps.) 
 Finally, measure all edge qubits in the standard basis. 
 
 To complete this proof we need to establish two claims: First, that the circuit leaves the unmeasured vertex qubits in the state $\frac{1}{\sqrt{2}}\left(\ket{z} + \ket{\bar{z}}\right)$, while the measured edge qubits give the classical bitstring $d$. Second, that the circuit can be implemented in depth $\Delta + 2$. 

We begin with the first claim. Imagine that we first only measure the $n-1$ edges of some spanning tree $T$. Before measurement, the vertex qubits were in a uniform superposition over all possible $2^n$ states. Each measurement on an edge qubit had two equally probable outcomes, and observing the result of this measurement reduced the number of states in the superposition by half. More precisely, measuring the qubit for edge $e = (u,v)$ yields a bit $d_e \in \{0,1\}$, which  gives a linear equation on the state: $z_u \oplus z_v = d_e$. Thus, after all edges in the spanning tree are measured, the vertex qubits must be left in some two state superposition. Furthermore, after the spanning tree is measured any two vertex qubits $u$ and $v$ must differ by the parity of the observed measurements on edge qubits along the path from $u$ to $v$. This shows the vertex qubits must be in the state $\frac{1}{\sqrt{2}}\left(\ket{z} + \ket{\bar{z}}\right)$, with the measurements on the edge qubits so far consistent with the requirements of \Cref{thm:PMCatState}. Now for any edge $e = (v,w)$ not in the spanning tree, the XOR measurements on the associated edge qubit is fixed to be equal to the XOR of edge qubit measurements along the path in $T$ from $v$ to $w$. This shows this measurement must also be consistent with the requirements of \Cref{thm:PMCatState} and cannot affect the state of the vertex qubits. The establishes the first claim.

The second claim is more straightforward. The first layer of our $\QNC^0$ circuit consists of Hadamard gates applied to all vertex qubits. 
It remains to show that we can implement all the desired CNOT gates in depth $\Delta + 1$. 
To show this we introduce a new graph $G'$ with $|V|+|E|$ vertices, that is obtained from $G$ by replacing each edge $e = (a,b)$ in $E$ with a vertex $v_{e}$ connected to its two end-points, $a$ and $b$.
Note the edges of $G'$ are in one to one correspondence with the CNOT gates we want to implement in our circuit. By our assumptions on $G$, $G'$ has degree at most $\Delta$, and so Vizing's theorem tells us the edges of $G'$ can be colored using at most $\Delta + 1$ colors. Since the edges in each color class are non-overlapping we can apply all the CNOT gates in one color class simultaneously, and thus apply all the CNOT gates in depth $\Delta + 1$.
\end{proof}

In the remainder of this paper, we primarily apply \Cref{thm:PMCatState} when $G$ is a spanning tree of a 2D grid, with diameter $2\sqrt{n}$ as depicted in \Cref{figure:gridCatState}, which also describes the associated $\QNC^0$ circuit. This $\QNC^0$ circuit has the nice feature that it is spatially local,\footnote{Here spatially local means here that circuit may be implemented in hardware with the qubits placed on a 2D grid and CNOT gates allowed only between neighbouring qubits.} while any bit of $z$ is specified by relatively few, $O(\sqrt{n})$, bits of $d$. This graph has constant degree $\Delta = 3$.

\begin{figure}[tbhp]
\centering
\begin{tikzpicture}
\node (1) at (0,0) {\:\:};
\node (2) at (0,-0.8) {\:\:};
\node (3) at (0,-1.6) {\:\:};
\node (4) at (0,-2.4) {\:\:};
\node (5) at (1.1,0.24) {\:\:};
\node (6) at (1.9,0.24) {\:\:};
\node (7) at (1.58, -0.8) {\:\:};
\node (8) at (1.58, -1.6) {\:\:};
\node (9) at (1.58, -2.4) {\:\:};
\node (10) at (2.7, 0.24) {\:\:};
\node[minimum height=13pt] (blankOne) at (0.33,-3.3) {};
\node[minimum height=13pt] (blankTwo) at (3.25,-0.35) {};
\node[minimum height=13pt] (blankThree) at (1.9, -3.3) {};
\node[minimum height=13pt] (blankFour) at (-.16,-3.1) {};
\node[minimum height=13pt] (blankFive) at (1.4,-3.1) {};
\node[minimum height=13pt] (blankSix) at (3.4,-0.9) {};
\node[minimum height=13pt] (blankSeven) at (3.4,-2.1) {};
\node[minimum height=13pt] (blankEight) at (3.4,-2.5) {};
\node (blankNine) at (3.6,-0.39) {};
\node[minimum height=13pt] (blankTen) at (4.0,-0.3) {};
\node[circle, draw, anchor=west, inner sep=0.07cm] (v1) at (1.east) {};
\node[circle, draw, anchor=west, inner sep=0.07cm,fill = black] (v2) at (2.east) {};
\node[circle, draw, anchor=west, inner sep=0.07cm] (v3) at (3.east) {};
\node[circle, draw, anchor=west, inner sep=0.07cm,fill = black] (v4) at (4.east) {};
\node[circle, draw, anchor=north, inner sep=0.07cm,fill = black] (v5) at (5.south) {};
\node[circle, draw, anchor=north, inner sep=0.07cm] (v6) at (6.south) {};
\node[circle, draw, anchor=west, inner sep=0.07cm,fill = black] (v7) at (7.east) {};
\node[circle, draw, anchor=west, inner sep=0.07cm] (v8) at (8.east) {};
\node[circle, draw, anchor=west, inner sep=0.07cm,fill = black] (v9) at (9.east) {};
\node[circle, draw, anchor=north, inner sep=0.07cm,fill = black] (v10) at (10.south) {};
\node[circle, inner sep=0.07cm, anchor = south] (vblankOne) at (blankOne.north) {};
\node[circle, inner sep=0.07cm, anchor = south] (vblankTwo) at (blankTwo.north) {};
\node[circle, inner sep=0.07cm, anchor = south] (vblankThree) at (blankThree.north) {};
\node[circle, inner sep=0.07cm, anchor = west] (vblankFour) at (blankFour.east) {$\vdots$};
\node[circle, inner sep=0.07cm, anchor = west] (vblankFive) at (blankFive.east) {$\vdots$};
\node[circle, inner sep=0.07cm, anchor = south] (vblankSix) at (blankSix.north) {};
\node[circle, inner sep=0.07cm, anchor = south] (vblankSeven) at (blankSeven.north) {};
\node[circle, inner sep=0.07cm, anchor = south] (vblankEight) at (blankEight.north) {};
\node[circle, inner sep=0.02cm, anchor = south] (dots) at (blankNine.north) {$\ldots$};
\node[circle, inner sep=0.07cm, anchor = south] (vblankThree) at (blankThree.north) {};
\path[draw] (v1) -- (v2) -- (v3) -- (v4) -- (vblankOne);
\path[draw] (v1) -- (v5) -- (v6) -- (v10) -- (vblankTwo);
\path[draw] (v6) -- (v7) -- (v8) -- (v9) -- (vblankThree);
\draw [decorate,decoration={brace,amplitude=10pt},yshift=0pt]
(-0.05,0.5) -- (5.5,0.5) node [yshift = 28pt, xshift=-3pt, below,black,midway] {$\sqrt{n}$};
\draw [decorate,decoration={brace,amplitude=10pt},yshift=0pt]
(-0.25,-4.4) -- (-0.25,0.2) node [black,midway, xshift=-22pt] {$\sqrt{n}$};
\end{tikzpicture}
\caption{Grid Implementation of a Poor Man's Cat State. Black vertices are ``edge" qubits, and are used to measure the parity of their neighbours. White vertices are ``vertex" qubits. They are initialized in the $\ket{+}$ state, and make up the poor man's cat state after the edge qubits are measured.}
\label{figure:gridCatState}
\end{figure}
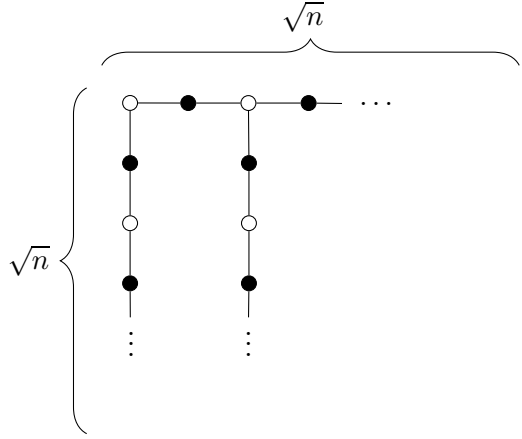

It is worth mentioning that if we relax the requirement that our implementation be spatially local, we can improve on the number of bits of $d$ required to specify any bit of $z$. In particular, applying \Cref{thm:PMCatState} to a balanced binary tree gives an output string $d$ with at most $\log n$ bits of $d$ required to specify any bit of $z$. This version of the problem would lead to slightly better parameters in \Cref{thm:GRPHP}, but then our final problem would not longer be solved by a 2D quantum circuit and would no longer reduce to the \DDHLF problem.
Hence the construction illustrated in \Cref{figure:gridCatState} will be sufficient for our purposes. 

\subsection{The Relaxed Parity Halving Problem} \label{sec:RelaxedParityHalving}

Having constructed a poor man's cat state, a natural idea would be to try and use this state instead of the cat state to solve the Parity Halving Problem. For example, we can feed the poor man's cat state into the quantum circuit (instead of a true cat state) and hope for the best. Unsurprisingly, this does not solve the Parity Halving Problem. 

However, we can define a new problem from this failed attempt, which has $\QNC^{0}$ circuits by construction. We call this problem the \emph{Relaxed Parity Halving Problem}, although we will see that the precise definition of the problem depends on how the poor man's cat state is constructed. 

\begin{thm}[PHP circuit applied to a poor man's cat state]
	\label{thm:phpbutpoorer}
	The quantum circuit for $\PHP_n$ applied to the state $\frac{\ket{z}+\ket{\overline{z}}}{\sqrt{2}}$, where $z\in\B^n$ (instead of the cat state), and an input $x \in \{ 0, 1 \}^{n}$ of even parity, yields an output string $y\in\B^n$ such that 
	\begin{equation}
	|y| \equiv \frac{1}{2}|x| + \langle z, x \rangle \pmod{2},
	\end{equation}
	where $\<z,x\> := \sum_{i\in[n]}z_i \cdot x_i$.
Note that this is the same condition as for $\PHP_n$ (\Cref{prob:PHP}) except for the addition of the $\langle z, x \rangle$ term. 
\end{thm}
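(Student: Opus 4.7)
The plan is to track the quantum state through each layer of the $\PHP_n$ circuit (as described in \Cref{thm:PHPquantum}), starting from the poor man's cat state $\frac{1}{\sqrt{2}}(\ket{z}+\ket{\bar z})$ instead of $\ket{\Cat_n}$. The computation is essentially the same as the analysis in the proof of \Cref{thm:PHPquantum}, except that the relative phase picked up by the two computational-basis terms now depends on $z$.

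First I would analyze the effect of the controlled-$S$ layer. Each player applies $S$ to their qubit iff $x_i = 1$, so on the basis state $\ket{w}$ the layer multiplies the amplitude by $i^{\langle w,x\rangle}$. Applied to the two terms of the poor man's cat state, this gives
\begin{equation}
\frac{1}{\sqrt{2}}\bigl(i^{\langle z,x\rangle}\ket{z} + i^{\langle \bar z,x\rangle}\ket{\bar z}\bigr).
\end{equation}
Using $\langle \bar z, x\rangle = |x| - \langle z,x\rangle$ and pulling out the global phase $i^{\langle z,x\rangle}$, together with $i^{-2\langle z,x\rangle} = (-1)^{\langle z,x\rangle}$ and (since $|x|$ is even) $i^{|x|} = (-1)^{|x|/2}$, the state becomes, up to global phase,
\begin{equation}
\frac{1}{\sqrt{2}}\bigl(\ket{z} + (-1)^{|x|/2 + \langle z,x\rangle}\ket{\bar z}\bigr).
\end{equation}

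Next I would apply $H^{\otimes n}$. Using the standard identities $H^{\otimes n}\ket{w} = 2^{-n/2}\sum_y (-1)^{\langle y,w\rangle}\ket{y}$ and $\langle y,\bar z\rangle = |y| - \langle y,z\rangle \pmod 2$, the resulting amplitude on $\ket{y}$ is
\begin{equation}
\frac{(-1)^{\langle y,z\rangle}}{\sqrt{2^{n+1}}}\Bigl(1 + (-1)^{|y| + |x|/2 + \langle z,x\rangle}\Bigr),
\end{equation}
which is nonzero precisely when $|y| \equiv |x|/2 + \langle z,x\rangle \pmod 2$. Hence any measurement outcome $y$ satisfies the claimed condition, and all such $y$ of the correct parity appear with equal probability.

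There is no real obstacle here: the analysis is a direct generalization of the cat-state calculation in \Cref{thm:PHPquantum}, with the only new ingredient being the careful tracking of the extra $(-1)^{\langle z,x\rangle}$ factor arising from the asymmetry between $z$ and $\bar z$ under the diagonal phases induced by the controlled-$S$ layer. I would present the computation as a short sequence of displayed equations, emphasizing the step where $\langle \bar z,x\rangle = |x| - \langle z,x\rangle$ transforms the $|x|/2$ parity condition into $|x|/2 + \langle z,x\rangle$.
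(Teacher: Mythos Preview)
Your proposal is correct and follows essentially the same approach as the paper's proof: both track the state through the controlled-$S$ layer using $\langle \bar z,x\rangle = |x| - \langle z,x\rangle$ to extract the relative phase $(-1)^{|x|/2 \pm \langle z,x\rangle}$ (the sign being irrelevant mod~2), then apply $H^{\otimes n}$ and read off the parity constraint on $y$ from the support of the resulting superposition. The only cosmetic difference is that the paper writes the relative phase as $(-1)^{|x|/2 - \langle z,x\rangle}$ rather than $(-1)^{|x|/2 + \langle z,x\rangle}$, and expresses the Hadamard output as a sum over $y$ of two amplitude terms rather than factoring out $(-1)^{\langle y,z\rangle}$ as you do.
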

\begin{proof}
Let us apply the quantum circuit solving $\PHP$ (as depicted in \Cref{fig:php}) to the poor man's cat state.
We first apply a phase gate ($S$ gate) to qubit $i$ of the poor man's cat state if $x_i = 1$. 
This yields the state
\begin{equation}
\frac{i^{\langle z, x\rangle}\ket{z} + i^{\langle \bar{z}, x\rangle} \ket{\overline{z}}}{\sqrt{2}} = i^{\langle{z,x\rangle}} \cdot \frac{\ket{z} + i^{|x| - 2\langle z, x \rangle} \ket{\overline{z}}}{\sqrt{2}} = i^{\langle{z,x\rangle}} \cdot \frac{\ket{z} + (-1)^{|x|/2 - \langle z, x \rangle} \ket{\overline{z}}}{\sqrt{2}}.
\end{equation}
Up to a global phase, which can be ignored, the state is $\frac{1}{\sqrt{2}} \bigl(\ket{z} +
(-1)^{|x|/2 - \langle z, x \rangle} \ket{\bar{z}}\bigr)$.
The next stage of the algorithm applies Hadamard gates to all input qubits. 
Thus we have
\begin{align}
H^{\otimes n} \left( \frac{\ket{z} + (-1)^{|x|/2 - \langle z, x \rangle}  \ket{\overline{z}}}{\sqrt{2}}\right) 
= \frac{1}{\sqrt{2^{n+1}}} \sum_{y\in \B^n} \left((-1)^{\langle{y,z\rangle}} + (-1)^{\langle{y,\bar{z}\rangle}} (-1)^{|x|/2 - \langle z, x \rangle} \right) \ket{y}.
\end{align}
On measuring this state in the computational basis, we get only those $y$ whose coefficient is nonzero. Hence we get a uniform distribution over all strings $y$ satisfying $\langle{y,z\rangle} \equiv \langle{y,\bar{z}\rangle} + |x|/2 - \langle{z,x\rangle} \pmod{2}$ or equivalently,
\begin{equation}
|y| \equiv \frac{1}{2}|x| + \langle z,x \rangle \pmod{2}. \qedhere
\end{equation}
\end{proof}

We now define a new problem based on this observation. 

\begin{problem}[Relaxed Parity Halving Problem for graph $G$]
\label{problem:RPHP}
Fix a connected graph $G = (V,E)$. Given an input $x \in \{ 0, 1 \}^{V}$ promised to have even parity, the \emph{Relaxed Parity Halving Problem} or RPHP outputs $y \in \{ 0, 1 \}^{V}$ and $d \in \{ 0, 1 \}^{E}$, such that there exists a $z \in \{ 0, 1 \}^{V}$ with the property
\begin{align}
&\forall (u,v) \in E, \,\, z_u \oplus z_v = d_{(u,v)}, \quad\mathrm{and}\\
&|y| \equiv \frac{1}{2}|x| + \langle z, x \rangle \pmod{2}.
\end{align}
\end{problem}

Note that in the definition above, a string $z$ satifying the first constraint exists if and only if the parity of $d$ along every cycle is $0$. 
If there are no cycles, then there always exists a $z$. 
When $z$ does exist, it is unique up to complement, which does not change the second condition in the problem statement because $\langle z, x \rangle \equiv \langle \overline{z}, x \rangle \pmod{2}$. 
To see this, recall that $x$ has even parity, and hence $0 \equiv \langle 1, x \rangle \equiv \langle z, x \rangle + \langle \overline{z}, x \rangle \pmod{2}$.

To fully specify the problem, we need a family of graphs with $|V| = n$ for infinitely many $n$. 
For this paper, we are primarily interested in the 2D grid (so that the quantum circuit is specially local) and some reasonable (say, diameter $O(\sqrt{n})$) spanning tree of this graph. 
This gives us the Grid Relaxed Parity Halving Problem below. We use a spanning tree rather than the grid graph itself because deleting edges from $G$ only makes the problem easier (since we can drop the corresponding bits of the output string $d$), which makes our lower bounds stronger. Additionally, choosing a spanning tree ensures that a string $z$ satisfying the constraints of the problem always exists. It turns out the upper bounds (i.e., $\QNC^{0}$ circuit we construct) can be easily modified to solve the problem on the entire 2D grid without deleting edges. 

\begin{problem}[Grid Relaxed Parity Halving Problem]
Consider the 2D grid of size $\sqrt{n} \times \sqrt{n}$ and fix an $n$ vertex spanning tree $G = (V,E)$ of low diameter. For concreteness, fix the spanning tree which takes the first row of edges and all columns (as depicted in \Cref{figure:gridCatState}). Then the \emph{Grid Relaxed Parity Halving Problem} is the $\RPHP$ associated with $G$. That is, given $x \in \{ 0, 1 \}^{V}$, output $y \in \{ 0, 1 \}^{V}$ and $d \in \{ 0, 1 \}^{E}$ such that 
\begin{align}
&\forall (u,v) \in E, \,\, z_u \oplus z_v = d_{(u,v)}, \quad\mathrm{and}\\
&|y| \equiv \frac{1}{2}|x| + \langle z, x \rangle \pmod{2}.
\end{align}
\end{problem}

As discussed, we can use other graphs instead of the grid to define different variants of this problem. 
For instance, a balanced binary tree has lower diameter, which leads to slightly better parameters, but we use the grid to achieve a spatially local quantum circuit. 

A path graph (sometimes called the line graph) is even simpler than the grid or tree. Unfortunately, the Relaxed Parity Halving Problem corresponding to the path graph can be solved by an $\NC^{0}$ circuit 
(see \cref{footnote:line}), which makes it unsuitable for proving a separation against $\NC^0$. 

\subsection{Quantum circuit and \texorpdfstring{$\AC^0$}{AC0} lower bound}

In this section we establish \Cref{thm:GRPHP}, which states that Grid-RPHP can be solved in $\QNC^0$, but it is average-case hard for $\AC^0$ circuits. We start by establishing the quantum upper bound.

\begin{thm}[Grid-RPHP is in $\QNC^0$]
\label{thm:GRPHPupper}
There exists a depth-$5$ spatially local $\QNC^{0}$ circuit that exactly solves $\GRPHP$.
\end{thm}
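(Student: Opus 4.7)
The plan is to compose the poor man's cat state construction of \Cref{thm:PMCatState} on the grid spanning tree (maximum degree $\Delta=3$) with the $\PHP_n$ circuit of \Cref{fig:php} acting on the vertex qubits. Concretely, I would allocate one qubit per vertex and per edge of $G$; run the procedure of \Cref{thm:PMCatState} to produce the state $\tfrac{1}{\sqrt{2}}(|z\> + |\bar z\>)$ on the vertex qubits together with classical outcomes $d\in\B^E$ on the edge qubits satisfying $z_u\oplus z_v = d_{(u,v)}$ on every edge; then, on the vertex qubits, apply a classical-controlled $S^{x_v}$ followed by $H$ and measure to get $y$. Correctness is then immediate: \Cref{thm:PMCatState} gives the edge-parity constraint on $d$, and \Cref{thm:phpbutpoorer} gives $|y|\equiv |x|/2 + \<z,x\>\pmod{2}$, which together are exactly the two $\GRPHP$ constraints. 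Spatial locality is inherited from laying out the spanning-tree qubits on the $\sqrt{n}\times\sqrt{n}$ grid, since all CNOTs of \Cref{thm:PMCatState} are between adjacent vertex and edge qubits and all remaining gates are single-qubit on vertex qubits.

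The one subtlety is showing depth exactly $5$ rather than the naive $7$. \Cref{thm:PMCatState} alone already uses depth $\Delta+2=5$ (one Hadamard layer plus $\Delta+1=4$ CNOT layers from a Vizing edge-coloring of the bipartite incidence graph $G'$), and appending the two-layer $\PHP_n$ subcircuit would bring the total to $7$. I would compress back to $5$ using two commutation moves. First, each classical-controlled $S^{x_v}$ is diagonal in the computational basis of qubit $v$ and therefore commutes with every CNOT of \Cref{thm:PMCatState}, because the vertex qubits serve only as controls for those CNOTs; the controlled-$S$ layer can therefore be pushed to the very front and fused with the initial Hadamard into the single-qubit gate $S^{x_v}H$ on each vertex qubit, eliminating one layer. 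Second, for the final Hadamard layer of $\PHP_n$, I would fuse each $H_v$ with $v$'s latest incident CNOT (where $v$ is the control) into a single two-qubit gate; every vertex of the spanning tree has at least one incident edge, so this fusion is always possible and removes the other extra layer. The total depth is then $1+4=5$.

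The construction is essentially a composition of two black boxes, so there is no genuine conceptual obstacle. The only step requiring any care is the depth bookkeeping sketched above; any weaker depth bound (say $6$ or $7$) would follow from naive sequential composition without affecting the qualitative statement $\GRPHP\in\QNC^0$ or any downstream consequence.
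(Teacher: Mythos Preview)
Your approach is essentially the same as the paper's: compose the poor man's cat state construction of \Cref{thm:PMCatState} on the grid spanning tree with the $\PHP$ circuit of \Cref{thm:phpbutpoorer}, and output $(y,d)$. The paper's proof is in fact terser than yours; it simply cites depth $\Delta+2=5$ for the poor man's cat state and then appends the $\PHP$ circuit without commenting further on the total depth, whereas you explicitly work out the two commutation moves (pushing the diagonal $S^{x_v}$ past the CNOTs to fuse with the initial Hadamard, and absorbing each final $H_v$ into the latest incident CNOT) that justify the stated depth of $5$ rather than the naive $7$. Both commutation arguments are sound and preserve spatial locality, so your proposal is correct and slightly more careful than the paper on this point.
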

\begin{proof}
Let $G = (V,E)$ be a $O(\sqrt{n})$-diameter spanning tree of the $\sqrt{n} \times \sqrt{n}$ grid graph with $|V| = n$ and $|E|=n-1$. This graph has degree $\Delta=3$.
As shown in \Cref{thm:PMCatState}, there is a spatially local $\QNC^{0}$ circuit of depth $\Delta+2$ to construct a random poor man's cat state $\frac{\ket{z} + \ket{\overline{z}}}{\sqrt{2}}$ (for some $z \in \B^{V}$) and the associated string $d \in \B^{E}$ for graph $G$ such that 
\begin{equation}
d_{(u,v)} = z_u \oplus z_v
\end{equation} 
for all $(u,v) \in E$. We run the $\QNC^0$ circuit for the Parity Halving Problem as per \Cref{thm:phpbutpoorer}, and get an output $y \in \B^{V}$ such that 
\begin{equation}
|y| \equiv \frac{1}{2}|x| + \langle z, x \rangle \pmod{2}.
\end{equation}
We have defined $\GRPHP$ so that it is not necessary to compute $z$, just a vector $d$ consistent $z$, which we have from the construction of the poor man's cat state. Hence, we return $y$ and $d$ satisfying the condition, and we are done. 
\end{proof}

This result is not surprising, since the relaxed parity halving problem is a relaxation of the parity halving problem explicitly constructed with the goal of having a $\QNC^0$ circuit.

The nontrivial direction of the argument is to show that $\AC^{0}$ cannot solve $\GRPHP$. We accomplish this by exhibiting an $\NC^0$ reduction to  $\GRPHP$ from an instance of $\PHP_{n,m}$ with $m = \Theta(n^{3/2})$, which is still hard for $\AC^0$. Note that although our reduction is an $\NC^0$ reduction, it cannot be carried out with a $\QNC^0$ circuit, so we are not showing that $\PHP_{n,m}$ is in $\QNC^0$. 
While this might seem mysterious, the reason is that $\NC^0$ has one ability that we have not given $\QNC^0$: unbounded fan-out. Our reduction uses the fact that $\NC^0$ can make unlimited copies of the output of a gate, whereas $\QNC^0$ cannot do so.

\begin{thm}[Grid-RPHP is not in $\AC^0$]
\label{thm:reduction}\label{thm:GRPHPlower}
There is an $\NC^{0}$ reduction from $\PHP_{n,O(n^{3/2})}$ to $\GRPHP_n$. In particular, an $\AC^{0}$ circuit of size $s \le \exp(n^{1/2(d+1)})$ and depth $d$ cannot solve $\GRPHP_n$ with probability better than 
\begin{equation}
\frac{1}{2} + \exp(-n^{1/2-o(1)}\Big/O(\log s)^{2d})
\end{equation}
on a uniformly random input with even parity.
\end{thm}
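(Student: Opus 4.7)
The strategy is to exhibit an $\NC^0$ reduction from $\PHP_{n,\Theta(n^{3/2})}$ to $\GRPHP_n$. Combined with the average-case $\AC^0$ lower bound for $\PHP_{n,m}$ from \Cref{thm:AC0}, this immediately yields the claimed bound for $\GRPHP_n$.

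\textbf{Setting up the reduction.} Fix a root $r$ in the spanning tree $G = (V,E)$. Because $G$ is a tree of diameter $O(\sqrt{n})$, any output $d \in \B^E$ of $\GRPHP_n$ uniquely determines a $z \in \B^V$ with $z_r = 0$ via $z_u = \bigoplus_{e \in P(r,u)} d_e$, where $P(r,u)$ is the unique tree path from $r$ to $u$, of length at most $O(\sqrt n)$. Given an input $x \in \B^n$ of even parity to $\PHP_{n,m}$, I feed $x$ directly to the $\GRPHP_n$ solver to obtain $(y,d)$ satisfying $|y| \equiv |x|/2 + \langle z, x\rangle \pmod 2$. Using the tree formula for $z$,
\begin{equation}
\langle z, x\rangle \equiv \sum_{u \in V} \sum_{e \in P(r,u)} x_u \cdot d_e \pmod 2,
\end{equation}
a sum of $O(n^{3/2})$ products of two bits.

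\textbf{The $\NC^0$ circuit.} The final output is the concatenation of $y$ with the collection of bits $\{x_u \wedge d_e : u \in V,\, e \in P(r,u)\}$. The total output length is $m = n + O(n^{3/2}) = \Theta(n^{3/2})$; each output bit is the AND of at most two bits (one input bit $x_u$, one oracle bit $d_e$), so the reduction is $\NC^0$ of locality $2$ and depth $1$ on top of the $\GRPHP_n$ solver. The parity of the full output equals $|y| + \langle z, x\rangle \equiv |x|/2 \pmod 2$, so correctness holds whenever the $\GRPHP_n$ solver is correct. The ambiguity $z \leftrightarrow \bar z$ is harmless because $\langle z, x\rangle \equiv \langle \bar z, x\rangle \pmod 2$ whenever $|x|$ is even.

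\textbf{Quantitative bound.} Suppose an $\AC^0/\rpoly$ circuit of depth $d$ and size $s \le \exp(n^{1/2(d+1)})$ solves $\GRPHP_n$ on uniform even-parity inputs with probability $\tfrac{1}{2} + \eps$. Composing with the reduction yields an $\AC^0/\rpoly$ circuit of depth $d+1$ and size $O(s + n^{3/2})$ solving $\PHP_{n,\Theta(n^{3/2})}$ on uniform even-parity inputs with the same probability, since the reduction simply forwards $x$ unchanged. Applying \Cref{thm:AC0} with parameters $(d' = d+1,\, m = \Theta(n^{3/2}))$ gives
\begin{equation}
\eps \le \exp\!\bigl(-n^{2} \big/ \bigl(m^{1+o(1)} \cdot O(\log s)^{2d}\bigr)\bigr) = \exp\!\bigl(-n^{1/2 - o(1)} \big/ O(\log s)^{2d}\bigr),
\end{equation}
as required.

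\textbf{Main obstacle.} The conceptual difficulty is that an individual bit $z_u$ has locality $O(\sqrt{n})$ in $d$ and hence cannot be computed by an $\NC^0$ circuit from the $\GRPHP$ output. The idea that sidesteps this is that we never need $z$ itself, only $\langle z, x\rangle \bmod 2$, and only as a contribution to the \emph{parity} of a longer output. Expanding $x_u z_u = \bigoplus_{e \in P(r,u)} x_u d_e$ and emitting each of the $O(n^{3/2})$ AND terms as its own output bit reduces the global XOR to local AND gates. This is precisely why the reduction must target $\PHP_{n,m}$ with $m = \Theta(n^{3/2})$ rather than $m = n$, and why the admissible range $n \le m \le n^2$ in \Cref{thm:AC0} is essential for the argument to close.
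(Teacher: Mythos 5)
Your proof is correct and follows essentially the same route as the paper's: feed $x$ to the $\GRPHP_n$ solver, expand $\langle z, x\rangle$ via tree paths from a fixed root into $O(n^{3/2})$ AND terms $x_u \wedge d_e$, emit those as extra output bits alongside $y$, and then invoke \Cref{thm:AC0} at depth $d+1$ with $m = \Theta(n^{3/2})$. The only cosmetic difference is notation (root $r$ and $P(r,u)$ versus the paper's $z_1$ and $D_i$), plus your explicit note on the $z \leftrightarrow \bar z$ ambiguity, which the paper handles implicitly through the definition of $\GRPHP$.
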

\begin{proof}
Suppose we want to solve an instance of $\PHP_{n, O(n^{3/2})}$, and we can solve $\GRPHP_n$. Let $T = (V,E)$ be a $O(\sqrt{n})$-diameter spanning tree of an $n$ vertex grid graph.

Take the input $x \in \B^{n}$ from the $\PHP$ instance as input for a $\GRPHP_n$ instance, mapping the $n$ bits arbitrarily to vertices of the grid. Solving this $\GRPHP$ instance gives us two outputs $y \in \B^{V}$ and $d \in \B^{E}$ such that 
\begin{equation}
|y| \equiv|x|/2 + \langle z, x \rangle \pmod{2}, 
\end{equation}
where $z \in \B^{V}$ satisfies the parity constraints in $d$. In particular, if we fix $z_1 = 0$ then each $z_i$ is the parity of all $d_j$ along a path from $z_1$ to $z_i$ in the graph. Let $D_i \subseteq |E|$ denote the edges in the path from $z_1$ to $z_i$. Then we can write 
\begin{align}
\langle z, x \rangle &= \sum_{i} z_i x_i \\
&= \sum_{i} \sum_{j \in D_i} d_j x_i.
\end{align}
Since the diameter of the grid graph is $O(\sqrt{n})$, we may assume each $D_i$ has size at most $O(\sqrt{n})$. Thus, we have expressed $\langle z, x \rangle$ as a sum of $O(n^{3/2})$ terms of the form $d_j x_i$. Note that any such term $d_j x_i$ is easy to compute with a single $\mathrm{AND}$ gate, since we have the string $x$ (our input) and string $d$ (the output of the $\GRPHP$ circuit) available.

We now create $O(n^{3/2})$ new output bits, one for each term $d_j x_i$ that appears in the sum above. If we call this string of length $O(n^{3/2})$ $y'$, then our final output for $\PHP_{n,O(n^{3/2})}$ is the string $y$ (the output of the $\GRPHP$ circuit) concatenated with the string $y'$. We claim this is a correct solution to our $\PHP_{n,O(n^{3/2})}$ instance. This is because $|y'|=\langle z,x \rangle$ by construction, and hence the output to $\PHP_{n,O(n^{3/2})}$, which is the concatenated string $(y,y')$, has parity 
\begin{equation}
|y|+|y'| \equiv |x|/2 + \langle z,x \rangle + |y'| \equiv |x|/2 \pmod 2,
\end{equation}
which satisfies the output condition of $\PHP_{n,O(n^{3/2})}$.

Note that using this reduction, if $\GRPHP$ is solved by an $\AC^0$ circuit of size $s$ and depth $d$, then we get an $\AC^0$ circuit for $\PHP_{n,O(n^{3/2})}$ of size $s+O(n^{3/2})$ and depth $d+1$. Applying \Cref{thm:AC0} gives the required bound assuming $s \le \exp(n^{1/2(d+1)})$.
\end{proof}

\section{Parallel Grid-RPHP}

A common way to decrease the success probability of a problem is to repeat it in parallel and require success on every instance. We define the Parallel version of $\GRPHP$ to simply be some number of copies of $\GRPHP$ where the correctness condition is that all outputs must be correct. Obviously if there is a $\QNC^0$ circuit for $\GRPHP$ then there is one for $\PGRPHP$. So the $\QNC^0$ upper bound in  \Cref{thm:PGRPHP} is clearly true. The remainder of this section is devoted to proving the lower bound in \Cref{thm:PGRPHP}.

We need to show that $\PGRPHP$ becomes harder for $\AC^0$ circuits, but let us start with showing that a parallel version of the Parity Halving Problem gets harder with more copies, and then we will reduce $\PGRPHP$ to this. Note that although the copies of the game are played in parallel, this does not represent a so-called ``parallel repetition result" for $\GRPHP$ because different copies of the game are played by \emph{different} players.

\subsection{Parallel Parity Halving Problem}

We now define the parallel version of PHP, with $k$ copies of the problem.

\begin{problem}[Parallel Parity Halving Problem, $\PHP_{n,m}^{\otimes k}$]
Given $k$ strings $x_1, \ldots, x_k \in \{ 0, 1 \}^{n}$ of length $n$ as input, promised that each $x_i$ has even parity, output $k$ strings $y_1, \ldots, y_k \in \{ 0, 1 \}^{m}$ of length $m$ such that 
\begin{equation}
|y_i|  \equiv \frac{1}{2} |x_i| \pmod{2}
\end{equation}
for all $1 \leq i \leq k$. 
\end{problem}

In other words, $\PHP_{n,m}^{\otimes k}$ is simply $k$ independent copies of the Parity Halving Problem, and the players win if they solve all of the subgames simultaneously. Clearly a $\QNC^0$ circuit will have no problem solving this, given $k$ cat states.

\begin{prop}
There is a depth-$2$, linear-size $\QNC^{0}/\qpoly$ circuit which solves $\PHP_{n,n}^{\otimes k}$ with certainty. More specifically, the quantum advice is $k$ cat states of size $n$, $|\Cat_n\>^{\otimes k}$.
\end{prop}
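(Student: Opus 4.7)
The plan is to simply run the single-instance quantum circuit from \Cref{thm:PHPquantum} in parallel on each of the $k$ subproblems, taking advantage of the fact that the quantum advice $\ket{\Cat_n}^{\otimes k}$ provides one independent cat state per instance.

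More concretely, I would partition the advice register into $k$ blocks of $n$ qubits each, with the $i$-th block holding the state $\ket{\Cat_n}$. For each $i \in \{1, \ldots, k\}$, I apply the depth-$2$ linear-size circuit $C_i$ from \Cref{thm:PHPquantum} to the $i$-th cat-state block, using the input bits $x_i \in \B^n$ as the classical controls for the $S$ gates, and measuring the qubits in that block in the standard basis after a layer of Hadamards to obtain $y_i$. Since the circuits $C_1, \ldots, C_k$ act on disjoint qubits (each on $n$ qubits of the advice) and use disjoint input bits, they can all be executed simultaneously, so the combined circuit has the same depth ($2$) as each $C_i$ and total size $k \cdot O(n)$, which is linear in the input size $kn$.

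Correctness follows directly from \Cref{thm:PHPquantum} applied to each block: conditioned on the input $x_i$ having even parity (which is guaranteed by the promise), the measurement outcome $y_i$ satisfies $|y_i| \equiv |x_i|/2 \pmod{2}$ with certainty. Because the $k$ blocks are unentangled across different instances and the operations on different blocks commute, the joint distribution of $(y_1, \ldots, y_k)$ is the product of the marginal distributions, so all $k$ output conditions are simultaneously satisfied with certainty.

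There is no real obstacle here; the proposition is essentially an observation that \Cref{thm:PHPquantum} tensorizes. The only thing worth remarking on is that the cat states used as advice are independent across instances, which is precisely why there is no interference between the subproblems and why the success probability does not degrade with $k$. The contrast with the classical lower bound developed in the subsequent subsection is, of course, the whole point: the quantum side trivially parallelizes, whereas we will need Vazirani's XOR lemma to show that the classical side does not.
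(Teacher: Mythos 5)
Your proposal is correct and matches the paper's (unwritten) argument: the paper states this proposition without proof, treating it as immediate from \Cref{thm:PHPquantum} applied independently to each block of advice qubits, which is exactly what you spell out.
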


The classical lower bound, however, will require some new ideas. It is not always easy to show that solving many independent instances of a problem is as hard as solving all of them independently, because of the possibility of correlating success in one instance with success in another. We use Vazirani's XOR Lemma~\cite{Vaz86} to attack this problem indirectly. 

\begin{lem}[Vazirani's XOR Lemma]
\label{lem:vazirani}
Let $D$ be a distribution on $\mathbb F_2^{m}$ and $p_S$ denote the parity function on the set $S\subseteq [m]$, defined as $p_S(x)=\oplus_{i\in S} \, x_i$. If $|\mathbb{E}_{x \in D}[(-1)^{p_{S}(x)}]| \leq \varepsilon$ for every non-empty subset $S \subseteq [m]$, then $D$ is $\varepsilon \cdot 2^{m/2}$ close (in statistical distance) to the uniform distribution over $\mathbb F_2^{m}$. 
\end{lem}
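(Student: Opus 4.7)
The plan is to prove Vazirani's XOR Lemma using elementary Fourier analysis over $\mathbb{F}_2^m$. Let $U$ denote the uniform distribution on $\mathbb{F}_2^m$ and define $f(x) = D(x) - U(x)$, viewed as a real-valued function on the Boolean cube. Our goal is to upper bound $\sum_x |f(x)|$, from which the statistical-distance bound follows immediately (up to the usual factor of $1/2$, which is absorbed into the stated constant).

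The first step is to pass from $\ell_1$ to $\ell_2$ via Cauchy--Schwarz:
\begin{equation}
\sum_x |f(x)| \;\le\; \sqrt{2^m} \cdot \sqrt{\sum_x f(x)^2}.
\end{equation}
So it suffices to control $\sum_x f(x)^2$. I would then expand $f$ in the Fourier basis of characters $\chi_S(x) = (-1)^{p_S(x)}$ for $S \subseteq [m]$, with coefficients $\hat{f}(S) = 2^{-m}\sum_x f(x)\chi_S(x)$, and apply Parseval's identity:
\begin{equation}
\sum_x f(x)^2 \;=\; 2^m \sum_{S \subseteq [m]} \hat{f}(S)^2.
\end{equation}

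The second step is to compute $\hat{f}(S)$ in two cases. For $S = \emptyset$, the character is constantly $1$, so $\hat{f}(\emptyset) = 2^{-m}(\sum_x D(x) - \sum_x U(x)) = 0$, since both $D$ and $U$ are probability distributions. For $S \neq \emptyset$, $\mathbb{E}_{x \sim U}[\chi_S(x)] = 0$, so $\hat{f}(S) = 2^{-m}\,\mathbb{E}_{x \sim D}[(-1)^{p_S(x)}]$. By hypothesis this has magnitude at most $2^{-m}\varepsilon$.

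Plugging these estimates back in gives
\begin{equation}
\sum_x f(x)^2 \;=\; 2^m \sum_{S \neq \emptyset} \hat{f}(S)^2 \;\le\; 2^m \cdot (2^m - 1) \cdot 2^{-2m} \varepsilon^2 \;\le\; \varepsilon^2,
\end{equation}
and combining with the Cauchy--Schwarz bound yields $\sum_x |f(x)| \le \varepsilon \cdot 2^{m/2}$, which is the desired statistical-distance bound. There is no real obstacle here: the only mild subtlety is recognising that the hypothesis on every nonempty $S$ precisely controls all Fourier coefficients of $f$ except the trivial one, which vanishes automatically since $D$ and $U$ have the same total mass. The Cauchy--Schwarz step is lossy in general but is known to be essentially tight in the worst case, matching the $2^{m/2}$ factor in the conclusion.
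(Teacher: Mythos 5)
Your proof is correct, and it is the standard Fourier-analytic argument for Vazirani's XOR lemma: write $f = D - U$, pass from $\ell_1$ to $\ell_2$ by Cauchy--Schwarz, apply Parseval, observe that $\hat f(\emptyset)=0$ because both $D$ and $U$ have total mass $1$, and bound each nontrivial coefficient by the hypothesis. Note, however, that the paper does not prove \Cref{lem:vazirani} at all; it is stated as a citation to \cite{Vaz86}. What the paper does prove from scratch is the weaker companion statement \Cref{lem:vazirani2}, which only bounds $\Pr_{x\sim D}[x=0^m]$. That argument bypasses Cauchy--Schwarz and Parseval entirely: it expands the indicator $\mathbb 1[x=0^m]=2^{-m}\sum_{S}(-1)^{p_S(x)}$ directly and bounds each nontrivial term by $\varepsilon$, giving $2^{-m}+\varepsilon$ rather than the uniform $\ell_1$ bound. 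The trade-off is worth understanding: the paper's direct argument yields a point-probability bound of $2^{-m}+\varepsilon$, useful when $\varepsilon$ is larger than $2^{-m/2}$ (where your $\varepsilon\cdot 2^{m/2}$ becomes vacuous), while your full Cauchy--Schwarz argument recovers the stronger total-variation conclusion in the regime $\varepsilon \ll 2^{-m/2}$. The paper keeps both statements precisely because it needs to choose between these two regimes in its applications (\Cref{thm:Parallel PHP} uses \Cref{lem:vazirani2}, while the general \Cref{lem:vazirani} is stated for context). One cosmetic remark: your bound $\sum_x|f(x)|\le \varepsilon\cdot 2^{m/2}$ already exceeds what is claimed, since statistical distance is half the $\ell_1$ norm, so the ``absorbed factor of $1/2$'' is not being lost but gained.
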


Note that this bound is very intuitive when $\eps=0$. It says that if a distribution has the property that on every subset of bits, if the induced distribution places equal mass on even and odd parity strings, then this distribution must be the uniform distribution. 

To get an effective bound in \Cref{lem:vazirani} we need to guarantee that $\eps < 2^{-m/2}$.
The following simple lemma handles bigger $\eps$ effectively, but only guarantees that the probability to sample the all zeros input is small, as opposed to guaranteeing that the distribution is close to uniform.

\begin{lem}[Special case of the XOR Lemma]
\label{lem:vazirani2}
Let $D$ be a distribution on $\mathbb F_2^{m}$.
 If $|\mathbb{E}_{x \in D}[(-1)^{p_{S}(x)}]| \leq \varepsilon$ for every non-empty subset $S \subseteq [m]$, then, 
 $\Pr_{x\sim D}[ x= 0^m] \le  2^{-m} +  \eps.$
\end{lem}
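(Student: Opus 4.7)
My plan is to prove this via a standard Fourier identity for the indicator of the all-zeros string, which converts the probability in question into a sum of the correlations $\mathbb{E}_{x\sim D}[(-1)^{p_S(x)}]$ that the hypothesis controls.

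The key identity is
\begin{equation}
\mathbf{1}_{x=0^m} \;=\; \prod_{i=1}^{m} \frac{1+(-1)^{x_i}}{2} \;=\; \frac{1}{2^m}\sum_{S\subseteq[m]} (-1)^{p_S(x)},
\end{equation}
which I would verify by expanding the product and noting that the product is $1$ when $x=0^m$ and $0$ otherwise. Taking expectation under $D$ and exchanging the finite sum with the expectation gives
\begin{equation}
\Pr_{x\sim D}[x=0^m] \;=\; \frac{1}{2^m}\sum_{S\subseteq[m]} \mathbb{E}_{x\sim D}\bigl[(-1)^{p_S(x)}\bigr].
\end{equation}

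Now I separate out the term $S=\emptyset$, which contributes exactly $1$, and bound each of the remaining $2^m-1$ terms in absolute value by $\varepsilon$ using the hypothesis. This yields
\begin{equation}
\Pr_{x\sim D}[x=0^m] \;\le\; \frac{1}{2^m}\bigl(1 + (2^m-1)\varepsilon\bigr) \;\le\; 2^{-m} + \varepsilon,
\end{equation}
which is the desired inequality.

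There is no real obstacle here; the only thing to be careful about is the sign handling (the hypothesis is a bound on the absolute value of the bias, so each non-empty $S$ contributes at least $-\varepsilon$ and at most $+\varepsilon$, and we use the upper bound). Compared to \Cref{lem:vazirani}, which needs $\varepsilon < 2^{-m/2}$ to be informative because it controls total variation distance via an $\ell_2$-to-$\ell_1$ conversion paying a factor $2^{m/2}$, the present lemma avoids that loss by asking only about a single atom of the distribution, so a trivial union over the $2^m-1$ nontrivial characters suffices.
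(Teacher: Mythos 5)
Your proof is correct and is essentially the same argument as the paper's: both expand the indicator $\mathbf{1}_{x=0^m}$ into its Fourier expansion $2^{-m}\sum_{S\subseteq[m]}(-1)^{p_S(x)}$, take expectations, separate the $S=\emptyset$ term (contributing $2^{-m}$), and bound the remaining $2^m-1$ character sums by $\varepsilon$ each. The only addition in your write-up is the helpful remark contrasting this with \Cref{lem:vazirani}, which is not part of the paper's proof but is accurate.
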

\begin{proof}
Fix $y\in \mathbb F_2^m$.
Let $f:\B^m \to \B$ be the indicator function that checks whether a given input is equal to $0^m$.
\begin{equation}
f(x) = \prod_{i=1}^{m} \frac{(-1)^{x_i} + 1}{2} = 2^{-m} \sum_{S \subseteq [m]}(-1)^{p_S(x)} = 2^{-m} + 2^{-m} \sum_{\emptyset \neq S\subseteq [m]} (-1)^{p_{S}(x)}
\end{equation}
Thus,
\begin{equation}
	\Pr_{x\sim D}[x=0^m] = \mathbb{E}_{x\sim D}[f(x)] 
	\le  2^{-m} +  2^{-m} \cdot \sum_{\emptyset \neq S\subseteq [m]} \left|\mathbb{E}_{x\sim D}[(-1)^{p_{S}(x)}]\right| \le 2^{-m} + \eps.\;\qedhere
\end{equation}
\end{proof}

The relevance of the XOR lemma is the following: Consider the task of solving $k$ instances of some problem with some class of circuits. Say we know that solving $1$ instance of the problem is hard, in the sense that no circuit from our class solves the problem with probability significantly greater than half on some hard distribution over inputs.
For our task with $k$ instances we will choose the input distribution to be this hard distribution on all instances independently.
Now define a single bit random variable for each instance that indicates whether a given circuit correctly solved that instance on our chosen distribution. 
We know that for $1$ instance this bit, the random variable we defined, is essentially a coin flip. 
If we can prove that each bit is essentially a coin flip, and furthermore that the XOR of any subset of bits is essentially a coin flip, then we will get that the distribution is essentially uniform. 
Which means the probability of getting the all zeros output, which corresponds to the circuit correctly solving all instances, is exponentially small.

Consider an instance of $\PHP_{n,m}^{\otimes k}$. 
If we solve all the instances correctly, then the parity of the entire output of length $km$ (all instances included) is the same as half the entire input's Hamming weight mod $2$. This just follows from the definition of $\PHP$.
If we solve all but one instance correctly, then this condition will not hold. 
In general, we fail on an even number of subgames if the parity of the entire output is the same as half the entire input Hamming weight mod $2$. 
But that is just the usual condition for the Parity Halving Problem on an input of size $kn$ and output of size $km$. 
The only difference is that each instance additionally has an even-parity input. 
Thus we need a stronger version of \Cref{thm:parityhalvinggame} which allows for parity constraints on the input. As in \Cref{thm:parityhalvinggame}, we prove this theorem in the language of non-local games.

\begin{thm}
\label{thm:strongPHG}
Consider the \emph{constrained} Parity Halving Game on $n$ players, in which the inputs of $d_1$ players are fixed, and the remaining $n-d_1$ players are partitioned into $d_2$ parts (of size $\geq 2$) with each part constrained to some fixed parity. The probability of winning this version of the problem is 
\begin{equation}
\Pr[\mathrm{Win}] \leq \frac{1}{2} + 2^{-(n-d_1)/2 + d_2}. 
\end{equation}
\end{thm}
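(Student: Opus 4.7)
My plan is to generalize the Fourier-analytic argument used in the proof of \Cref{thm:parityhalvinggame}. As before, by the easy direction of Yao's minimax, it suffices to consider deterministic strategies; since each player sees only one input bit, the $i$-th output $y_i$ is an affine function of $x_i$ over $\mathbb{F}_2$, so the parity of all outputs is $a + b\cdot x$ for some $a \in \mathbb{F}_2$ and $b \in \mathbb{F}_2^n$. Writing $f(x) = \Re(i^{|x|})$, which is $\pm 1$ on even-parity inputs and $0$ on odd-parity inputs, the correlation of the strategy is
\begin{equation}
\chi = \left|\mathbb{E}_{x \text{ valid}}\bigl[(-1)^{a+b\cdot x} f(x)\bigr]\right|,
\end{equation}
where ``valid'' means consistent with the $d_1$ fixed bits, the $d_2$ group-parity constraints, and overall even parity. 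If the fixed bits and the $d_2$ prescribed parities sum to an odd value, there are no valid inputs and the claim is vacuous; otherwise the overall even-parity condition is automatic, and the number of valid inputs is $2^{(n-d_1)-d_2}$.

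The next step is to factor the sum $\sum_{x \text{ valid}} (-1)^{a+b\cdot x} i^{|x|}$ (whose real part equals the numerator above, since $f$ vanishes on odd-parity inputs) according to the structure of the constraints. The $d_1$ fixed bits each contribute a fourth-root-of-unity factor of magnitude $1$. For each of the $d_2$ groups $T_j$ of size $k_j \geq 2$ with parity constraint $c_j$, the contribution is
\begin{equation}
S_j = \sum_{x_{T_j}:\,\sum_{i\in T_j} x_i = c_j}\, \prod_{i\in T_j} (-1)^{b_i x_i} i^{x_i}
= \tfrac{1}{2}\bigl(A_j + (-1)^{c_j}\,\overline{A_j}\bigr),
\end{equation}
where $A_j = \prod_{i\in T_j}(1 + (-1)^{b_i} i)$. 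Since each factor $1 \pm i$ has magnitude $\sqrt{2}$, we get $|A_j| = 2^{k_j/2}$ and hence $|S_j| \leq 2^{k_j/2}$.

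Multiplying the bounds across all groups and fixed bits, the magnitude of the full sum is at most $\prod_j 2^{k_j/2} = 2^{(n-d_1)/2}$, so its real part is bounded by the same quantity. Dividing by the number of valid inputs yields
\begin{equation}
\chi \;\leq\; \frac{2^{(n-d_1)/2}}{2^{(n-d_1)-d_2}} \;=\; 2^{-(n-d_1)/2 + d_2},
\end{equation}
so $\Pr[\mathrm{Win}] = (1+\chi)/2 \leq \tfrac{1}{2} + 2^{-(n-d_1)/2 + d_2}$, as claimed. The main conceptual step is the factorization of $S_j$ and the realization that a group-parity constraint replaces a ``full'' factor $(1 \pm i)^{k_j}$ (of magnitude $2^{k_j/2}$) by half the sum $A_j \pm \overline{A_j}$, which still has magnitude at most $2^{k_j/2}$ — so relative to the shrinking of the input space (by a factor of $2$ per constraint, accounting for the $+ d_2$ loss in the exponent), we pay nothing in the bound on the Fourier sum. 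The rest is bookkeeping of sizes, and the case $d_2 = 0$ recovers the original \Cref{thm:parityhalvinggame}.
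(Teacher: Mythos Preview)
Your proof is correct and takes essentially the same approach as the paper's. The paper encodes the constraints by defining $f(x) = \frac{1}{2^{d_2}}\prod_k \bigl(i^{\sum S_k} + (-1)^{p_k}(-i)^{\sum S_k}\bigr)$, then expands this product into $2^{d_2}$ terms of the form $\pm(\pm i)^{x_1}\cdots(\pm i)^{x_n}$ and bounds each term's full sum over $\mathbb{F}_2^n$ by $2^{n/2}$; your factorization $\prod_j S_j$ with $S_j = \tfrac{1}{2}(A_j + (-1)^{c_j}\overline{A_j})$ is exactly the same decomposition, just kept grouped rather than fully expanded, and your bound $|S_j|\le 2^{k_j/2}$ is the per-group version of the paper's per-term bound.
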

\begin{proof}
The proof builds on \Cref{thm:parityhalvinggame}. The main change is that we need a different function $f$, to capture the different promise. Suppose for now that $d_1 = 0$. Say the input bits are divided into sets $S_1, \ldots, S_d$ and constrained to have parity $p_1, \ldots, p_d \in \{ 0, 1 \}$. We define $f$ such that 
\begin{equation}
f(x) := \frac{1}{2^d} \prod_{k=1}^{d} \left( i^{\sum S_k} + (-1)^{p_d} (-i)^{\sum S_k} \right)
\end{equation}
The idea is that $\frac{1}{2} \left( i^{\sum S_k} + (-i)^{\sum S_k} \right)$ is exactly $\Re(i^{\sum S_k})$, so it is $0$ if the parity on $S_k$ is odd and $i^{\sum S_k}$ otherwise. Similarly, $\frac{1}{2} \left( i^{\sum S_k} - (-i)^{\sum S_k} \right)$ is $0$ if the parity on $S_k$ is even and $i^{\sum S_k}$ otherwise. Altogether, this means that $f(x)$ is $0$ if the promise is violated and $i^{|x|}$ otherwise.

On the other hand, if we expand $f(x)$, we see that it is a convex combination of terms of the form $\pm (\pm i)^{x_1} (\pm i)^{x_2} \cdots (\pm i)^{x_n}$, and we have essentially already argued that 
\begin{equation}
\left| \sum_{x} (-1)^{a + b \cdot x} (\pm i)^{x_1} \cdots (\pm i)^{x_n} \right| \leq 2^{n/2}.
\end{equation}
It follows that the correlation of $(-1)^{a + b \cdot x}$ and $f(x)$, denoted by $\chi$, is at most $\frac{2^{n/2}}{2^{n-d}} = 2^{-n/2+d}$.

The probability of winning the game on a random input satisfying the promise is $\frac{1 + \chi}{2}$, or 
\begin{equation}
\Pr[\textrm{Win}] \leq \frac{1}{2} + 2^{-n/2 + d - 1}.
\end{equation}
Plugging $d_1$ and $d_2$ back in, we have 
\begin{equation}
\Pr[\textrm{Win}] \leq \frac{1}{2} + 2^{-(n-d_1)/2 + d_2 - 1},
\end{equation}
as desired.
\end{proof}

With this result we can now show that the Parallel Parity Halving Problem is hard for $\AC^0$ circuits.

\begin{thm}
[Parallel PHP is not in $\AC^0$]
\label{thm:Parallel PHP}
Let $k=n$ and $m \in [n, n^2]$.
Any $\AC^{0}$ circuit $F: \B^{nk} \to \B^{mk}$ with depth $d$ and size $s \le \exp((kn)^{1/2d})$ solves $\PHP_{n,m}^{\otimes k}$ with probability at most $\exp(n^2 /(m^{1+o(1)} \cdot O(\log s)^{2(d-1)}))$.
\end{thm}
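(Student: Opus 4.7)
The plan is to combine the multi-switching lemma apparatus of \Cref{thm:AC0} with Vazirani's XOR lemma (\Cref{lem:vazirani}) to bootstrap the $\AC^0$ lower bound for a single instance of $\PHP_{n,m}$ into one for the $k$-fold parallel version. The conceptual point is that Vazirani's lemma reduces the task of bounding $\Pr[\text{all } k \text{ instances solved}]$ to bounding, for every non-empty $S \subseteq [k]$, the correlation $\eps_S := |\mathbb{E}_{x}[(-1)^{\bigoplus_{i \in S} W_i}]|$, where $W_i \in \B$ is the indicator that the circuit fails on instance $i$. A direct computation shows
\[
\bigoplus_{i \in S} W_i \;\equiv\; \sum_{i \in S} |y_i| \;+\; \sum_{i \in S} |x_i|/2 \pmod 2
\]
on valid inputs, which is exactly the $\PHP$ target condition restricted to the instances in $S$, so we can attack $\eps_S$ by the same switching-lemma-to-constrained-game pipeline that underlies the proof of \Cref{thm:AC0}.

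Concretely, I would first apply \Cref{cor:Rossman} to the circuit $F \colon \B^{kn} \to \B^{km}$ with parameters $p = 1/((km)^{1/q} \cdot O(\log s)^{d-1})$, $t = \Theta(pkn)$, and eventually $q = \sqrt{\log(km)}$, reducing $F$ under a random restriction $(\rho, \lambda)$ to a function of locality $\ell = 2^q$ on $\Omega(pkn)$ alive inputs, while losing only a factor of $2$ in the success probability via the averaging argument in the proof of \Cref{thm:AC0}. Then, for each non-empty $S$, I would select an independent set $T$ of alive inputs (with respect to all $km$ output bits) via \Cref{prop:independentinputs} that is \emph{balanced} across instances, meaning $|T_i| := |T \cap \{\text{alive inputs of } x_i\}| \gtrsim p^2 n^2/(m\ell^2)$ for every $i \in [k]$. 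Fixing all alive bits outside $T$ arbitrarily, each output of $F_{\rho,\lambda}$ becomes a function of at most one bit of $T$, so $\sum_{i \in S} |y_i|$ is an affine form $a + b \cdot x$ in $x_T$. The constrained Parity Halving Game analysis from the proof of \Cref{thm:strongPHG}, applied with the product expansion
\[
\prod_{i \in S} \tfrac{i^{|x_i|} + (-i)^{|x_i|}}{2} \cdot \prod_{i \notin S} \tfrac{1 + (-1)^{|x_i|}}{2}
\]
to absorb the per-instance even-parity promises and the $\PHP$-on-$S$ target into one complex-valued factorization, then yields $\eps_S \le 2^{|S| - N_S/2}$ with $N_S := \sum_{i \in S} |T_i|$. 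Plugging into Vazirani's inequality $\Pr[W = 0^k] \le 2^{-k} + 2^{k/2} \max_S \eps_S$, and setting $q = \sqrt{\log(km)}$ and $k = n$, this gives the claimed bound $\exp(-n^2/(m^{1+o(1)} \cdot O(\log s)^{2(d-1)}))$.

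The main obstacle I expect is the balancing step above: there is nothing a priori to stop the common decision-tree path $\lambda$ from concentrating all of its $\le 2t$ bit-fixings inside a single instance, which could drive some $a_i$ (and hence some $|T_i|$) down to zero, making the $S = \{i\}$ correlation bound trivial. I plan to resolve this by taking $t$ to be a small constant fraction of $pn$ rather than $pkn$, so that no single instance can lose more than a constant fraction of its alive bits via path-fixing; the random restriction alone then leaves $\Omega(pn)$ bits alive per instance with probability $1 - k\exp(-\Omega(pn))$ by Chernoff plus a union bound, which still dominates $\delta$ in the regime of interest. Supplementing this with a permutation-symmetrization over the $k$ instances (justified since both the input distribution and success predicate of $\PHP_{n,m}^{\otimes k}$ are invariant under relabeling) should yield the balanced $|T_i|$'s that the rest of the argument requires.
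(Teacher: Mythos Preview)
Your high-level strategy matches the paper's proof almost exactly: apply the multi-switching lemma to $F$ with $q=\sqrt{\log(km)}$, reduce to locality $2^q$, pass to an independent set via \Cref{prop:independentinputs}, invoke the constrained-game bound of \Cref{thm:strongPHG} to control $\eps_S$ for every non-empty $S$, and finish with an XOR lemma. Two points in your execution, however, do not go through as written.

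\textbf{The $2^{k/2}$ factor.} You invoke Vazirani's bound in the form $\Pr[W=0^k]\le 2^{-k}+2^{k/2}\max_S\eps_S$ (this is \Cref{lem:vazirani}). With $k=n$ and $\max_S\eps_S=2^{-\Theta(n')}$ where $n'=\Theta\bigl((pn)^2/(m\,2^{2q})\bigr)$, you would need $n'\gtrsim k=n$ to absorb the $2^{k/2}$; but for $m\ge n$ one always has $n'\le p^2 n<n$, so the bound is vacuous. The paper avoids this by using the sharper special case \Cref{lem:vazirani2}, which gives $\Pr[W=0^\ell]\le 2^{-\ell}+\max_S\eps_S$ with no $2^{\ell/2}$ blowup, and by applying it only to the $\ell$ subgames that survive the balancing step (so that both terms are $2^{-\Omega(n')}$).

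\textbf{Balancing via symmetrization.} You correctly flag that nothing prevents the path $\lambda$ (or, more importantly, the independent set from \Cref{prop:independentinputs}) from being concentrated in a few instances. But permutation-symmetrization does not fix this: symmetry of the problem lets you average the success probability over $F\circ\pi$, yet for each fixed $\pi$ you still run the restriction and \Cref{prop:independentinputs} on a specific circuit, obtaining a specific (possibly unbalanced) $T$. Averaging the $|T_i|$'s over $\pi$ does not produce a single balanced $T$. The paper's fix is much simpler and handles both the path-fixing and the independent-set imbalance at once: after obtaining the global independent set of size $\Omega((pnk)^2/(mk\,2^{2q}))$, \emph{discard} every subgame whose share of surviving inputs is below half the average $n'$. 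A counting argument shows this kills at most half the inputs and leaves $\ell=\Omega(n')$ subgames, each with $\ge n'/2$ inputs; this is exactly what \Cref{thm:strongPHG} and \Cref{lem:vazirani2} need. With this modification (and taking $t=\Theta(pnk)$ as the paper does, rather than your $\Theta(pn)$, which is then unnecessary), your argument becomes the paper's.
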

\begin{proof}
Much like \Cref{thm:AC0}, we assume $F$ solves the problem and randomly restrict it. We pick parameters $q = \sqrt{\log (mk)}$,  $p = 1/(O(\log s)^{d-1} \cdot (mk)^{1/q})$ that are similar to the ones picked in \Cref{thm:AC0}, but adjusted to the input and output lengths. We apply $p$-random restrictions. Most of the time this will simplify $F$ to the point that $F|_{\rho} \in \DT(pnk/4) \circ \DT(q-1)^{mk}$ and $\rho$ keeps at least $pnk/2$ variables alive. However, with probability $\exp(-\Omega(pnk))$ the restriction will fail, and we are forced to assume (pessimistically) that $F|_{\rho}$ solves the problem perfectly in these cases. This probability of failure is negligible compared to $\exp(-n^2 /(m^{1+o(1)} \cdot O(\log s)^{d-1}))$.

Let us assume the random restriction did its job, and now $F|_{\rho}$ is computed by common partial decision tree followed by a forest of depth-$q$ decision trees. 
By averaging argument, it suffices to show that for each leaf $\lambda$ in the partial decision tree, $F_{\rho,\lambda}$ solves the Parallel Parity Halving Problem on the legal inputs consistent with  $(\rho, \lambda)$ with exponentially small probability.

For each leaf in the partial decision tree, $\lambda$, the circuit $F_{\rho,\lambda}$ has $pnk/4$ inputs, $mk$ outputs, and locality $2^q$. By \Cref{prop:independentinputs}, we may further restrict down to a subset of 
$\Omega((p n k)^2 / (m k 2^{2q}))$ inputs so that each output bit depends on at most one input bit. Finally, we restrict one more time to eliminate subgames where fewer than say, half the average number of inputs (which is $\Omega((pn)^2 /(m 2^{2q})$) are unrestricted.  Subgames with too few inputs may be too easy to win and prevent us from using the XOR Lemma.  This last step kills at most half of the input bits that were alive before this step.

The remaining circuit, which we will call $C$, satisfies the following:
\begin{itemize}
    \item $C$ has $\Omega((p n)^2 k / (m 2^{2q}))$ unrestricted inputs.
    \item Each output bit depends on at most one input bit. 
    \item Each subgame is either fixed or has at least $n' = \Omega((pn)^2 /(m 2^{2q}))$ unrestricted bits.
\end{itemize}
Note that there are at least $\ell = \Omega(p^2 n k / (m 2^{2q})) = \Omega(n')$ remaining subgames.
We shall show that the probability to win the remaining subgames is at most $2^{-\Omega(n')}$.

Define a distribution $D$ on $\ell$ bit strings which runs the circuit $C$ on a random input satisfying the promise, and outputs a string of bits, $w \in \{0,1\}^\ell$, one bit for each of the remaining $\ell$ subgames, which is $0$ if the circuit wins the subgame and $1$ if the circuit loses the subgame. We will argue that
\begin{equation}
\abs{\mathbb{E}_{w \in D}[(-1)^{p_S(w)}]} \leq 2^{-\Omega(n')}
\end{equation}
for each non-empty $S \subseteq [\ell]$. That is, the parity of any non-empty subset of the games is exponentially close to a coin flip. Once we show this, \Cref{lem:vazirani2} applies and says that the probability to sample $0^{\ell}$ from $D$ is at most $2^{-\ell} + 2^{-\Omega(n')} \le 2^{-\Omega(n')}$. Thus, we win with probability $2^{-\Omega(n')}$, which  dominates the $\exp(-\Omega(pnk))$ probability that the restriction fails.  

All that remains to show is that for any non-empty subset of  subgames, the circuit loses an even number of subgames with probability exponentially close to $\frac{1}{2}$. Notice that losing an even number of subgames is equivalent to winning a constrained Parity Halving game on the combined inputs and outputs of the subgames,  where the constraints ensure each subgame has even parity input.
For a subset $S \subseteq [\ell]$ of the subgames,
 \Cref{thm:strongPHG} says that the probability of winning is at most $\frac{1}{2} + 2^{-\Omega(|S| n')+|S|}$, since there are at least $|S| n'$ inputs and at most $|S|$ relevant constraints. This is maximized when $|S| = 1$, where we get that the probability of winning is at most $1/2 + 2^{-\Omega(n')}$. 
 
 To finish we note that $n' = \Omega((pn)^2 /(m 2^{2q}) \ge n^2 / (m^{1+o(1)} \cdot O(\log s)^{2(d-1)})$.
\end{proof}

\subsection{Parallel Grid-RPHP}

We are now ready to prove the lower bound for $\PGRPHP$ by reduction to \Cref{thm:Parallel PHP}. As before, we define Grid-$\RPHP_n^{\otimes k}$ to be the problem where we are given $k$ copies of Grid-$\RPHP_n$, and the correctness condition is that all copies are correct. We are now ready to show the lower bound.

\begin{thm}[Parallel $\GRPHP$ is not in $\AC^0$]
\label{thm:GridRPHPlb}
Choose $k=n$. Any $\AC^{0}$  circuit of size $s$ and depth $d$ for Grid-$\RPHP_n^{\otimes k}$ succeeds with probability at most $\exp(-n^{1/2-o(1)}/O(\log s)^{2d})$.
\end{thm}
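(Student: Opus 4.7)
The plan is to parallelize the $\NC^0$ reduction used in \Cref{thm:GRPHPlower}: that reduction took a single instance of $\GRPHP_n$ and used only one extra layer of AND gates (computing the $O(n^{3/2})$ products $d_j x_i$) to solve $\PHP_{n, O(n^{3/2})}$. I will apply this reduction independently on each of the $k = n$ copies, thereby reducing $\PHP_{n, O(n^{3/2})}^{\otimes k}$ to Grid-$\RPHP_n^{\otimes k}$, and then invoke the $\AC^0$ lower bound for Parallel PHP established in \Cref{thm:Parallel PHP}.

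Concretely, suppose $F$ is an $\AC^0$ circuit of size $s$ and depth $d$ that solves Grid-$\RPHP_n^{\otimes k}$ with probability $p$. For each of the $k$ copies I feed the corresponding $x^{(i)} \in \B^n$ into the appropriate coordinates of $F$; from the outputs $(y^{(i)}, d^{(i)})$ of the $i$-th copy I construct, exactly as in the proof of \Cref{thm:GRPHPlower}, an auxiliary string $y'^{(i)} \in \B^{O(n^{3/2})}$ by taking all products $d_j^{(i)} x_\ell^{(i)}$ along the chosen paths in the $\sqrt{n} \times \sqrt{n}$ spanning tree. Each such product is computed by a single AND gate with inputs drawn from the $\PHP$ instance input and from $F$'s output, so adding these gates in parallel across all $k$ copies yields an $\AC^0$ circuit $F'$ of depth $d+1$ and size $s + O(k n^{3/2}) = s + O(n^{5/2})$ that, whenever $F$ succeeds on all $k$ copies, outputs a correct answer to $\PHP_{n, O(n^{3/2})}^{\otimes k}$. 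In particular $F'$ succeeds with probability at least $p$ on the uniform distribution of $k$-tuples of even-parity inputs (which is the same distribution used in \Cref{thm:Parallel PHP}).

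Now I apply \Cref{thm:Parallel PHP} with $k = n$ and $m = O(n^{3/2}) \in [n, n^2]$ to the circuit $F'$. Provided $s + O(n^{5/2}) \le \exp((kn)^{1/2(d+1)}) = \exp(n^{1/(d+1)})$, which follows from the usual size assumption $s \le \exp(n^{1/10d})$, the theorem gives
\begin{equation}
p \;\le\; \exp\!\left(-\frac{n^{2}}{m^{1+o(1)} \cdot O(\log s)^{2d}}\right)
\;=\; \exp\!\left(-\frac{n^{1/2 - o(1)}}{O(\log s)^{2d}}\right),
\end{equation}
using $m = O(n^{3/2})$ and the fact that $\log(s + O(n^{5/2})) = O(\log s)$ in the regime of interest. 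This is exactly the bound claimed in \Cref{thm:GridRPHPlb}.

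The approach is essentially routine once both \Cref{thm:GRPHPlower} and \Cref{thm:Parallel PHP} are in hand: the single-copy reduction is already an $\NC^0$ reduction that only adds one layer of unbounded fan-in ANDs, so it composes trivially across the $k$ parallel copies, and the parameters match up directly. The only mild technical point is keeping track of the depth increment (from $d$ to $d+1$) and ensuring the size blow-up from the added AND gates still lies in the regime where \Cref{thm:Parallel PHP} is applicable; neither is a genuine obstacle. All the real work has already been done in proving the direct-product lower bound for $\PHP^{\otimes k}$ via Vazirani's XOR lemma and the constrained Parity Halving Game bound in \Cref{thm:strongPHG}.
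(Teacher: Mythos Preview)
Your proposal is correct and follows essentially the same approach as the paper: apply the single-copy $\NC^0$ reduction from \Cref{thm:GRPHPlower} independently to each of the $k=n$ copies (adding one layer of AND gates and increasing the size by $O(kn^{3/2})$), then invoke \Cref{thm:Parallel PHP} with $m=O(n^{3/2})$ on the resulting depth-$(d+1)$ circuit. Your tracking of the depth increment and size blow-up is slightly more explicit than the paper's, but the argument is identical.
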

\begin{proof}
We use the reduction from $\PHP_{n,O(n^{3/2})}$ to Grid-$\RPHP_n$ (\Cref{thm:reduction}) on each of the $k$ instances. Note that the reduction does not change the inputs, and only manipulates the output bits. 
Assume $C$ is a circuit of size $s$ and depth $d$ for Grid-$\RPHP_n^{\otimes k}$ that succeeds with probability $\eps$ (over the uniform distribution over inputs that satisfy the promise).
Then, there exists a circuit $C'$ of size $\poly(s)$ and depth $d+1$ that succeeds with probability at least $\eps$ on the same input distribution.
By \Cref{thm:Parallel PHP}, with $m = O(n^{3/2})$, we get that
\begin{equation}
    \eps \le \exp(-n^{1/2-o(1)}\Big/O(\log s)^{2d}),
\end{equation}
which yields the bound stated in the theorem.
\end{proof}

This implies the second part of \Cref{thm:PGRPHP} and completes its proof.

\section{Relation to Hidden Linear Function Problems}
\label{sec:HLF}

Finally, to establish our main result (\Cref{thm:main}), we have to show that $\PGRPHP$ reduces to the \DDHLF problem. Since we have already established the required hardness for $\PGRPHP$ in \Cref{thm:PGRPHP}, we will then be done.

We start by recalling the general Hidden Linear Function problem (\HLF) defined by Bravyi, Gosset, and K\"{o}nig \cite{BGK18}.
\begin{problem}[Hidden Linear Function problem]
We are given as input a symmetric matrix $A \in \mathbb \{ 0, 1 \}^{n \times n}$ and vector $b \in \mathbb \{ 0, 1, 2, 3 \}^n$. From these, define a quadratic form $q \colon \mathbb F_2^{n} \to \mathbb Z_4$ as $q(u) := u^{T} A u + b^{T} u \pmod 4$. Define $\mathcal{L}_q$ as follows:
\begin{equation}
\mathcal{L}_q := \{ u \in \mathbb F_2^{n} :\forall v \in \mathbb F_2^{n}, \, q(u \oplus v) \equiv q(u) + q(v) \pmod{4}\}.
\end{equation}
Bravyi et al.\ \cite{BGK18} show that (i) $\mathcal{L}_q$ is a linear subspace of $\mathbb F_2^{n}$, (ii) for all $u\in\mathcal{L}_q$, $q(u) \in \{ 0, 2 \}$, and (iii) $q$ is linear on $\mathcal{L}_q$.
Since $q$ is linear on $\mathcal{L}_q$, there exists a $p\in  \mathbb F_2^{n}$ such that $q(u) \equiv 2p^T u \pmod 4$ for all $u\in\mathcal{L}_q$. The goal is to output any string $p$ satisfying this condition.
\end{problem}

\begin{thm}
\label{thm:hlfreduction}
There is an $\NC^{0}$ reduction from $\RPHP$ on any graph $G$ to the \HLF problem.
\end{thm}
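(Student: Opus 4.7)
The plan is to realize each RPHP instance on $G = (V, E)$ as an HLF instance on $n = |V| + |E|$ bits, where the HLF output $p$ can be reinterpreted bit-for-bit as an RPHP output $(y^*, d^*)$. I will index the HLF coordinates by $V \sqcup E$, define the symmetric matrix $A \in \{0,1\}^{n \times n}$ by $A_{v,e} = A_{e,v} = 1$ iff $v \in V$ is an endpoint of $e \in E$ (all other entries $0$), and set $b \in \{0,1\}^n$ to be $x$ on the $V$-coordinates and $0$ on the $E$-coordinates. Since $A$ depends only on $G$ and $b$ merely copies $x$, every output bit of $(A, b)$ depends on at most one bit of $x$, and reading the HLF solution $p$ as $(y^*, d^*) \in \mathbb F_2^V \times \mathbb F_2^E$ is the identity map, so the reduction is $\NC^0$ in both directions.

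The resulting quadratic form is
\[
q(y, d) \;=\; \sum_{v \in V} x_v\, y_v \;+\; 2 \sum_{(a,b) \in E} (y_a + y_b)\, d_{(a,b)} \pmod 4,
\]
and I would extract its associated $\mathbb F_2$-bilinear form $B$ via $q(u+u') - q(u) - q(u') \equiv 2 B(u, u') \pmod 4$. Setting $B((y, d), \cdot) = 0$ against $u' = (y', 0)$ and varying $y'$ produces the per-vertex condition $\sum_{e \ni v} d_e \equiv x_v y_v \pmod 2$, and against $u' = (0, d')$ forces $y_a = y_b$ on every edge; by connectivity of $G$ this means $y = c \cdot 1^V$ for some $c \in \{0, 1\}$. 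Hence $\mathcal L_q$ consists precisely of the pairs $(c \cdot 1^V, d)$ with $c \in \{0,1\}$ and $\sum_{e \ni v} d_e = c\, x_v$ for all $v$, and the $c = 1$ branch is non-empty exactly because the RPHP promise guarantees $|x|$ is even.

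The final step is to match the HLF guarantee $q(u) \equiv 2 p^T u \pmod 4$ on $\mathcal L_q$ with the two RPHP requirements, writing $p = (y^*, d^*)$. For $c = 0$ the guarantee reduces to $\langle d^*, d \rangle \equiv 0 \pmod 2$ for every $d$ in the cycle space of $G$, so $d^*$ must lie in the cut space, i.e.\ there exists $z \in \mathbb F_2^V$ with $d^*_{(u,v)} = z_u \oplus z_v$; this is exactly the first RPHP condition. For $c = 1$ the cross terms contribute $4 \sum_e d_e \equiv 0$, so $q(1^V, d) \equiv |x| \pmod 4$, yielding $|y^*| + \langle d^*, d\rangle \equiv |x|/2 \pmod 2$; substituting $d^*_{(a,b)} = z_a \oplus z_b$ and swapping the order of summation turns $\langle d^*, d\rangle$ into $\sum_v z_v \sum_{e \ni v} d_e = \langle z, x\rangle$ by the defining constraint on $d$, giving the second RPHP condition $|y^*| \equiv |x|/2 + \langle z, x\rangle \pmod 2$. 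The main hazard will be the mod-$2$ versus mod-$4$ bookkeeping throughout the computation of $B$ and the $c = 1$ reduction, together with checking that the ambiguity in the choice of $z$ (replaced by $z \oplus 1^V$) does not affect the condition --- which follows from $\langle 1^V, x\rangle = |x| \equiv 0 \pmod 2$.
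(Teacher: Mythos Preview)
Your proposal is correct and uses exactly the same reduction as the paper: coordinates indexed by $V \sqcup E$, $A$ the (symmetrized) incidence matrix, and $b = (x, 0)$. The verification differs only in organization. The paper proceeds by guessing two families of elements of $\mathcal L_q$ --- the vectors $(0,c)$ for $c$ a cycle, and one vector $(1^V, w)$ with $Mw = x$ --- and checking each by a direct mod-$4$ calculation. You instead extract the associated $\mathbb F_2$-bilinear form $B$, characterize $\mathcal L_q$ completely as its radical, and then read off the two RPHP conditions from the $c=0$ and $c=1$ branches. Your route is a bit more systematic (it tells you exactly what $\mathcal L_q$ is, rather than just enough of it), and the cut-space/cycle-space duality you invoke for the $c=0$ case is the conceptual content behind the paper's cycle computation. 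Both arguments are doing the same bookkeeping; yours just packages it more uniformly.
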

\begin{proof}
As discussed, the Relaxed Parity Halving Problem is well defined for any connected graph $G = (V,E)$ (\Cref{problem:RPHP}). Given an even-parity vector $x \in \{ 0, 1 \}^{V}$, the goal is to output $y \in \{ 0, 1 \}^{V}$ and $d \in \{ 0, 1 \}^{E}$, such that the parity of $d$ on any cycle is 0, and the unique (up to complement) $z \in \{ 0, 1 \}^{V}$ satisfying the parity conditions implied by $d$ also satisfies
\begin{equation}
|y| \equiv\frac{|x|}{2} + z^{T} x \pmod{2}.
\end{equation}
Note that if we take the complement of $z$ instead, the condition does not change because $(z^{T} + \overline{z}^{T}) x \equiv 1^{T} x \equiv |x| \equiv 0 \pmod{2}$. 

We now describe our reduction from $\RPHP$ on $G = (V,E)$ to the HLF problem. In our reduction, $n=|V|+|E|$. We define the input $A \in \mathbb \{ 0, 1 \}^{n \times n}$ to HLF from the graph $G$, and the input $b\in \mathbb \{ 0, 1, 2, 3 \}^n$ to HLF from the input $x\in \{ 0, 1 \}^{V}$ to $\RPHP$.  Let $M \in \mathbb F_2^{|V| \times |E|}$ be the incidence matrix of the graph $G$. Then we define $A$ and $b$ as
\begin{equation}
A = \begin{pmatrix}
0 & M \\
M^{T} & 0
\end{pmatrix}, \  \mathrm{and} \   
b = \begin{pmatrix}
x\\
0
\end{pmatrix},
\end{equation}
where the $0$ above refers to the all zeros matrix or vector as appropriate. The solution to this HLF problem is some vector $p$, which we claim solves the Relaxed Parity Halving Problem with the identification $p= \begin{pmatrix} y\\ d \end{pmatrix}$. Let us verify this claim.

With this choice of $A$ and $b$, the quadratic form $q$ becomes 
\begin{align}
q \begin{pmatrix} u_V \\ u_E \end{pmatrix} &\equiv u_V^{T} M u_E + u_E^{T} M^{T} u_V + x^{T} u_V  \pmod{4} \\
&\equiv 2 u_V^{T} M u_E + u_V^{T} x \pmod{4}.\label{eq:quadratic-form}
\end{align}

Now note that since $G$ is connected, $M$ has rank $|V|-1$. Specifically, the column span $\{ Mw : w \in \mathbb F_2^{|E|} \} \subseteq \mathbb F_2^{|V|}$ is the set of all vectors of even parity. 
Since our input to RPHP, $x \in \mathbb F_2^{|V|}$, has even parity, there exists a $w \in \mathbb F_2^{|E|}$ such that $x = Mw$. Let us show that the vector $u$ is in $\mathcal{L}_q$, where 
\begin{equation}
u := \begin{pmatrix}
1 \\ w 
\end{pmatrix},
\end{equation}
and $1$ is the all ones vector of size $|V|$. This means we want that for all $v$, $q(u\oplus v) - q(u) - q(v) \equiv 0 \pmod{4}$. Let us verify this  for an arbitrary vector $v = \begin{pmatrix} v_V \\ v_E \end{pmatrix}$ using the following calculation (performed modulo 4):
\begin{align}
& q\left( \begin{pmatrix} 1 \\ w \end{pmatrix} \oplus \begin{pmatrix} v_{V} \\ v_{E} \end{pmatrix} \right) - q \begin{pmatrix} 1 \\ w \end{pmatrix} - q \begin{pmatrix} v_V \\ v_{E} \end{pmatrix} \\
&\equiv 2 (1 \oplus v_V)^{T} M (w \oplus v_E) + (1 \oplus v_V)^{T} x - 2 (1^{T} M w) - 1^{T} x - 2 v_V^{T} M v_E - v_V^{T} x \\
&\equiv 2 (1 + v_V)^{T} M (w + v_E) + (1 \oplus v_V)^{T} x - 2 (1^{T} M w) - 1^{T} x - 2 v_V^{T} M v_E - v_V^{T} x \label{eq:ab} \\
&\equiv 2 (1^{T} M v_E) + 2 v_V^{T} M w + (1 \oplus v_V)^{T} x - 1^{T} x - v_V^{T} x \\
&\equiv 2 v_V^{T} x + (1 \oplus v_V)^{T} x - 1^{T} x - v_V^{T} x \label{eq:Mwx}\\
&\equiv 2 v_V^{T} x + (1 - v_V)^{T} x - 1^{T} x - v_V^{T} x \label{eq:oneminus}\\
&\equiv 0.
\end{align}
In \cref{eq:ab}, we used that for $a,b\in \B$, we have $2(a\oplus b)\equiv2(a+b) \pmod{4}$. In \cref{eq:Mwx} we used $Mw \equiv x \pmod{2}$ and that $1^{T} M \equiv 0 \pmod{2}$, since $M$ is an incidence matrix and any column has exactly two ones. In \cref{eq:oneminus} we used that for any $a\in\B$, $1\oplus a = 1-a$.

Since $u \in \mathcal{L}_q$, and $q$ is linear on $\mathcal{L}_q$, we have 
\begin{align}\label{eq:qu}
q(u) \equiv 2p^Tu &\equiv 2\begin{pmatrix} y \\ d \end{pmatrix}^T \begin{pmatrix} 1 \\ w \end{pmatrix} \equiv 2(y^{T} 1 + d^{T} w) \pmod{4}.
\end{align}

To prove that the output of HLF on our chosen inputs is a valid output for $\RPHP$, we need to verify the two conditions in \Cref{problem:RPHP}. The first condition requires $d$ to be related to some $z \in \B^V$; the entries of $d$ are differences (along edges) of two entries in $z$. If such a $z$ existed, then we would have $d = M^{T} z$. As noted after \Cref{problem:RPHP}, for $z$ to exist it suffices to show that the parity of $d$ along any cycle is $0$. In other words, we need to show that if $c \in \B^{E}$ is the indicator vector of any cycle, then $d^T c \equiv 0 \pmod{2}$.

To show this, note that for any cycle we have $Mc\equiv 0\pmod{2}$. Since $q ( \begin{smallmatrix} 0 \\ c \end{smallmatrix} ) = 0$, we also have
\begin{align}
q\left( \begin{pmatrix} u_V \\ u_E \end{pmatrix} \oplus \begin{pmatrix} 0 \\ c \end{pmatrix} \right) &\equiv 2 u_V^{T} M (u_E \oplus c) + u_V^{T} x \pmod{4} \\
&\equiv 2 u_V^{T} M u_E + 2 u_V^{T} M c + u_V^{T} x \pmod{4} \\
&\equiv q \begin{pmatrix} u_V \\ u_E \end{pmatrix} + q \begin{pmatrix} 0 \\ c \end{pmatrix} \pmod{4}.
\end{align}
In other words, $( \begin{smallmatrix} 0 \\ c \end{smallmatrix} )$ is in $\mathcal{L}_q$ for all cycles. Thus,
\begin{equation}
0 \equiv q \begin{pmatrix} 0 \\ c \end{pmatrix} \equiv 2 \begin{pmatrix} y \\ d \end{pmatrix}^T \begin{pmatrix} 0 \\ c \end{pmatrix} \equiv 2 d^{T} c \pmod{4},
\end{equation}
which implies that $d^T c \equiv 0 \pmod{2}$.

Now that we know $d = M^{T} z$ for some $z$, it follows that $d^{T} w = z^{T} M w = z^{T} x$, and therefore from \cref{eq:qu} we have $q(u) = 2(|y| + z^{T} x)$. On the other hand, we can also evaluate $q(u)$ from the quadratic form definition, \cref{eq:quadratic-form}, which gives
\begin{align}
q(u) &\equiv 2 (1^{T} M w) + 1^{T} x \pmod{4} \\
&\equiv 3|x| \pmod{4}\\
&\equiv |x| \pmod{4},
\end{align}
where the last equivalence follows from the fact that $x$ has even parity.

Thus solving the HLF problem on our chosen inputs produces a solution such that $2|y| + 2 z^{T} x \equiv |x| \pmod{4}$, which satisfies the second condition of \cref{problem:RPHP}. Hence the outputs to HLF, $y$ and $d$ (and implicitly $z$), satisfy the conditions of the Relaxed Parity Halving Problem.
\end{proof}

The main problem studied in Bravyi, Gosset and K\"{o}nig~\cite{BGK18} is actually a version of HLF on an $N \times N$ grid called the 2D Hidden Linear Function problem (\DDHLF). 
More specifically, in \DDHLF, the matrix $A$ is supported only on the grid in the sense that $A_{ij} = 0$ if there is no edge from vertex $i$ to vertex $j$ on the 2D grid. 
The reduction in \Cref{thm:hlfreduction} roughly preserves the topology of the graph, so it is not difficult to show a reduction from Relaxed Parity Halving on the 2D grid to the \DDHLF.
\begin{cor}\label{cor:Grid}
There is an $\NC^{0}$ reduction from $\GRPHP$ to $\DDHLF$.
\end{cor}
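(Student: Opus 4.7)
The plan is to invoke \Cref{thm:hlfreduction} with $G$ equal to the spanning tree of the $\sqrt{n}\times\sqrt{n}$ grid used in the definition of $\GRPHP$, and then verify that the resulting general $\HLF$ instance already embeds into a $\DDHLF$ instance on a slightly enlarged grid.

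First, apply \Cref{thm:hlfreduction} to reduce $\GRPHP_n$ to an $\HLF$ instance of dimension $|V|+|E|=2n-1$, with matrix $A=\bigl(\begin{smallmatrix}0 & M \\ M^{T} & 0\end{smallmatrix}\bigr)$ and vector $b=\bigl(\begin{smallmatrix}x\\ 0\end{smallmatrix}\bigr)$, where $M$ is the incidence matrix of $G$. The crucial observation about this $A$ is that $A_{ij}\neq 0$ only when one index is a vertex $v\in V$, the other is an edge $e\in E$, and $v$ is an endpoint of $e$. So the nonzero pattern of $A$ is precisely the vertex--edge incidence of $G$, i.e., the edge set of the \emph{subdivision} of $G$.

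Second, embed the variables into a $(2\sqrt{n}-1)\times(2\sqrt{n}-1)$ 2D grid as follows: place the vertex indexed by $(i,j)\in[\sqrt{n}]^2$ at grid position $(2i-1,\,2j-1)$; place the horizontal edge variable for the edge between $(1,j)$ and $(1,j+1)$ at $(1,\,2j)$; place the vertical edge variable for the edge between $(i,j)$ and $(i+1,j)$ at $(2i,\,2j-1)$. Grid positions that are not assigned a variable of $G$ become dummy inputs, with the corresponding rows/columns of $A$ and entries of $b$ set to $0$. By construction, every nonzero entry of $A$ lies between a vertex position $(2i-1,2j-1)$ and an edge position differing from it in a single coordinate by $\pm 1$, so $A$ is supported on 2D grid adjacencies. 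Thus $(A,b)$ is a valid $\DDHLF$ instance, and any solution $p$ restricts (by dropping the dummy coordinates) to a pair $(y,d)$ which, by \Cref{thm:hlfreduction}, solves the original $\GRPHP_n$ instance.

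Third, verify the complexity of the reduction. The embedding is a fixed (input-independent) relabeling of coordinates together with padding by constant zeros, which is clearly computable in $\NC^0$; composing with the $\NC^0$ reduction of \Cref{thm:hlfreduction} leaves us with an $\NC^0$ reduction overall. The only step requiring any verification is the geometric one: that the spanning tree chosen in the definition of $\GRPHP$ admits an embedding placing each vertex--edge incidence pair at grid-adjacent positions. Since every edge of the spanning tree already connects two grid-adjacent vertices of the original $\sqrt{n}\times\sqrt{n}$ grid, this is immediate from the standard ``subdivide each grid edge'' construction, and so there is no genuine obstacle in the proof—the main work was already done in \Cref{thm:hlfreduction}.
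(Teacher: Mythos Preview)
Your proposal is correct and follows essentially the same approach as the paper: invoke \Cref{thm:hlfreduction}, observe that the matrix $A$ is supported on the subdivision of $G$, and note that the subdivision of the $\sqrt{n}\times\sqrt{n}$ grid (or its spanning tree) embeds as a subgraph of the $(2\sqrt{n}-1)\times(2\sqrt{n}-1)$ grid. You are more explicit than the paper about the coordinate embedding and the padding with dummy variables, but the argument is the same.
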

\begin{proof}
The plan is use the same reduction to HLF as above, and observe that the reduction actually creates a $\DDHLF$ instance. Obviously, $\RPHP$ on a grid starts from a grid graph. The reduction creates a matrix with $|V| + |E|$ rows and columns from this graph on $|V|$ vertices. 
We can think of the reduction as transforming the graph by creating a new vertex for each edge, then splitting each edge into two, both incident at the new vertex. The matrix $A$ is then supported on this transformed graph. Fortunately, the transformation takes the $n \times n$ grid graph to a subgraph of the $(2n-1) \times (2n-1)$ grid graph. This means the \HLF instance constructed when starting from a 2D graph is actually a \DDHLF instance, which completes the proof.
\end{proof}

Furthermore, we can solve multiple instances of $\GRPHP$ by solving a single instance of \DDHLF.
\begin{lem}\label{lem:ddhlf_sum}
Consider an instance of \HLF with input $A \in \{ 0, 1 \}^{n \times n}$ and $b \in \{ 0, 1, 2, 3 \}^{n}$ such that $A$ is block diagonal with blocks $A_1, \ldots, A_k$ (all square matrices of various sizes), and the corresponding division of $b$ is $b_1, \ldots, b_k$. Then any solution $p$ to the instance $(A,b)$ is a direct sum of solutions $p_i$ to instances $(A_i,b_i)$ of \HLF. 
\end{lem}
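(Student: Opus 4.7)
The plan is to show that the decomposition of $A$ into blocks induces a clean direct-sum decomposition of the quadratic form $q$, of the subspace $\mathcal{L}_q$, and finally of any valid output $p$.

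First I would observe that because $A$ is block diagonal, the quadratic form factorizes as
\begin{equation}
q(u) \equiv \sum_{i=1}^{k} q_i(u_i) \pmod 4,
\end{equation}
where $u = (u_1, \ldots, u_k)$ is the block decomposition matching that of $A$ and $b$, and $q_i(u_i) := u_i^T A_i u_i + b_i^T u_i \pmod 4$ is the quadratic form associated to the $i$-th HLF instance.

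Next I would prove the key structural fact that $\mathcal{L}_q = \mathcal{L}_{q_1} \oplus \cdots \oplus \mathcal{L}_{q_k}$. For the inclusion $\mathcal{L}_q \subseteq \mathcal{L}_{q_1} \oplus \cdots \oplus \mathcal{L}_{q_k}$, take $u \in \mathcal{L}_q$ and, for each index $i$ and each $v_i \in \mathbb F_2^{n_i}$, apply the defining identity of $\mathcal{L}_q$ to the vector $v$ whose $i$-th block is $v_i$ and whose other blocks are zero. Using the decomposition of $q$ and the fact that $q_j(0) = 0$ for $j \neq i$, all terms outside the $i$-th block cancel and we are left with $q_i(u_i \oplus v_i) \equiv q_i(u_i) + q_i(v_i) \pmod 4$, so $u_i \in \mathcal{L}_{q_i}$. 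The reverse inclusion follows by summing the defining identities for each $\mathcal{L}_{q_i}$ block by block.

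Finally, suppose $p$ solves the HLF instance $(A,b)$, so $q(u) \equiv 2 p^T u \pmod 4$ for all $u \in \mathcal{L}_q$. Write $p = (p_1, \ldots, p_k)$ in the block decomposition. For any fixed $i$ and any $u_i \in \mathcal{L}_{q_i}$, let $u$ be the vector whose $i$-th block is $u_i$ and whose other blocks are zero; by the structural fact above, $u \in \mathcal{L}_q$. Then
\begin{equation}
q_i(u_i) \equiv q(u) \equiv 2 p^T u \equiv 2 p_i^T u_i \pmod 4,
\end{equation}
which is exactly the condition for $p_i$ to be a valid solution to the HLF instance $(A_i, b_i)$. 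The proof contains no real obstacle; the only mild subtlety is verifying the direct-sum decomposition of $\mathcal{L}_q$, which follows cleanly from probing with vectors supported on a single block.
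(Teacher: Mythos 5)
Your proof is correct and takes essentially the same approach as the paper: decompose $q$ as a sum of block quadratic forms, establish $\mathcal{L}_q = \bigoplus_i \mathcal{L}_{q_i}$ by probing with vectors supported on a single block, and then restrict the linearity identity $q(u) \equiv 2p^T u$ to block-supported $u$ to conclude each $p_i$ solves $(A_i, b_i)$. The only cosmetic difference is that the paper reduces to $k=2$ and invokes induction, whereas you handle general $k$ directly, which is equally valid and perhaps a bit cleaner.
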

\begin{proof}
It suffices to prove the result for $k = 2$ blocks, since we can break up $k$ blocks into two blocks of size $1$ and $k-1$ and prove the claim by induction. 

Now let $u=(u_1,u_2)$. Then the block structure of $A$ gives $q(u) \equiv q_1(u_1) + q_2(u_2) \pmod{4}$, where 
\begin{align}
q_1(u_1) := u_1^{T} A_1 u_1 + b_1^{T} u_1 \pmod{4}, \quad \mathrm{and} \quad q_2(u_2) := u_2^{T} A_2 u_2 + b_2^{T} u_2 \pmod{4}.
\end{align}

Suppose we have $u = (u_1, u_2) \in \mathcal{L}_q$. By definition, for all $v = (v_1, v_2) \in \{ 0, 1 \}^{n}$ we have 
\begin{equation}
q(u_1 \oplus v_1, u_2 \oplus v_2) \equiv q(u_1, u_2) + q(v_1, v_2) \pmod{4}.    
\end{equation}
By rearranging the terms, we have
\begin{equation}
q_1(u_1 \oplus v_1) \equiv q_1(u_1) + q_1(v_1) + q_2(u_2) + q_2(v_2) - q_2(u_2 \oplus v_2) \pmod{4}.    
\end{equation}
In particular, if $v_2 = 0$ then for all $v_1$ we have
\begin{equation}
q_1(u_1 \oplus v_1) \equiv q_1(u_1) + q_1(v_1) \pmod{4},    
\end{equation}
which proves $u_1 \in \mathcal{L}_{q_1}$. Similarly, $u_2 \in \mathcal{L}_{q_2}$, and so we have that $\mathcal{L}_q = \mathcal{L}_{q_1} \oplus \mathcal{L}_{q_2}$.

Finally, suppose $p = (p_1, p_2)$ is a valid solution to HLF on input $(A, b)$. For all $u = (u_1, u_2) \in \mathcal{L}_q$ we have
\begin{equation}
q_1(u_1) + q_2(u_2) \equiv q(u) \equiv 2 p^{T} u \equiv 2p_1^{T} u_1 + 2p_2^{T} u_2 \pmod{4}.
\end{equation}
In particular, for all $u_1 \in \mathcal{L}_{q_1}$ we have $(u_1, 0) \in \mathcal{L}_q$ and hence 
\begin{equation}
q_1(u_1) \equiv 2 p_1^{T} u_1 \pmod{4}
\end{equation}
for all $u_1 \in \mathcal{L}_{q_1}$. It follows that $p_1$ is a solution to HLF on the instance $(A_1, b_1)$, and (by symmetry) $p_2$ is a solution to HLF on the instance $(A_2, b_2)$. 

Conversely, if $p_1$ and $p_2$ are solutions to $(A_1, b_1)$ and $(A_2, b_2)$, then $p=(p_1,p_2)$ is a solution to $(A,b)$ since
\begin{equation}
q(u) \equiv q_1(u_1) + q_2(u_2) \equiv 2p_1^{T} u_1 + 2p_2^{T} u_2 \equiv 2 p^{T} u \pmod{4},
\end{equation}
and $u \in \mathcal{L}_q$ implies $u_1 \in \mathcal{L}_{q_1}$ and $u_2 \in \mathcal{L}_{q_2}$.
\end{proof}

\begin{cor}
\label{cor:mainreduction}
Parallel $\GRPHP$ reduces to \DDHLF. 
\end{cor}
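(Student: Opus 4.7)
The plan is to combine the individual \DDHLF reductions given by \corref{Grid} with the block-diagonal decomposition in \lemref{ddhlf_sum}. Given $k$ instances of $\GRPHP$, I first apply \corref{Grid} to each instance separately, producing $k$ pairs $(A_i,b_i)$ where each $A_i$ is supported on (a subgraph of) a 2D grid and each $b_i \in \{0,1,2,3\}^{n_i}$. I then assemble a single \DDHLF instance whose matrix $A$ is block-diagonal with blocks $A_1,\ldots,A_k$ and whose vector $b$ is the concatenation of $b_1,\ldots,b_k$.

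The one step that requires a small geometric observation is that the combined $(A,b)$ is genuinely a \DDHLF instance, i.e.\ that $A$ is supported on a 2D grid graph. For this, I lay out the $k$ small grids inside one large grid of side $O(\sqrt{kn})$, leaving at least one empty row or column of unused vertices between any two adjacent blocks so that no edge of the big grid connects vertices from different blocks. After relabeling vertices, the matrix $A$ built from this layout is supported on the big 2D grid and agrees blockwise with the $A_i$; any entries of $b$ corresponding to unused vertices of the big grid are set to $0$. This construction is clearly carried out in $\NC^0$ (it only permutes inputs and pads with constants).

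Next, I apply \lemref{ddhlf_sum} to the assembled instance: any valid solution $p$ to $(A,b)$ decomposes as $p = (p_1,\ldots,p_k)$ with each $p_i$ being a valid solution to the HLF instance $(A_i,b_i)$. Feeding each $p_i$ back through the $\GRPHP$-to-\DDHLF reduction of \corref{Grid} yields valid outputs $(y_i,d_i)$ for the $i$-th copy of $\GRPHP$. Concatenating these gives a valid output for $\PGRPHP$. Hence an algorithm for \DDHLF yields, under an $\NC^0$ reduction, an algorithm for $\PGRPHP$, which is what we wanted.

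The main (mild) obstacle is simply verifying that the ``embed many small grids into one big grid'' step preserves the \DDHLF structure, and that the padding entries of $b$ do not interfere with the decomposition in \lemref{ddhlf_sum}; both follow because empty rows/columns ensure true block-diagonality of $A$, and \lemref{ddhlf_sum} is stated for arbitrary block-diagonal $A$ with no assumption on the padding of $b$. Combined with \thmref{PGRPHP}, this corollary then immediately yields \thmref{main} for \DDHLF.
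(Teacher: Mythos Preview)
Your proposal is correct and follows essentially the same approach as the paper's proof: reduce each instance via \Cref{cor:Grid}, assemble the resulting \DDHLF instances into a single block-diagonal instance, and invoke \Cref{lem:ddhlf_sum} to split any solution back into per-instance solutions. The paper's proof is terser and leaves the geometric embedding implicit, whereas you spell out the ``tile the small grids into a big grid with buffer rows/columns'' step and the handling of padding vertices; this extra detail is correct and useful but not a different argument.
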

\begin{proof}
The parallel version of $\GRPHP$ is just many instances of $\RPHP$ which we are expected to solve simultaneously. If we reduce each instance to a \DDHLF problem, then combine the instances as in the lemma above, then solving the combined HLF instance gives solutions to all the individual HLF instances, and hence solutions for all the parallel $\GRPHP$ instances. 
\end{proof}

\section{Parity Bending Problem} 
\label{sec:PBP}

We now move on to the Parity Bending Problem discussed in the introduction.

\begin{problem}[Parity Bending Problem, $\PBP_n$] \label{problem: parity bending}
Given an input $x \in \{ 0, 1 \}^{n}$, output a string $y \in \{ 0, 1 \}^{n}$ such that 
\begin{align}
|y| &\equiv 0 \pmod{2} \;\; \text{ if } \;\; \abs{x} \equiv 0 \pmod{3} \text{ and }\\
\abs{y} &\equiv 1 \pmod{2} \; \; \text{ otherwise.}
\end{align}
\end{problem}

Our goal is to prove a quantum advantage for $\QNCcat$ circuits. 
\Cref{thm:PBP} states our main results about this problem.

\begin{repthm}{thm:PBP}
The Parity Bending Problem $(\PBP_n)$ can be solved by a depth-$2$, linear-size quantum circuit starting with the $|\Cat_n\>$ state with probability at least  $3/4$ on any input. 
But there exists an input distribution on which any $\AC^0[2]/\rpoly$ circuit of depth $d$ and size at most $\mathrm{exp}\bigl(n^{\frac{1}{10d}}\bigr)$ only solves the problem with probability $\frac{1}{2}+\frac{1}{n^{\Omega(1)}}$. 
\end{repthm}

Most of this section is devoted to a proof of \Cref{thm:PBP}. We begin by introducing the quantum circuit that solves this problem (\Cref{thm:PBG}) in \Cref{sec:upper}, and then prove the classical lower bound (\Cref{thm:AC02}) in \Cref{sec:lower}. Finally, in \Cref{sec:PPBP} we establish \Cref{thm:RPBP}, which shows that a parallel version of the game can further separate the success probabilities of classical and quantum circuits. 

\subsection{Upper bounds}
\label{sec:upper}

As in the previous section, the $\QNC^0$ circuit solving the Parity Bending Problem can be thought of as an implementation of a quantum strategy for a cooperative non-local game. In the Parity Bending Problem, the players are each given one bit of an $n$-bit string, and want to give an output satisfying the same criterion as in \Cref{problem: parity bending}. It is known that a quantum strategy can win this game with probability larger than classically possible, although even quantum players cannot achieve probability $1$~\cite{watts2018algorithms}. We now describe the quantum strategy.

\begin{thm}[Quantum circuit for $\PBP_n$] \label{thm:PBG}
There is a quantum strategy for the Parity Bending Problem which wins with certainty on inputs with Hamming weight  $0 \pmod 3$ and with probability $3/4$ on any other input. For an $n$-player game, this strategy only requires an $n$-qubit cat state, $|\Cat_n\>$. 
\end{thm}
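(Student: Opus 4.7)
The plan is to mimic the structure of the $\QNC^0$ circuit for $\PHP_n$ (\Cref{thm:PHPquantum}), replacing the phase gate $S=\mathrm{diag}(1,i)$ (whose eigenvalue is a primitive fourth root of unity) with a gate whose eigenvalue is a primitive third root of unity. Concretely, set $\omega := e^{2\pi i/3}$ and let $U := \mathrm{diag}(1,\omega)$. The strategy for the $n$ players is as follows: they share $|\Cat_n\>$, with player $i$ holding the $i$-th qubit; each player whose input bit $x_i$ equals $1$ applies $U$ to their qubit, and every player then applies a Hadamard and measures to obtain $y_i$.

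After the phase layer, the cat state becomes $\tfrac{1}{\sqrt{2}}\bigl(|0^n\> + \omega^{|x|}|1^n\>\bigr)$, since each player with $x_i=1$ contributes exactly one factor of $\omega$ to the $|1^n\>$ branch while leaving $|0^n\>$ unchanged. Setting $\alpha := \omega^{|x|}$, observe that $\alpha$ depends only on $|x|\bmod 3$: it equals $1$ when $|x|\equiv 0\pmod 3$ and lies in $\{\omega,\omega^2\}$ otherwise. After the Hadamard layer, using $H^{\otimes n}|0^n\> = 2^{-n/2}\sum_y|y\>$ and $H^{\otimes n}|1^n\> = 2^{-n/2}\sum_y(-1)^{|y|}|y\>$, the probability of measuring any particular $y \in \B^n$ is
\[ \Pr[y] \;=\; \tfrac{1}{2^{n+1}}\bigl|1 + \alpha (-1)^{|y|}\bigr|^2. \]

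I would then verify the two required probabilities by cases on $|x|\bmod 3$. When $\alpha=1$, only even-parity outcomes survive, so the output has even parity with certainty, matching the requirement for $|x|\equiv 0\pmod 3$. When $\alpha\in\{\omega,\omega^2\}$, the identity $1+\omega+\omega^2 = 0$ gives $1+\omega = -\omega^2$, and so $|1+\alpha|^2 = 1$ and $|1-\alpha|^2 = 3$ (using $|1+\alpha|^2+|1-\alpha|^2 = 4$ since $|\alpha|=1$). Summing $\Pr[y]$ over the $2^{n-1}$ odd-parity strings then yields probability $\tfrac{3}{4}$ that $|y|$ is odd, matching the requirement for $|x|\not\equiv 0\pmod 3$.

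Finally, I would note that the circuit has depth $2$ (a layer of classically controlled phase gates, followed by a layer of single-qubit Hadamards and measurements) and uses only $O(n)$ gates on top of the $|\Cat_n\>$ advice, so it fits all the stated resource bounds. Nothing in this argument is particularly subtle: it is essentially the GHZ-game strategy generalized from modulus $4$ (as used for $\PHP$) to modulus $3$, and the only computation of any substance is the arithmetic of third roots of unity above.
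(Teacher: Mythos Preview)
Your proposal is correct and follows essentially the same approach as the paper's proof: both replace the $S$ gate in the $\PHP$ circuit with the third-root-of-unity phase gate $\mathrm{diag}(1,e^{2\pi i/3})$, and both compute the success probability by analyzing the overlap of the resulting state with the even/odd-parity subspaces. The only cosmetic difference is that the paper projects onto $\tfrac{1}{\sqrt{2}}(|0^n\rangle - |1^n\rangle)$ directly, whereas you sum $\Pr[y]$ over the $2^{n-1}$ odd-parity strings; the arithmetic is identical.
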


\begin{proof}
The quantum strategy for this game is similar to the one for \Cref{prob:PHP} described in \Cref{thm:PHPquantum} and \Cref{fig:php}, except that the controlled-$S$ gates, which add a phase of $i$ if both qubits are $1$, are replaced with controlled-$R_z(2 \pi /3)$ gates, which add a phase of $e^{2\pi i/3}$ when both qubits are $1$. We define $R_z(2 \pi /3)$ to be the matrix $\left(\begin{smallmatrix} 1 & 0 \\ 0 & e^{2\pi i/3} \end{smallmatrix}\right)$.

We now describe the strategy in more detail. Each player starts with their input bit $x_i$ and a single qubit of the $|\Cat_n\>$ state. If their input bit is $x_i = 1$, they apply a rotation $R_z(2 \pi /3)$ to their qubit of the cat state, otherwise they do nothing. 
This is equivalent to applying a controlled-$R_z(2 \pi /3)$ gate with $x_i$ as the control qubit and their qubit of the cat state as the target.
Then they apply a Hadamard gate to their qubit of the cat state and measure that qubit. 

Given an input string $x$, after the controlled rotations and before the Hadamard gate, the cat state has been transformed into
\begin{align}
\bigotimes_{j=0}^n \bigl( R_z(2 \pi /3)^{x_j} \bigr) \frac{1}{\sqrt{2}}\bigl(\ket{0^n} + \ket{1^n}\bigr) &= \frac{1}{\sqrt{2}}\left( \ket{0^n} + \exp(\frac{2 \pi i \abs{x}}{3}) \ket{1^n} \right).
\end{align}

If $|x|\equiv 0\pmod 3$ this state is just the cat state. As noted in \Cref{sec:PHP}
\begin{align}
    H^{\otimes n} \left( \frac{1}{\sqrt{2}}\bigl(\ket{0^n} + \ket{1^n}\bigr) \right) = \ket{\Psi_\textrm{even}},
\end{align}
where $\ket{\Psi_\textrm{even}}$ is a uniform superposition over all even parity $n$-bit strings. So we see the players win the game with probability 1 on an input with Hamming weight $0 \pmod 3$. 

If $|x|\equiv 1 \pmod 3$ or $|x|\equiv 2 \pmod 3$, the state after rotation and before the Hadamard gates is given by
\begin{equation}
    \frac{1}{\sqrt{2}} \bigl(\ket{0^n} + \exp(\pm 2 \pi i /3) \ket{1^n}\bigr).
\end{equation}
Note that this state lives in the span of the states $\frac{1}{\sqrt{2}}\bigl(\ket{0^n} + \ket{1^n}\bigr)$ and $\frac{1}{\sqrt{2}}\bigl(\ket{0^n} - \ket{1^n}\bigr)$, and that 
\begin{align}
    H^{\otimes k} \left(\frac{1}{\sqrt{2}}\bigl(\ket{0^n} - \ket{1^n}\bigr) \right)= \ket{\Psi_\textrm{odd}},
\end{align}
with $\ket{\Psi_\textrm{odd}}$ being the uniform superposition over odd parity $n$-bit strings. Then the players win the game given input with Hamming weight 1 or $2\pmod 3$ with probability exactly
\begin{equation}
    \abs{\frac{1}{2}\bigl(\bra{0^n}-\bra{1^n}\bigr)\bigl(\ket{0^n} + \exp(\pm 2 \pi i /3) \ket{1^n} \bigr)}^2 = \frac{1}{4} (2 + 2\cos(\pm 2 \pi /3)) = \frac{3}{4}. 
    \qedhere
\end{equation}
\end{proof}
It is clear that the strategy described in \Cref{thm:PBG} can be implemented by a $\QNCcat$ circuit. 
By the analysis given above it is also clear that this circuit succeeds at Parity Bending with worst case probability $3/4$, and probability $5/6$ against inputs drawn from a uniform distribution.

\subsection{Lower bounds}
\label{sec:lower}

The main tool used in this section is a reduction from the Parity Bending Problem to the Mod 3 problem. We begin with a formal definition of the Mod 3 problem.

\begin{problem}[Mod 3] \label{problem:mod3}
Given an input $x \in \B^n$, output $y\in\{0,1\}$ such that $y=0$ if $|x|\equiv 0 \pmod {3}$, and $y=1$ otherwise.
\end{problem}

For our purposes, the key feature of Mod 3 problem is that it is hard to solve for $\AC^0[2]$ circuits, as shown by Smolensky~\cite{Smo87}: 

\begin{thm}[Mod 3 is not in $\AC^0{[2]}$]\label{thm:Smolensky}
Any $\AC^0[2]$ circuit of depth $d$ that computes the Mod 3 function (\Cref{problem:mod3}) must have size $\mathrm{exp}\bigl({\Omega(n^{\frac{1}{2d}})}\bigr)$. 
\end{thm}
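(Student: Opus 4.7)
The plan is to deploy the Razborov--Smolensky polynomial method, the canonical tool for lower bounds on $\mathrm{MOD}_p$ against $\AC^0[q]$ for distinct primes $p,q$. I work in characteristic $2$, eventually passing to the field $\mathbb{F}_4$ to exploit its primitive cube root of unity $\omega$; this gives an algebraic handle on $\mathrm{MOD}_3$ via the character $\chi(x) = \omega^{|x|}$.

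The first step is an approximation lemma: every $\AC^0[2]$ circuit of depth $d$ and size $s$ is $1/3$-approximated pointwise by a multilinear polynomial over $\mathbb{F}_2$ of degree $r = O((\log s)^d)$. I prove this inductively on depth. Negation $\lnot x = 1+x$ and unbounded XOR $\bigoplus_i x_i = \sum_i x_i$ are exact of degree $1$, so they cost nothing. For an unbounded OR of fan-in $k$ I use the probabilistic approximator $1 + \prod_{j=1}^{\ell}(1 + \sum_{i\in S_j} y_i)$ with $\ell = O(\log s)$ independent uniformly random subsets $S_j \subseteq [k]$: over $\mathbb{F}_2$ each random linear form vanishes identically when all $y_i = 0$, and otherwise takes each value in $\mathbb{F}_2$ with probability $\tfrac12$, so this errs with probability $2^{-\ell}$ per gate and has degree $\ell$; AND is handled dually via De Morgan. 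A union bound over the $s$ gates, averaged over the random $S_j$'s, yields the global polynomial approximator with expected total error at most $1/3$, so one concrete choice of the random subsets achieves this bound.

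The second step is a hardness result: no polynomial $P \in \mathbb{F}_4[x_1,\ldots,x_n]$ of degree $o(\sqrt n)$ agrees with $\mathrm{MOD}_3$ on more than $2/3$ of $\{0,1\}^n$. Suppose for contradiction $P$ of degree $r$ agrees with $\mathrm{MOD}_3$ on $B$ with $|B| \geq \tfrac{2}{3}\cdot 2^n$. Pick the heavier of the half-sides $\{x:|x|\leq n/2\}$ and $\{x:|x|>n/2\}$; on the high-weight side, perform the substitution $y_i = 1-x_i$ so that every multilinear monomial of degree $>n/2$ is converted into one of degree $<n/2$ in the new variables. Using $P$ as an $\mathbb{F}_4$-representation of the character $\chi$ on $B$, every function on $B$ becomes expressible as a multilinear polynomial of degree at most $n/2 + O(r)$ in the new variables. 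But the dimension of that space is $\sum_{j\leq n/2 + O(r)}\binom{n}{j}$, which by a Chernoff bound on the binomial tail equals $(\tfrac12 + o(1))\cdot 2^n$, strictly less than $|B|$ when $r = o(\sqrt n)$, a contradiction.

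Chaining the two steps: if a depth-$d$ size-$s$ $\AC^0[2]$ circuit computed $\mathrm{MOD}_3$, Step~1 would yield a degree-$O((\log s)^d)$ approximator over $\mathbb{F}_2 \subseteq \mathbb{F}_4$, and Step~2 would then force $(\log s)^d = \Omega(\sqrt n)$, hence $s = \exp(\Omega(n^{1/2d}))$. The main technical obstacle is Step~2: correctly combining the character $\chi$ with the $1-x_i$ substitution so that the final dimension really drops strictly below $|B|$, which requires carefully tracking which variables are complemented on which half-side. Since this is Smolensky's classical theorem, the paper will presumably just cite \cite{Smo87} rather than reprove it.
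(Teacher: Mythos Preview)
You correctly anticipated that the paper does not prove this result: it is stated and attributed to Smolensky~\cite{Smo87} with no proof given. Your Step~1 is the standard Razborov--Smolensky approximation over $\mathbb{F}_2$ and is fine.

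Your Step~2, however, is garbled in two places. First, the polynomial $P$ you obtain approximates the Boolean function $\mathrm{MOD}_3$, not the character $\chi(x)=\omega^{|x|}$; knowing only whether $|x|\equiv 0\pmod 3$ does not let you reconstruct $\omega^{|x|}$, since the residues $1$ and $2$ are indistinguishable. The standard repair is to note that the shifted indicators $[\,|x|\equiv a\pmod 3\,]$ for $a\in\{0,1,2\}$ all have $\AC^0[2]$ circuits of the same size (pad with one or two fixed $1$-inputs), run Step~1 on all three with error $<1/6$, and on the common agreement set $B$ take $P_0+\omega P_1+\omega^2 P_2$ as the degree-$r$ representation of $\chi$. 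Second, the ``pick the heavier half-side and substitute $y_i=1-x_i$'' move does not do what you claim: $\prod_{i\in S}x_i = \prod_{i\in S}(1-y_i)$ still has degree $|S|$ after expansion. The correct substitution is $y_i = 1+(\omega-1)x_i$ applied to \emph{all} coordinates, so that $\prod_{i=1}^n y_i = \omega^{|x|}$; then on $B$ a monomial $\prod_{i\in S}y_i$ with $|S|>n/2$ can be rewritten as $\chi(x)\cdot\prod_{i\notin S}y_i^{-1}$, which has degree $r+(n-|S|)<r+n/2$ since each $y_i^{-1}=1+(\omega^2-1)x_i$ is degree~$1$. That is the dimension drop forcing $r=\Omega(\sqrt{n})$.
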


From this worst-case lower bound it is not too hard to obtain an average-case lower bound.

\begin{lem}[Average-case lower bound for Mod 3]\label{lem:avgmod3}
There exists an input distribution on which any $\AC^0[2]/\rpoly$ circuit of depth $d$ and size at most $\mathrm{exp}\bigl(n^{\frac{1}{10d}}\bigr)$ only solves the Mod 3 function (\Cref{problem:mod3}) with probability $\frac{1}{2}+\frac{1}{n^{\Omega(1)}}$. 
\end{lem}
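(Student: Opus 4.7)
The plan is to reduce the average-case lower bound to a complex-character correlation bound that is implicit in Smolensky's proof of Theorem~\ref{thm:Smolensky}. Define $A_i := \{x \in \{0,1\}^n : |x| \equiv i \pmod{3}\}$ for $i \in \{0,1,2\}$ and take $D$ to be the uniform distribution over $A_0 \cup A_1$. By standard binomial estimates, $|A_i|/2^n = 1/3 \pm o(1)$, so under $D$ the function $\mathrm{Mod}_3$ takes values $0$ and $1$ each with probability $1/2 \pm o(1)$, matching the trivial baseline of $1/2$. As in earlier proofs in this paper, the randomized advice can be averaged out against this fixed input distribution, so it suffices to consider a deterministic $\AC^0[2]$ circuit $C$ of depth $d$ and size $s \leq \exp(n^{1/10d})$.

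The central analytic ingredient is the bound
\[
\left|\mathbb{E}_{x \sim U}\left[(-1)^{C(x)}\,\omega^{|x|}\right]\right| \;\leq\; \gamma \;:=\; O\!\left((\log s)^d/\sqrt{n}\right), \qquad \omega := e^{2\pi i /3},
\]
which follows from the standard Razborov--Smolensky recipe: approximate $C$ by an $\mathbb{F}_2$-polynomial of degree $O((\log s)^d)$ up to inverse-polynomial error, and then bound the correlation of every such polynomial with the character $\omega^{|x|}$ using the algebraic core of Smolensky's argument. Writing $r_i := \Pr_{x \in A_i}[C(x) = 1]$ and using $|A_i|/2^n = 1/3 + o(1)$ together with $1 + \omega + \omega^2 = 0$, the expectation expands as
\[
\mathbb{E}_{x \sim U}\left[(-1)^{C(x)}\,\omega^{|x|}\right] = -\tfrac{2}{3}\left(r_0 + r_1\omega + r_2\omega^2\right) + o(1),
\]
so $|r_0 + r_1\omega + r_2\omega^2| = O(\gamma)$. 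Taking real and imaginary parts yields $|2r_0 - r_1 - r_2| = O(\gamma)$ and $|r_1 - r_2| = O(\gamma)$, which combine via the triangle inequality to give $|r_0 - r_1| = O(\gamma)$.

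To conclude, $\Pr_D[C(x) = \mathrm{Mod}_3(x)] = \tfrac{1}{2}\left[(1 - r_0) + r_1\right] + o(1) = \tfrac{1}{2} + \tfrac{r_1 - r_0}{2} + o(1) \leq \tfrac{1}{2} + O(\gamma)$. Plugging in $s \leq \exp(n^{1/10d})$ gives $(\log s)^d = n^{1/10}$ and hence $\gamma = O(n^{1/10 - 1/2}) = n^{-\Omega(1)}$, as required. The main obstacle is deriving the complex-character correlation bound from Smolensky's machinery: while essentially folklore, the excerpt only imports the worst-case statement (Theorem~\ref{thm:Smolensky}), so a short self-contained derivation, via polynomial approximation followed by Smolensky's algebraic bound on $\mathbb{F}_2$-polynomial correlation with $\omega^{|x|}$, is needed. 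A backup route avoiding complex characters is a hardness-amplification reduction turning any putative $n^{-\omega(1)}$-advantage on $D$ into an $\AC^0[2]$ circuit that exactly computes $\mathrm{Mod}_3$ and contradicts Theorem~\ref{thm:Smolensky}, but this loses polylogarithmic factors and is less direct.
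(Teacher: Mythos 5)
Your main route — reducing the lemma to a complex-character correlation bound
\begin{equation*}
\Bigl|\mathbb{E}_{x\sim U}\bigl[(-1)^{C(x)}\,\omega^{|x|}\bigr]\Bigr|\;\le\;O\!\bigl((\log s)^d/\sqrt{n}\bigr)
\end{equation*}
and then an elementary real/imaginary-part analysis on the balanced distribution over $A_0 \cup A_1$ — is genuinely different from the paper's proof, and the algebra you carry out ($\sum_i \omega^i = 0$, the expansion into $r_0 + r_1\omega + r_2\omega^2$, the triangle-inequality step to get $|r_0 - r_1| = O(\gamma)$, and the final reassembly of $\Pr_D[\text{correct}]$) is all correct. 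The paper, by contrast, does not touch Fourier/character analysis at all: it negates the conclusion, invokes Yao's minimax principle to turn ``for every distribution there is a good circuit'' into a single randomized circuit that is good on every input, amplifies the $1/2 + n^{-o(1)}$ advantage to $0.99$ by sampling $O(1/\eps^2)$ copies and taking an exact majority (cheap because $1/\eps^2 = n^{o(1)}$), then boosts $0.99$ to $1 - 2^{-n}$ with an Ajtai--Ben-Or approximate majority, and finally derandomizes by fixing the seed to get a single deterministic $\AC^0[2]$ circuit of depth $2d + O(1)$ and size $\exp(n^{1/10d})$ computing MOD3 exactly, contradicting Theorem~\ref{thm:Smolensky}. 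Your direct approach is tighter in that it does not double the depth, while the paper's is fully black-box in the worst-case theorem. Your ``backup route'' is precisely the paper's argument, though you misname the advantage as ``$n^{-\omega(1)}$'' when you mean ``$n^{-o(1)}$'' (i.e., larger than any polynomial), and the loss is more than polylogarithmic — the depth roughly doubles — so the characterization of that route's cost is somewhat off.

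The genuine gap in your main route is the one you flag yourself: the correlation bound is not Theorem~\ref{thm:Smolensky}. That theorem is a pure worst-case size lower bound and gives you no access to the polynomial-approximation machinery needed to get a quantitative correlation with $\omega^{|x|}$. You would need to (i) invoke Razborov's approximation of $\AC^0[2]$ by degree-$O((\log s)^d)$ polynomials over $\mathbb F_2$, carefully tracking the error $\delta$ versus degree trade-off, and (ii) prove Smolensky's algebraic lemma bounding the correlation of any degree-$t$ $\mathbb F_2$-polynomial with the cube-root-of-unity character (this is the $\mathbb F_4$ change-of-variables argument). Neither step is present in the paper or follows by citation from the stated Theorem~\ref{thm:Smolensky}. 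Without them, the ``central analytic ingredient'' is an assertion, not a derived fact, and the proof does not close. If you supply a self-contained proof of the correlation bound, your route becomes a complete — and arguably cleaner — alternative proof of the lemma.
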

\begin{proof}
Toward a contradiction, assume that for all input distributions there exists an $\AC^0[2]/\rpoly$ circuit of depth $d$ and size $\mathrm{exp}\bigl(n^{\frac{1}{10d}}\bigr)$ that solves the Mod 3 problem with probability $1/2+\eps$ for $\eps = 1/n^{o(1)}$.

Then by Yao's minimax principle, there exists a probability distribution over $\AC^0[2]/\rpoly$ circuits, or equivalently over $\AC^0[2]$ circuits, that solves the Mod 3 problem with probability $1/2+\eps$ on every input. By sampling $O(1/\eps^2)$ $\AC^0[2]$ circuits from this probability distribution and taking the majority vote of their outcomes, we get a new $\AC^0[2]$ circuit that solves the Mod 3 problem with probability at least $0.99$ on every input. Now $1/\eps^2 = n^{o(1)}$, and it is easy to construct a depth-$d$ circuit of size $\exp(m^{O(1/d)})$ to compute the majority of $m$ variables~\cite{Has86}. Hence this majority circuit has depth $d$ and size  $\exp(n^{o(1/d)})$, which doubles the depth and does not significantly increase the size of our circuit.

Now we can amplify success probability $0.99$ to $1-\exp(-n)$ by again sampling $O(n)$ circuits that succeed with probability $0.99$ and taking their majority vote. This majority vote can be performed in $\AC^0$, since we only need to perform an approximate majority as constructed by Ajtai and Ben-Or~\cite{AB84}. 

Since this distribution over $\AC^0[2]$ circuits fails with probability less than $2^{-n}$, there exists one circuit in the distribution that works for all inputs. This yields an $\AC^0[2]$ circuit 
of depth $2d+O(1)$ and size $\mathrm{exp}\bigl(n^{\frac{1}{10d}}\bigr)$
computing Mod 3, which contradicts \Cref{thm:Smolensky}.    
\end{proof}

We can finally use this average-case bound to show our lower bound for the Parity Bending Problem.

\begin{thm}[$\PBP$ is not in $\AC^0{[2]}$]\label{thm:AC02}
There exists an input distribution on which any $\AC^0[2]/\rpoly$ circuit of depth $d$ and size at most $\mathrm{exp}\bigl(n^{\frac{1}{10d}}\bigr)$ only solves $\PBP_n$ with probability $\frac{1}{2}+\frac{1}{n^{\Omega(1)}}$. 
\end{thm}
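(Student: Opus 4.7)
The plan is to reduce the Parity Bending Problem to the Mod 3 problem, so that hardness of Mod 3 for $\AC^0[2]$ (already established average-case in \Cref{lem:avgmod3}) transfers to $\PBP$. The key observation is that by the definition of $\PBP$, if $y = (y_1, \ldots, y_n)$ is any valid output for input $x$, then the single bit $\bigoplus_{i=1}^n y_i$ is exactly the Mod 3 indicator $[|x| \not\equiv 0 \pmod{3}]$. Thus, any circuit producing a correct $\PBP$ output immediately produces a correct Mod 3 output just by taking the XOR of its outputs --- and a single unbounded fan-in XOR gate is free in $\AC^0[2]$.

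More concretely, I would fix the input distribution to be the hard distribution from \Cref{lem:avgmod3}, say $\mathcal{D}$. Suppose toward contradiction that there exists an $\AC^0[2]/\rpoly$ circuit $C$ of depth $d$ and size at most $\exp(n^{1/10d})$ that solves $\PBP_n$ on $\mathcal{D}$ with probability $\frac{1}{2} + \delta$. Form the new circuit $C'(x) := \bigoplus_{i=1}^n C(x)_i$ by stacking a single XOR gate on top of $C$. Then $C'$ is an $\AC^0[2]/\rpoly$ circuit of depth $d+1$ and size at most $\exp(n^{1/10d}) + 1$, and whenever $C$ correctly solves $\PBP$ on input $x$, $C'$ correctly solves Mod 3 on $x$. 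Hence $C'$ solves Mod 3 on $\mathcal{D}$ with probability at least $\frac{1}{2} + \delta$.

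Applying \Cref{lem:avgmod3} to $C'$ (with depth parameter $d+1$) forces $\delta \le 1/n^{\Omega(1)}$, which is the claimed bound. The size/depth parameters are compatible because $\exp(n^{1/10d}) \le \exp(n^{1/10(d+1)+o(1)})$ and the additive $+1$ in size is negligible; one may also absorb the $d \mapsto d+1$ shift into the $\Omega(1)$ in the exponent of $n$. There is essentially no obstacle here: the only thing to verify carefully is that the reduction preserves the input distribution (trivial, since we leave the input untouched) and that composition with a single XOR gate does not break the $\AC^0[2]/\rpoly$ structure (immediate, since XOR is a gate in the basis). The main ``work'' was already done in proving \Cref{lem:avgmod3} from Smolensky's worst-case bound \cite{Smo87}, which used standard amplification via approximate majority.
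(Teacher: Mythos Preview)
Your proposal is correct and follows essentially the same approach as the paper: append a single parity gate to the output of the $\PBP$ circuit to obtain a Mod~3 circuit with the same success probability, then invoke \Cref{lem:avgmod3}. The paper's proof is just a one-sentence version of your argument.
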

\begin{proof}
Observe that any $\AC^0[2]$ circuit which solves the Parity Bending Problem with some probability can be extended to one that solves the Mod 3 problem with the same probability by adding a final parity gate over the output of the original circuit. The result then follows from \Cref{lem:avgmod3}.
\end{proof}

\subsection{Parallel Parity Bending Problem}\label{sec:PPBP}

Now our goal is to establish \Cref{thm:RPBP}, which strengthens the separation shown above. In this section we will consider  input and output strings over the ternary alphabet $\{0,1,2\}$.
Since we are talking about Boolean circuits manipulating these symbols, we actually mean that we encode these trits in binary using some canonical encoding, e.g., $\{0\mapsto 00, 1\mapsto 01, 2\mapsto 11\}$.
For a vector $x\in \{0,1,2\}^n$ we denote by $|x| = \sum_{i=1}^{n} {x_i}$.

To simplify our lower bounds, we modify the Parity Bending Problem to accept inputs drawn from $\{0,1,2\}^n$ when we move to the parallel version of the problem.

\begin{problem}[$k$-Parallel Parity Bending Problem] \label{problem:RPBP}
Given inputs $x_1, \ldots , x_k$ with $x_i \in \{0,1,2\}^n$ for all $i \in [k]$,  produce outputs $y_1, \ldots y_k \in \B^n$ such that $y_i$ satisfies:
\begin{align}
    \abs{y_i} &\equiv 0 \pmod{2} \; \text{  and  } \; \abs{x_i} \equiv 0 \pmod{3} \; \text{ or } \\
    \abs{y_i} &\equiv 1 \pmod{2} \; \text{ and } \; \abs{x_i} \not\equiv 0 \pmod{3} 
\end{align}
for at least a $\frac{2}{3} + 0.05$ fraction of the $i \in [k]$.
\end{problem}

This problem is $k$ copies of a problem very similar to $\PBP$, except that the inputs are now in $\{0,1,2\}$ instead of being in $\B$. But the quantum algorithm described in \Cref{sec:upper} can be easily modified to work in this case, by applying a controlled gate that applies the phase $\exp(2\pi i/3)$ when the control and target are $11$, and applies the phase $\exp(4\pi i/3)$ when the control and target are $21$.
By using this strategy for each individual gate, we get a $\QNC^0/\qpoly$ circuit which solves 
any given instance of the Parity Bending Problem with probability at least $3/4$. 
Since the Parallel Parity Bending Problem only requires $2/3+0.05$ of the instances to be solved correctly, by using this quantum strategy for each instance independently, the quantum circuit
solves the Parallel Parity Bending Problem with probability $1 - o(1)$. This establishes the quantum upper bound in \Cref{thm:RPBP}.

We now show that this problem is hard for $\AC^0[2]/\rpoly$ circuits.
We start by introducing a related problem:

\begin{problem}[3 Output Mod 3] \label{problem:3outMod3}
Given an input $x \in \{0,1,2\}^n$, output a trit $y \in \{0,1,2\}$ such that $y \equiv |x| \pmod {3}$.
\end{problem}

As expected, an $\AC^{0}[2]$ circuit cannot solve this problem. In fact, on the uniform distribution, an $\AC^{0}[2]$ circuit succeeds with probability close to $1/3$, which is trivially achieved by a circuit that just outputs $0$. The easiest way to see this is by using random self-reducibility.

\begin{lem}[Worst case to average case]
\label{lem:mod3selfreduce}
Suppose there is an $\AC^{0}[2]/\rpoly$ circuit $C$ of size $S$ and depth $d$ that solves \Cref{problem:3outMod3} on a uniformly random input with probability $1/3 + \varepsilon$ for some $\varepsilon$. That is,
\begin{equation}
\Pr_{x \in \{0,1,2\}^{n}} \left[C(x) \equiv |x| \pmod{3}\right] = \frac{1}{3} + \varepsilon.    
\end{equation}
Then there exists an $\AC^{0}[2]/\rpoly$ circuit $C'$ of depth $d+O(1)$ and size $S+O(n)$ such that for any $x \in \{0,1,2\}^{n}$, 
\begin{align}
&\Pr[C'(x) \equiv |x| \pmod{3}] = \frac{1}{3} + \eps, \quad \mathrm{and} \\    
&\Pr[C'(x) \equiv |x| + 1 \pmod{3}] = \Pr[C'(x) \equiv |x|+2 \pmod{3}] = \frac{1}{3} - \frac{\varepsilon}{2}.
\end{align}
\end{lem}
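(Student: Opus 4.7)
The plan is to combine random self-reducibility with a $\mathbb{Z}_3$ symmetrization trick, absorbing all extra randomness into the $\rpoly$ advice. First I would exploit that for a uniformly random $r \in \{0,1,2\}^n$ and any fixed $x$, the shifted input $x' := x + r \pmod{3}$ (componentwise) is uniform in $\{0,1,2\}^n$ and satisfies $|x'| \equiv |x| + |r| \pmod{3}$. The $\rpoly$ advice will include $r$ together with the precomputed constant $s := |r| \pmod{3}$. Define
\begin{equation}
    C'_0(x) \;:=\; C(x + r) \;-\; s \pmod{3},
\end{equation}
where the shift by $r$ and the final subtraction of a constant trit are performed by a componentwise, constant-depth, $O(n)$-size Boolean subcircuit. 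For every fixed $x$, the event $\{C'_0(x) \equiv |x| \pmod{3}\}$ is exactly the event $\{C(x') \equiv |x'| \pmod{3}\}$ with $x'$ uniform, so its probability is $\tfrac{1}{3} + \eps$ independently of $x$. By the same argument, the two error probabilities $p_1 := \Pr[C'_0(x) \equiv |x|+1 \pmod 3]$ and $p_2 := \Pr[C'_0(x) \equiv |x|+2 \pmod 3]$ are also independent of $x$ and satisfy $p_1 + p_2 = \tfrac{2}{3} - \eps$.

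These two error probabilities need not be equal, so the second step is to symmetrize them using the $\mathbb{Z}_3$-automorphism $t \mapsto -t \pmod{3}$ that fixes $0$ and swaps $1$ and $2$. Let $\tilde{x}$ denote the componentwise negation of $x$, so $|\tilde{x}| \equiv -|x| \pmod{3}$, and define $C'_1(x) := -C'_0(\tilde{x}) \pmod{3}$. A direct check then gives $\Pr[C'_1(x) \equiv |x|] = \tfrac{1}{3} + \eps$ while $\Pr[C'_1(x) \equiv |x|+1] = p_2$ and $\Pr[C'_1(x) \equiv |x|+2] = p_1$, i.e., the two failure modes get swapped. Finally, I would draw one additional uniform advice bit $b \in \{0,1\}$ and set $C'(x) := C'_b(x)$, implemented with a constant-depth multiplexer. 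Averaging over $b$ leaves the correctness probability at $\tfrac{1}{3} + \eps$ while forcing each error probability to $(p_1 + p_2)/2 = \tfrac{1}{3} - \eps/2$, which is exactly the promised distribution.

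For the size and depth accounting, each added operation (componentwise trit addition with $r$, componentwise trit negation, subtraction of a constant trit, and the one-bit multiplexer) is a fixed finite Boolean operation per coordinate, implementable in $\AC^0$ with $O(1)$ depth and $O(n)$ size, so $C'$ has depth $d + O(1)$ and size $S + O(n)$ as required, and remains an $\AC^0[2]/\rpoly$ circuit. I do not expect a real obstacle beyond carefully verifying these probabilities: the one conceptually non-routine step is the symmetrization, which is necessary because random shifts alone only make the correct-answer probability $x$-independent but leave the split between the two failure modes potentially skewed; composing with the $t \mapsto -t$ symmetry and averaging restores the balance.
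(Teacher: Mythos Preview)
Your proof is correct and follows essentially the same approach as the paper: a random additive shift to make the success probability input-independent, combined with a random $\pm 1$ multiplicative symmetrization to balance the two failure modes. The only cosmetic difference is that the paper folds both steps into the single expression $C'(x) = (C(a x + b) - |b|)/a$ with $a \in \{1,-1\}$ and uses a telescoping trick to sample the pair $(b, |b| \bmod 3)$ inside $\AC^0$, whereas you (equivalently, and arguably more cleanly given the $\rpoly$ model) absorb $(r, |r| \bmod 3)$ directly into the advice and present the sign flip as a separate coin toss.
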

\begin{proof}
Although $\AC^0$ circuits cannot compute $|x| \bmod 3$ for an input string $x \in \{ 0, 1 ,2 \}^{n}$, it is possible for an $\AC^0$ circuit to sample a uniformly random vector $b \in \{ 0, 1 ,2 \}^{n}$ and its Hamming weight mod 3, $|b| \bmod{3}$, as follows.

Sample random trits\footnote{Technically, an $\AC^0$ cannot sample a trit with probability exactly $1/3$, but the probability can be made exponentially close to $1/3$.} $c \in \{ 0, 1, 2 \}^{n}$ and set $b_i = c_{i+1} - c_i$ for all $1 \leq i \leq n-1$ and $b_n = -c_n$. 
We claim that $b$ is a uniformly random string over $\{ 0, 1, 2 \}^{n}$. To see this, observe that $b_n$ equals a uniformly random trit, and furthermore for every $i \in [n-1]$, $b_i$  is uniformly random over $\{ 0, 1, 2 \}$, even conditioned on $c_{i+1}, \ldots, c_n$, and therefore $b_i$ is also uniformly random conditioned on $b_{i+1}, \ldots, b_{n}$. It follows that $b$ is a uniformly random string over $\{ 0, 1, 2 \}^{n}$. However, the advantage of this method of producing a random string (as opposed to simply sampling a random string) is that we know $|b|$ since 
\begin{equation}
|b| \equiv \sum_{i=1}^{n} b_i \equiv \sum_{i=1}^{n-1} (c_{i+1} - c_i) - c_n \equiv -c_1 \pmod{3}.    
\end{equation}

Now we use a random self-reduction to construct the claimed circuit $C'$. The circuit $C'$ first  samples a random $a \in \{ 1, -1 \}$ and a random $b \in \{ 0, 1, 2 \}^{n}$ with known Hamming weight $|b|$, as described above. Then let $C'$ output 
\begin{equation}\label{eq:modarith}
C'(x) := \frac{C(a\cdot x + b \bmod{3}) - |b|}{a} \pmod{3}.     
\end{equation}
It is clear that $|a \cdot x + b| \equiv a |x| + |b| \bmod{3}$, and that $a \cdot x + b$ is uniformly random regardless of $a$ and $x$, so we have 
\begin{align}
\Pr_{a,b}[C'(x) \equiv |x| + k \pmod{3}] &= \Pr_{a,b}[C(a \cdot x + b) \equiv a |x| + |b| + ak \pmod 3] \\
&= \Pr_{a,b}[C(a \cdot x + b) \equiv |a \cdot x + b| + ak \pmod 3] \\
&= \Pr_{y \in \{0,1,2\}^{n},a}[C(y) \equiv |y| + ak \pmod 3].
\end{align}
In particular, when $k = 0$ we have 
\begin{equation}
\Pr_{a,b}[C'(x) \equiv |x| \pmod{3}] = \Pr_y[C(y) \equiv |y| \pmod{3}] = \frac{1}{3} + \varepsilon.    
\end{equation}
When $k \neq 0$, we observe that $ak \neq 0$ is uniformly random and independent of $y$, so 
\begin{equation}
    \Pr[C'(x) \equiv |x| + 1 \pmod 3] = \Pr[C'(x) \equiv |x| + 2 \pmod 3]= \frac{1}{3} - \frac{\varepsilon}{2}.
\end{equation}

Observe that the sampling circuit above is of constant depth and linear size. The modular arithmetic performed in \cref{eq:modarith} can be performed by a constant-sized circuit. Overall this reduction increases the depth by a constant and the size by $O(n)$.
\end{proof}

\begin{lem}[Average-case lower bound]
\label{lem:3outMod3}
An $\AC^0[2]/\rpoly$ circuit of depth $d$ and size at most $\mathrm{exp}\bigl(n^{\frac{1}{10d}}\bigr)$ solves \Cref{problem:3outMod3} on a uniform distribution with probability at most $\frac{1}{3} + \frac{1}{n^{\Omega(1)}}$. 
\end{lem}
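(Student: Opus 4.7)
The plan is to apply the random self-reduction of Lemma~\ref{lem:mod3selfreduce}, amplify the success probability on every input to $1 - 3^{-n}$ via a two-stage plurality construction in $\AC^0$, derandomize by union bound, and reduce to Boolean Mod~3 to contradict Theorem~\ref{thm:Smolensky}.

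First I would suppose toward contradiction that an $\AC^0[2]/\rpoly$ circuit $C$ of depth $d$ and size $S \le \exp(n^{1/10d})$ solves \Cref{problem:3outMod3} on a uniformly random input with probability $\tfrac{1}{3} + \eps$, where $\eps \ge 1/n^{o(1)}$. Applying Lemma~\ref{lem:mod3selfreduce} produces an $\AC^0[2]/\rpoly$ circuit $C'$ of depth $d + O(1)$ and size $S + O(n)$ such that on every input $x$, $C'(x)$ equals $|x| \bmod 3$ with probability exactly $\tfrac{1}{3} + \eps$ and each wrong value with probability exactly $\tfrac{1}{3} - \eps/2$.

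The next step is a two-stage amplification. In stage one I take $T_1 = \Theta(1/\eps^2) = n^{o(1)}$ independent copies of $C'$ on the same input $x$ (each with fresh randomness and advice) and output the plurality of their $T_1$ trit outputs. Since this plurality gadget acts on only $O(n^{o(1)})$ Boolean bits, it can be realized by a depth-$2$ CNF of size $\exp(n^{o(1)})$; a Chernoff bound shows that on every input the stage-$1$ output equals $|x| \bmod 3$ with probability at least $0.99$ (and each wrong value with probability at most $0.005$). In stage two I take $T_2 = \Theta(n)$ independent copies of the stage-$1$ circuit and extract the plurality using the Ajtai--Ben-Or approximate-majority construction: for each $i \in \{0,1,2\}$, test whether the number of stage-$1$ outputs equal to $i$ exceeds $\tfrac{2}{3}T_2$, an approximate-majority question with constant gap computable in constant depth and $\poly(n)$ size. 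A second Chernoff bound gives correctness on every input with probability at least $1 - 3^{-(n+1)}$.

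Composing these steps yields an $\AC^0[2]/\rpoly$ circuit of depth $d + O(1)$ and size $n^{O(1)} \cdot S + \exp(n^{o(1)}) \le \exp(2 n^{1/10d})$ that outputs $|x|\bmod 3$ on every $x$ with probability at least $1 - 3^{-(n+1)}$. A union bound over the $3^n$ inputs shows that some fixing of the randomness and advice gives a deterministic $\AC^0[2]$ circuit of the same parameters that solves \Cref{problem:3outMod3} on every input exactly. Applying a final OR gate to the trit encoding converts this into a deterministic $\AC^0[2]$ circuit of depth $d + O(1)$ and size $\exp(2 n^{1/10d})$ that computes Boolean Mod~3 (\Cref{problem:mod3}) exactly, contradicting Theorem~\ref{thm:Smolensky}, which forces such a circuit to have size at least $\exp(\Omega(n^{1/(2d + O(1))}))$ for $d$ and $n$ sufficiently large.

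The main subtle step is fitting the first amplification stage into the size budget; this relies crucially on $\eps \ge 1/n^{o(1)}$, which makes $T_1 = n^{o(1)}$ and hence allows a brute-force plurality of size $\exp(n^{o(1)}) \ll \exp(n^{1/10d})$. Without this hypothesis we would need a genuine $\AC^0$ approximate-plurality construction with sub-constant threshold gap, which is considerably more delicate.
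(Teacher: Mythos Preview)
Your proof is correct but takes a longer route than the paper. After applying Lemma~\ref{lem:mod3selfreduce}, the paper avoids redoing the two-stage amplification and derandomization entirely by converting the trit-valued circuit $C'$ directly into a Boolean-valued circuit for Mod~3 (\Cref{problem:mod3}) that retains advantage $\Theta(\eps)$: with probability $\tfrac14$ output $0$, and with probability $\tfrac34$ output $0$ if $C'(x)=0$ and $1$ otherwise. A short calculation using the exact biases guaranteed by Lemma~\ref{lem:mod3selfreduce} shows this succeeds on \emph{every} input with probability at least $\tfrac12 + \tfrac{3}{8}\eps$, so Lemma~\ref{lem:avgmod3} applies directly and forces $\eps = n^{-\Omega(1)}$. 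Your approach instead amplifies at the trit level and only reduces to Boolean Mod~3 at the very end, which means you essentially reprove the content of Lemma~\ref{lem:avgmod3} from scratch for the three-output variant. Both routes are valid; the paper's is shorter because it reuses the amplification and derandomization already packaged in Lemma~\ref{lem:avgmod3}, while yours is more self-contained and does not rely on the $\tfrac14$/$\tfrac34$ balancing trick.
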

\begin{proof}
Let $A$ be the circuit that solves \Cref{problem:3outMod3} on the uniform distribution with probability $\frac{1}{3} + \varepsilon$. 
By the worst-case to average-case reduction in \Cref{lem:mod3selfreduce}, we get an $\AC^0[2]/\rpoly$ circuit (of similar size and depth) that succeeds with probability $\frac{1}{3} + \varepsilon$ on every input, and outputs each wrong answer with probability $\frac{1}{3} - \frac{\varepsilon}{2}$. 

Construct a circuit for the Mod 3 Problem (\Cref{problem:mod3}) such that on input $x$, 
\begin{itemize}
\item with probability $\frac{1}{4}$, it outputs $0$, 
\item with probability $\frac{3}{4}$, it outputs $0$ if $A(x) = 0$ and $1$ otherwise. 
\end{itemize}
If $|x| \bmod 3 = 0$ then the circuit outputs $0$ w.p. $\frac{1}{4} + \frac{3}{4}(\frac{1}{3} + \varepsilon) = \frac{1}{2} + \frac{3}{4} \varepsilon$. If $|x| \bmod 3 \neq 0$ then the circuit outputs $1$ w.p. $\frac{3}{4}(\frac{1}{3} + \varepsilon + \frac{1}{3} - \frac{\varepsilon}{2}) = \frac{1}{2} + \frac{3}{8} \varepsilon$. In other words, the circuit solves the Mod 3 Problem with probability at least $\frac{1}{2} + \frac{3}{8} \varepsilon$ on arbitrary inputs. By \Cref{lem:avgmod3}, this implies $\varepsilon = \frac{1}{n^{\Omega(1)}}$. 
\end{proof}

From this we get the following corollary.

\begin{cor} \label{cor:randomOutputson3Mod3}
Let $C$ be an $\AC^0[2]/\rpoly$ circuit outputting a trit. 
For any fixed $x\in \{0,1,2\}^n$, 
we denote by $C(x)$ the random variable giving the output of the randomized circuit $C$ on input $x$.
Then, for all $i \in \{0,1,2\}$
\begin{align}
    \frac{1}{3} - n^{-\Omega(1)} \leq \Pr_{x\in \{0,1,2\}^n}\Big[ C(x) - |x| \equiv i \pmod{3} \Big] \leq \frac{1}{3} + n^{-\Omega(1)}.
\end{align}
\end{cor}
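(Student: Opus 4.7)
The plan is to derive both the upper and lower bounds from \Cref{lem:3outMod3} by a trivial reduction. For the upper bound, fix any $i\in\{0,1,2\}$ and define a new circuit $C_i$ that first runs $C$ on input $x$ and then subtracts $i$ from the output modulo $3$. Since subtracting a constant mod $3$ is a constant-size operation, $C_i$ is still an $\AC^0[2]/\rpoly$ circuit of depth $d+O(1)$ and size at most $\mathrm{exp}(n^{1/10d})+O(1)$, which (after adjusting constants in the exponent) fits within the hypothesis of \Cref{lem:3outMod3}. Observe that
\begin{equation}
\Pr_{x}\bigl[C_i(x) \equiv |x| \pmod 3\bigr] \;=\; \Pr_{x}\bigl[C(x) - |x| \equiv i \pmod 3\bigr].
\end{equation}
Applying \Cref{lem:3outMod3} to $C_i$ therefore gives the desired upper bound $\frac{1}{3}+n^{-\Omega(1)}$.

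For the lower bound, observe that for any fixed $x$, the three events $\{C(x)-|x|\equiv i \pmod 3\}_{i\in\{0,1,2\}}$ partition the probability space, so the three probabilities sum to exactly $1$. Combined with the upper bound just established on each of the other two values, we get
\begin{equation}
\Pr_{x}\bigl[C(x)-|x|\equiv i \pmod 3\bigr] \;\geq\; 1 - 2\Bigl(\tfrac{1}{3}+n^{-\Omega(1)}\Bigr) \;=\; \tfrac{1}{3} - 2\,n^{-\Omega(1)},
\end{equation}
which is again $\tfrac{1}{3}-n^{-\Omega(1)}$ up to absorbing the factor of $2$ into the $\Omega(1)$.

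There is essentially no obstacle here beyond verifying that the constant modular shift really can be implemented without blowing up the circuit parameters, which is immediate: the map $y\mapsto y-i \bmod 3$ on a trit encoded in two bits is computed by a constant-size Boolean circuit, so depth and size change only by additive constants. Thus both inequalities follow directly from \Cref{lem:3outMod3}.
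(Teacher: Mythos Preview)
Your proposal is correct and matches the paper's own proof essentially step for step: the paper likewise observes that the three probabilities sum to $1$, so the lower bound follows from the upper bound, and it obtains the upper bound for each $i$ by noting that subtracting $i$ from the output yields a circuit that would otherwise violate \Cref{lem:3outMod3}. The only cosmetic difference is the order of presentation (the paper states the summing-to-one reduction first).
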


\begin{proof}
Since 
\begin{align}
    \sum_{i=0}^2 \Pr\Big[ C(x) - |x| \equiv i \pmod{3} \Big] = 1,
\end{align} 
it suffices to show 
\begin{align}
    \Pr\Big[ C(x) - |x| \equiv i \pmod{3} \Big] \leq \frac{1}{3} + n^{-\Omega(1)}
\end{align} 
for $i\in \{0,1,2\}$.
For $i = 0$ this claim is exactly the statement of  \Cref{lem:3outMod3}. For $i \in \{1,2\}$ we note a circuit satisfying 
\begin{align}
\Pr\Big[ C(x) - |x| \equiv i \pmod{3} \Big] \geq \frac{1}{3} + n^{-o(1)}
\end{align} can be converted to one violating \Cref{lem:3outMod3} by subtracting $i$ from every output. 
\end{proof}

We now need an analog of Vazirani's XOR Lemma for finite groups~{\cite[Lemma~4.2]{rao2007exposition}}.

\begin{lem}[XOR lemma for finite abelian groups]\label{lem:rao} Let $X$ be a distribution on a finite abelian group G such that $\abs{\mathbb{E}\left[\psi(X)\right]} \leq \epsilon$ for every non-trivial character $\psi$. Then $X$ is $\epsilon\sqrt{\abs{G}}$ close (in statistical distance) to the uniform distribution over $G$.
\end{lem}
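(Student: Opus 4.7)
The plan is to prove this by standard Fourier analysis on the finite abelian group $G$, following the same template as the proof of Vazirani's XOR lemma for $\mathbb{F}_2^m$ (which is the special case $G = \mathbb{F}_2^m$, where the non-trivial characters are exactly the non-empty parities $p_S$).

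First I would recall the relevant setup: the characters $\psi: G \to \mathbb{C}^*$ (each of modulus $1$) satisfy the orthogonality relations $\sum_{g \in G} \psi(g)\overline{\psi'(g)} = |G|\cdot \mathbf{1}[\psi = \psi']$. For a function $f: G \to \mathbb{C}$ define $\hat f(\psi) = \sum_g f(g)\overline{\psi(g)}$, for which Parseval's identity reads $\sum_g |f(g)|^2 = \frac{1}{|G|}\sum_\psi |\hat f(\psi)|^2$.

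Next, let $p(g) = \Pr[X = g]$ and $u(g) = 1/|G|$. The key Fourier computation is that $\hat p(\psi_0) = \hat u(\psi_0) = 1$ at the trivial character $\psi_0 \equiv 1$, while for every non-trivial $\psi$, the orthogonality relations give $\hat u(\psi) = 0$ and the hypothesis gives $|\hat p(\psi)| = |\mathbb{E}[\overline{\psi(X)}]| \le \eps$. Applying Parseval to $f = p - u$ therefore yields
\begin{equation}
\sum_{g \in G} (p(g) - u(g))^2 \;=\; \frac{1}{|G|}\sum_{\psi \neq \psi_0} |\hat p(\psi)|^2 \;\le\; \eps^2.
\end{equation}

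Finally, I would convert this $\ell^2$ bound to an $\ell^1$ bound via Cauchy--Schwarz:
\begin{equation}
\sum_{g \in G} |p(g) - u(g)| \;\le\; \sqrt{|G|}\cdot \sqrt{\sum_g (p(g) - u(g))^2} \;\le\; \eps\sqrt{|G|}.
\end{equation}
Since the statistical distance equals $\tfrac{1}{2}\sum_g |p(g) - u(g)|$, this proves the lemma (in fact with an extra factor of $1/2$). There is no serious obstacle in this argument; it is the standard ``$\ell^2$-to-$\ell^1$'' conversion in Fourier analysis on finite abelian groups, and specializes back to \Cref{lem:vazirani} when $G = \mathbb{F}_2^m$.
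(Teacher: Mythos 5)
Your proof is correct and is the standard Fourier-analytic ``$\ell^2$-to-$\ell^1$'' argument (orthogonality of characters, Parseval, Cauchy--Schwarz), which is exactly the proof in the reference {\cite[Lemma~4.2]{rao2007exposition}} that the paper cites for this lemma; the paper itself gives no proof. You also correctly observe that the argument actually yields an extra factor of $1/2$ in the conclusion.
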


We now consider the parallel version of the previous problem and show that it is hard.

\begin{problem}[$k$-Parallel 3 Output Mod 3] \label{problem:rep3outMod3}
Given inputs $x_1,\ldots, x_k  \in \{0,1,2\}^n$ for all $i \in [k]$,  output a vector $\vec{y} \in \{0,1,2\}^k$ such that 
\begin{align}
    y_i \equiv \abs{x_i} \pmod{3}
\end{align}
for at least a $\frac{1}{3} + 0.01$ fraction of the $i \in [k]$.
\end{problem}

\begin{thm}
There exists a $k \in \Theta(\log n)$ for which any $\AC^0[2]/\rpoly$ circuit solves the $k$-Parallel 3 Output Mod 3 Problem (\Cref{problem:rep3outMod3}) with probability at most $n^{-\Omega(1)}$.
\end{thm}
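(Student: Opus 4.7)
The plan is to use the XOR lemma for $\mathbb{Z}_3^k$ (Lemma~\ref{lem:rao}) in combination with Corollary~\ref{cor:randomOutputson3Mod3}. Fix a candidate $\AC^0[2]/\rpoly$ circuit $C$, sample the inputs $x_1,\dots,x_k$ uniformly from $\{0,1,2\}^n$, and consider the error vector $X\in\mathbb{Z}_3^k$ whose $i$-th coordinate is $C_i(x)-|x_i|\bmod 3$. The circuit solves the parallel problem iff $X$ has at least $(1/3+0.01)k$ zero entries. I would bound the Fourier coefficient of $X$ at each nontrivial character $\psi_s(w)=\omega^{\langle s,w\rangle}$ ($\omega=e^{2\pi i/3}$) of $\mathbb{Z}_3^k$, apply the XOR lemma to show $X$ is close to uniform on $\mathbb{Z}_3^k$, and conclude via a Chernoff bound that under a near-uniform distribution, the probability of $(1/3+0.01)k$ or more zero entries is $n^{-\Omega(1)}$.

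The heart of the argument is bounding $|\mathbb{E}[\psi_s(X)]|=|\mathbb{E}_x[\omega^{\sum_i s_i C_i(x)-\sum_i s_i|x_i|}]|$ for each nonzero $s\in\mathbb{Z}_3^k$. Since $k=\Theta(\log n)$, the trit-valued post-processing $C_s(x):=\sum_i s_i C_i(x)\bmod 3$ depends only on the $2k=O(\log n)$ output bits of $C$, so $C_s$ is computed by $C$ followed by an $\AC^0$ gadget of size $\poly(n)$, yielding an $\AC^0[2]/\rpoly$ circuit of depth $d+O(1)$ and size at most $\poly(n)\cdot s$. To express the target $T_s(x):=\sum_i s_i|x_i|\bmod 3$ as a single Hamming weight, let $I=\{i:s_i\ne 0\}$ and apply the coordinatewise bijection $\phi_i:\{0,1,2\}\to\{0,1,2\}$ (identity if $s_i=1$, swap $1\leftrightarrow 2$ if $s_i=2$) to each $x_i$ with $i\in I$; then $|\phi_i(x_i)|\equiv s_i|x_i|\pmod 3$, so $\tilde{x}:=(\phi_i(x_i))_{i\in I}\in\{0,1,2\}^{n|I|}$ is uniform and satisfies $|\tilde{x}|\equiv T_s(x)\pmod 3$. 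Viewing the $x_j$ with $j\notin I$ as random advice, $C_s$ becomes an $\AC^0[2]/\rpoly$ circuit $\tilde{C}_s(\tilde{x})$ whose input is a single uniformly random string. Corollary~\ref{cor:randomOutputson3Mod3} applied to $\tilde{C}_s$ then gives $\bigl|\Pr[\tilde{C}_s(\tilde{x})-|\tilde{x}|\equiv i\pmod 3]-\tfrac{1}{3}\bigr|\le n^{-\Omega(1)}$ for each $i\in\mathbb{Z}_3$, and therefore $|\mathbb{E}[\psi_s(X)]|\le 3\cdot n^{-\Omega(1)}=n^{-\Omega(1)}$ by the triangle inequality (using $1+\omega+\omega^2=0$ to replace each probability by its deviation from $1/3$).

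With the per-character bias under control, Lemma~\ref{lem:rao} applied to $G=\mathbb{Z}_3^k$ yields total variation distance $\le n^{-\Omega(1)}\cdot 3^{k/2}$ between $X$ and the uniform distribution on $\mathbb{Z}_3^k$. Choosing $k=c\log n$ for a sufficiently small absolute constant $c$ (so that $3^{k/2}=n^{c\log 3/2}$ is dominated by the $n^{\Omega(1)}$ gain in the bias), this distance collapses to $n^{-\Omega(1)}$. Under truly uniform $X\in\mathbb{Z}_3^k$, the number of zero coordinates is a sum of $k$ i.i.d.\ $\mathrm{Bernoulli}(1/3)$ variables with mean $k/3$, so a Chernoff bound gives $\Pr[\#\text{zeros}\ge (1/3+0.01)k]\le e^{-\Omega(k)}=n^{-\Omega(1)}$. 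Adding the two error terms yields the claimed $n^{-\Omega(1)}$ bound on the success probability of $C$.

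The main technical obstacle is the depth and size bookkeeping: each derived circuit $\tilde{C}_s$ has depth $d+O(1)$ and size $\poly(n)\cdot s$, so one has to verify (as is standard in this line of work) that the original size bound $\exp(n^{1/10d})$ leaves enough slack to apply Corollary~\ref{cor:randomOutputson3Mod3} to $\tilde{C}_s$ on inputs of length $n|I|\le nk=n^{1+o(1)}$. A minor conceptual obstacle is the change of variables via $\phi_i$, which is essential because the XOR lemma requires bounds on the full lattice of nontrivial characters $s\in\mathbb{Z}_3^k\setminus\{0\}$, not just those supported on a single coordinate.
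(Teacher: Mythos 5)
Your proposal follows essentially the same route as the paper: you define the error vector $X\in\mathbb{Z}_3^k$, bound each nontrivial character $\mathbb{E}[\psi_s(X)]$ by reducing to a single instance of 3 Output Mod 3 on the concatenated/scaled input (your $\phi_i$ bijection is exactly the paper's $(a_i x_i)_{i\in S}$ map, with coordinates outside the support absorbed as random advice), invoke \Cref{cor:randomOutputson3Mod3} and \Cref{lem:rao}, pick $k=\Theta(\log n)$ small enough that $3^{k/2}$ is dominated by the $n^{-\Omega(1)}$ bias, and finish with Chernoff. The argument and bookkeeping match the paper's proof.
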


\begin{proof}
Our proof is be similar to that of \Cref{thm:RPBP}: we will first prove that solving the sum of any subset of subgames is hard, and then apply \Cref{lem:rao}
to deduce that winning more than $1/3+0.01$ fraction of the games is hard.

Let $C$ be an $\AC^0[2]/\rpoly$ circuit trying to solve \Cref{problem:rep3outMod3}.
Let $y_1, \ldots, y_k$ be its $k$ output trits.
We consider the distribution $X$ over $k$ trits defined by \begin{equation}\bigotimes_{i=1}^k\abs{x_i} - y_i \pmod{3}\end{equation}
for a uniform input $x\in \{0,1,2\}^k$.
We shall show that the distribution $X$ is close to the uniform distribution over $\{0,1,2\}^k$.

Let $\chi_a$ be a non-trivial character of $\mathbb{F}_3^k$. That is $\chi_a(z) = \omega^{\sum_{i=1}^{k} a_i z_i}$ where $\omega$ is a third root of unity, and $a\in \mathbb{F}_3^k$. To show that $X$ is close to uniform it suffices to show that the expectation of $\chi_a(X)$ is small for all non-zero vectors of coefficients $a_1, \ldots, a_k\in \{0,1,2\}$.

Given $a\in \{0,1,2\}^k$ let $S$ be the support of $a$, i.e., the set of indices on which $a_i \neq 0$.
Given strings $x_1, \ldots x_k \in \{0,1,2\}^n$ finding trits $y_1, \ldots y_k$ satisfying 
\begin{align}
    \sum_{i \in S} a_i |x_i| \equiv\sum_{i\in S} a_i y_i \pmod{3}
\end{align}
is at least as hard as solving the 3 Output Mod 3 Problem on the concatenated input $(a_i x_i)_{i\in S}$.
This is true since any circuit solving the former can be converted into a circuit solving the latter by adding an depth-2 circuit with $\exp(|S|) \le \exp(k) \le \poly(n)$ gates that adds the $|S|$ trits $a_i y_i$ modulo 3.
However, the concatenated input $(a_i x_i)_{i\in S}$ is a uniform vector in $\{0,1,2\}^{n|S|}$.
\Cref{cor:randomOutputson3Mod3} then gives that
\begin{align}
   \sum_{i \in S} a_i (|x_i|  - y_i) \pmod{3}
\end{align} has an $\ell_1$ distance at most $n^{-\Omega(1)}$ from the uniform distribution over $\{0,1,2\}$.
Thus, $|\E[\chi_a(X)]| \le n^{\Omega(1)}$.

Applying \Cref{lem:rao} with $G = \mathbb{Z}_3^k$ and $X$ the distribution of the random variables $\bigotimes_{i=1}^k\abs{x_i} - y_i \pmod{3}$ gives that $X$ has $\ell_1$ distance 
\begin{align}
     n^{-\Omega(1)} \cdot \sqrt{3^{k}}
\end{align}
from the uniform distribution on $\{0,1,2\}^k$. 

To finish the proof, we note that by  Chernoff's bound, the probability of drawing a string from the uniform distribution over $\{0,1,2\}^k$ with more than a $\frac{1}{3} + 0.01$ fraction of its outputs $0$ is bounded by $\exp(-\Omega(k))$.
Then we see the probability of drawing a string from $X$ with more than $\frac{1}{3} + 0.01$ fraction of zeros is bounded above by
\begin{align}
    n^{-\Omega(1)} \cdot \sqrt{3^{k}} + e^{-\Omega(k)} = n^{-\Omega(1) + k/(2\log_3(n))} + e^{-\Omega(k)}.
\end{align}
To complete the proof, we note there is some $k \in \Theta(\log n)$ for which the above sum is bounded above by $n^{-\Omega(1)}$.
\end{proof}

\begin{thm}
There exists a $k \in \Theta(\log n)$ for which an $\AC^{0}[2]/\rpoly$ circuit succeeds on the $k$-Parallel Parity Bending Problem with probability at most $n^{-\Omega(1)}$. 
\end{thm}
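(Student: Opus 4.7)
The plan is to reduce the $k$-Parallel Parity Bending Problem to the $k$-Parallel 3 Output Mod 3 Problem via a simple randomized post-processing step, and then invoke the preceding theorem. Given an $\AC^{0}[2]/\rpoly$ circuit $C$ for $k$-Parallel Parity Bending operating on uniform input $x_1, \ldots, x_k \in \{0,1,2\}^n$, I will construct a circuit $C'$ for $k$-Parallel 3 Output Mod 3 that first runs $C$ to obtain output bits $y_1, \ldots, y_k$ and then, for each $i$, outputs the trit $0$ if $y_i = 0$ and outputs a uniformly random trit in $\{1,2\}$ if $y_i = 1$. The extra coin flips fold into the $\rpoly$ advice, and the bit-to-trit post-processing is implementable in constant depth and linear size, so $C'$ remains an $\AC^{0}[2]/\rpoly$ circuit of comparable depth and size to $C$.

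The key observation is that this mapping is faithful on correctly-answered instances. Conditioning on the event that $C$ correctly answers at least a $(2/3 + 0.05)$-fraction of its $k$ instances (which has probability $p$, the success probability of $C$): a correct instance with $|x_i| \equiv 0 \pmod 3$ has $y_i = 0$ and $C'$ outputs trit $0$, matching $|x_i| \bmod 3$; a correct instance with $|x_i| \not\equiv 0 \pmod 3$ has $y_i = 1$ and $C'$ outputs a uniformly random element of $\{1,2\}$, matching $|x_i| \bmod 3$ with probability exactly $1/2$; and incorrect Parity Bending instances always yield incorrect trits by construction. Writing $c_0$ and $c_{12}$ for the number of correct instances of each type, so that $c_0 + c_{12} \geq (2/3 + 0.05)k$, the expected number of correct trits produced by $C'$ is $c_0 + c_{12}/2 \geq (c_0 + c_{12})/2 \geq (1/3 + 0.025)k$.

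Next, I will apply a Chernoff bound to the independent random $\{1,2\}$-guesses to show that, conditional on $C$ succeeding, the actual fraction of correct trits exceeds $1/3 + 0.01$ with probability $1 - \exp(-\Omega(k))$. For $k \in \Theta(\log n)$ with a sufficiently large hidden constant, this deviation loss is $1/\mathrm{poly}(n)$, so $C'$ solves $k$-Parallel 3 Output Mod 3 with probability at least $p \cdot (1 - 1/\mathrm{poly}(n))$. Combined with the preceding theorem's upper bound of $n^{-\Omega(1)}$ on the success probability of any $\AC^{0}[2]/\rpoly$ circuit for $k$-Parallel 3 Output Mod 3, this forces $p \leq n^{-\Omega(1)}$, which is the desired conclusion.

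The main subtlety is parameter matching. The threshold $2/3 + 0.05$ in Problem~\ref{problem:RPBP} is set so that, after the halving induced by the random $\{1,2\}$ guesses, one still clears the $1/3 + 0.01$ threshold of Problem~\ref{problem:rep3outMod3} with enough slack to absorb the Chernoff fluctuations. Since both the preceding theorem and the target theorem quantify over ``some $k \in \Theta(\log n)$,'' I can take the common $k$ to be any sufficiently large constant multiple of $\log n$ that simultaneously makes the Chernoff bound here tight to $1/\mathrm{poly}(n)$ and satisfies the hypothesis of the preceding theorem.
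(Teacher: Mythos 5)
Your proposal is correct and follows essentially the same approach as the paper: the paper uses the identical bit-to-trit randomized post-processing (output trit $0$ if $|y_i|$ is even, a uniform trit from $\{1,2\}$ if odd), makes the same observation that the expected number of correct trits is at least half the number of correct Parity Bending instances, and then simply writes "Concentrating around this value completes the proof" where you spell out the Chernoff bound and the parameter matching with the preceding theorem's choice of $k$.
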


\begin{proof}
We show a circuit solving the Parallel Parity Bending Problem can be reduced to one solving \Cref{problem:rep3outMod3} with success probability close to $\frac{1}{2}$ the original success probability.

This reduction is straightforward : given a solution $y_1, \ldots y_k$ to \Cref{problem:RPBP} we convert to a solution $y_1', \ldots, y_k'$ to \Cref{problem:rep3outMod3} by setting $y_i' = 0$ if $\abs{y_i} = 0 \pmod{2}$, and $y_i'$ equal to a random choice of 1 or 2 if $\abs{y_i} = 1 \pmod{2}$. The expected number of successes is at least half the number of successes in the original instance, since a success on input $i$ with $|x_i| \equiv0 \pmod{3}$ remains a success with probability 1, and a success on an input $i$ with $|x_i| \pmod{3} \in \{1,2\} $ remains a success with probability $\frac{1}{2}$. Concentrating around this value completes the proof.
\end{proof}

This establishes the classical lower bound in \Cref{thm:RPBP}.

\section*{Acknowledgements}
We would like to thank Nicolas Delfosse, Aram Harrow, Anna G{\'a}l, Anand Natarajan, Benjamin Rossman, and Emanuele Viola for very helpful discussions.

\appendix

\section{Proof of the multi-switching lemma (\texorpdfstring{\Cref{cor:Rossman}}{Lemma \ref*{cor:Rossman}})}
\label{app:Rossman}

We follow the notation that was set up by Rossman in~\cite{Rossman2017} and define the following classes of Boolean functions. 
One difference is that we consider functions with multiple outputs $\{f:\B^n \to \B^m\}$, instead of $\{f:\B^n \to \B\}$.
\begin{itemize}
\item $\DT(k)$ is the class of depth-$k$ decision trees with a single output bit.
\item $\CKT(d; s_1,s_2, \ldots, s_d)$ denotes the class of depth-$d$ $\AC^0$ circuits with $s_i$ nodes at height $i$ for all $i \in \{1,\ldots, d\}$. Note that these circuits compute functions with $s_d$ many output bits.
\item $\CKT(d; s_1,s_2, \ldots, s_d)  \circ \DT(k)$ is the class of circuits in $\CKT(d; s_1,s_2,\ldots, s_d)$ whose inputs are labeled by decision trees in $\DT(k)$.
\item $\DT(t) \circ \CKT(d; s_1,s_2,\ldots, s_d) \circ \DT(k)$ is the class of depth-$t$ decision trees, whose leaves are labeled by elements of $\CKT(d; s_1,\ldots, s_d) \circ \DT(k)$. (Note that these are functions with $s_d$ output bits)
\item $\DT(k)^{m}$ is the class of $m$-tuples of depth-$k$ decision trees. That is a function $F \in \DT(k)^{m}$ is a tuple of $m$ functions $F = (f_1, \ldots, f_m)$ where each $f_i\in \DT(k)$. 
\item $\DT(t) \circ \DT(k)^m$ is the class of depth-$t$ decision trees, whose leaves are labeled by $m$-tuples of depth-$k$ decision trees, one per output bit. 
\end{itemize}

Recall that $\DT(t) \circ \DT(k)^m$ is the class of functions mapping $\B^n$ to $\B^m$ that can be evaluated by adaptively querying at most $t$ coordinates globally, after which each of the $m$ output bits can be evaluated by making at most $k$ additional adaptive queries. Note that while the first $t$ queries are global, the last $k$ queries could be different for each output bit. 

The next lemma shows that under random restriction with high probability objects of the form $\DT(\cdot) \circ \CKT(d; \ldots) \circ \DT(\cdot)$ reduce to objects of the form $\DT(\cdot) \circ \CKT(d-1; \ldots) \circ \DT(\cdot)$, where the depth of the circuit reduces by one. Applying this lemma for $d$ iterations would reduce the depth of the circuit to $1$.

\begin{lem}[\protect{\cite[Lemma~24]{Rossman2017}}] \label{lemma:Rossman-Induction}
Let $d, t, \ell, k, s_1, \ldots, s_d \in \mathbb{N}$, $d \ge 2$, $p \in (0,1)$.
If $\ell \ge \log(s_1) + 1$ and $f \in \DT(t-1) \circ \CKT(d; s_1, \ldots, s_d) \circ \DT(k)$, then
\begin{equation}
\Pr_{\rho \sim \Rp} [f|_\rho \notin
\DT(t-1)\circ \CKT(d-1; s_2, \ldots, s_d) \circ \DT(\ell)]
 \le s_1 (200 pk)^{t/2}\;.
\end{equation}
\end{lem}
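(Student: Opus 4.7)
The plan is to prove this as an instance of H{\aa}stad's multi-switching lemma applied to the bottom layer of the circuit computing $f$.

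Concretely, parse $f$ as a depth-$(t-1)$ outer decision tree $T$ whose leaves are labeled with $\CKT(d; s_1, \ldots, s_d) \circ \DT(k)$ circuits. At each leaf $\lambda$ of $T$, the bottom layer of the corresponding leaf circuit consists of $s_1$ gates $g_1^{(\lambda)}, \ldots, g_{s_1}^{(\lambda)}$, each of which is an AND or OR of depth-$k$ decision trees. I would show that after a $p$-random restriction $\rho \sim \Rp$, all of these bottom gates simultaneously collapse to depth-$\ell$ decision trees. When this happens, the bottom two layers of each leaf circuit merge into a single layer of depth-$\ell$ DTs, yielding $f|_\rho \in \DT(t-1) \circ \CKT(d-1; s_2, \ldots, s_d) \circ \DT(\ell)$ as claimed.

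The heart of the argument is H{\aa}stad's canonical common partial decision tree construction: process the $s_1$ bottom gates in a fixed order and, at each stage, extend a common partial decision tree by querying only the variables appearing in the ``first surviving term'' of the next unresolved gate. An entropy-compression / encoding argument (mapping any restriction that produces a canonical DT deeper than $t/2$ to a short binary string consisting of the path through the DT plus $O(1)$ bits of side information per query) then bounds the probability of this failure event by $s_1 (200 p k)^{t/2}$. The hypothesis $\ell \ge \log s_1 + 1$ is exactly what is needed to absorb the $s_1$ union-bound factor into the per-gate leaf-DT depth once the common partial tree has been traversed. A naive approach, iterating H{\aa}stad's single-output switching lemma $\Pr_\rho[g|_\rho \notin \DT(\ell)] \le (5pk)^\ell$ and union bounding over both the $s_1$ bottom gates and the $\le 2^{t-1}$ leaves of $T$, would lose a factor of $2^{t}$ and fall short of the required bound.

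The main obstacle is therefore the canonical decision tree / encoding step itself: this is the technical core of H{\aa}stad's multi-switching lemma (as refined by Rossman), and is what avoids the exponential-in-$t$ union-bound blowup. Once it is in place, the remaining steps are essentially bookkeeping --- the common canonical DT (of depth at most $t/2$) folds into the existing structure without exceeding the outer-tree depth budget of $t-1$, and the restricted and switched bottom layer delivers the target form $\CKT(d-1; s_2, \ldots, s_d) \circ \DT(\ell)$ at each leaf of $T$ with the claimed failure probability $s_1 (200 p k)^{t/2}$.
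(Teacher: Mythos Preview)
The paper does not prove this lemma; it is quoted from \cite{Rossman2017} as a black box and used in the proof of \Cref{lemma:technical}. So there is no in-paper proof to compare against.

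Your overall plan --- canonical common partial decision tree plus an encoding/compression argument --- is indeed the machinery behind Rossman's lemma. However, the final ``bookkeeping'' step in your sketch has a gap. You write that the common canonical DT of depth at most $t/2$ ``folds into the existing structure without exceeding the outer-tree depth budget of $t-1$.'' This does not type-check. If you first follow the original outer tree $T$ (depth up to $t-1$) to a leaf $\lambda$ and then append a common canonical DT of depth up to $t/2$ at that leaf, the total depth can reach $(t-1)+t/2 > t-1$. If instead you build a global canonical DT over all $\le 2^{t-1} s_1$ bottom gates and place it \emph{before} $T$, you again get depth $t/2 + (t-1)$. Either way you have not landed in $\DT(t-1) \circ \CKT(d-1; s_2,\ldots,s_d) \circ \DT(\ell)$.

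What is actually required is a single canonical construction of the new outer tree that interleaves the queries inherited from $T$ (on the variables left alive by $\rho$) with the witness queries arising from bottom-layer gates that fail to switch, so that the encoding argument bounds the probability that the \emph{total} query count exceeds $t-1$. The hypothesis $\ell \ge \log s_1 + 1$ is used as you say --- it pays for naming which of the $s_1$ gates triggers each extension --- but the exponent $t/2$ comes out of this global accounting along a length-$(t-1)$ bad path, not from a stand-alone depth-$(t/2)$ tree that is subsequently composed with $T$. Treating the two trees separately and then composing them, as your sketch does, is precisely what reintroduces the loss you correctly identified in the naive approach. This interleaved construction is the nontrivial content of Rossman's Lemma~24; it is not merely bookkeeping once the per-gate canonical-DT step is in hand.
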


\begin{lem}[Multi-Switching Lemma \cite{Hastad14}, 
for this formulation see~\protect{\cite[Theorem~40]{Tal17}}] \label{lemma:Multi-Switching}
Let $m, k, q, t \in \mathbb{N}$, $p \in (0,1)$.
If $f \in \CKT(1; m) \circ \DT(k)$, then
\begin{equation}
\Pr_{\rho \sim \mathbf{R}_p}
[f|_\rho \notin
\DT(t-1)\circ \DT(q-1)^{m}]
 \le m^{1+t/q} \cdot (25 pk)^{t}\;.    
\end{equation}
\end{lem}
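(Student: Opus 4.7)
The plan is to adapt H{\aa}stad's encoding-based proof of the single-output switching lemma to the multi-output setting, following the template of \cite{Hastad14,Rossman2017}. Write $f = (f_1, \ldots, f_m)$ where each $f_i$ is a single AND/OR gate over depth-$k$ decision trees, so in particular each $f_i|_\rho$ is computed by a $k$-DNF or $k$-CNF under any restriction $\rho$. The goal is to show that with overwhelming probability over $\rho \sim \Rp$ the tuple $(f_1|_\rho, \ldots, f_m|_\rho)$ admits a common depth-$(t-1)$ decision tree whose leaves simplify every coordinate to a depth-$(q-1)$ decision tree.

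First I would define a canonical common decision tree (CDT) for $f|_\rho$ greedily. At a node $v$ reached by partial assignment $\sigma_v$, let $S_v = \{i : \mathrm{DT}(f_i|_{\rho \cup \sigma_v}) \geq q\}$. If $S_v = \emptyset$, declare $v$ a leaf. Otherwise pick the smallest $i \in S_v$ and extend the CDT by querying the variables along H{\aa}stad's canonical single-function decision tree for $f_i|_{\rho \cup \sigma_v}$, re-checking $S_v$ after each query so the ``currently hard'' index can change. Call $\rho$ \emph{bad} if the resulting CDT has some root-to-leaf path of length $\geq t$; it then suffices to bound $\Pr_\rho[\rho \text{ bad}] \leq m^{1+t/q}(25pk)^t$.

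The bound is obtained by an encoding argument. Fix any bad pair $(\rho, \pi)$ where $\pi$ is a bad root-to-leaf path of length $t$. Partition $\pi$ into \emph{phases}, each a maximal contiguous stretch on which the chosen minimum of $S_v$ equals some fixed $i^* \in [m]$. A single phase involves a path within the canonical DT of $f_{i^*}$ of length at most $q$ (it ends when either $f_{i^*}|_{\rho\cup\sigma_v}$ becomes trivial or a strictly smaller hard index appears), so there are at most $\lceil t/q\rceil + 1$ phases. To each bad $(\rho, \pi)$ I would associate an encoding consisting of (a) the sequence of indices $i^*$ across phases, contributing the $m^{1+t/q}$ factor, and (b) a within-phase encoding that applies H{\aa}stad's standard trick of specifying, for each of the $t$ queried coordinates, whether it was $*$ or fixed under $\rho$ together with the assigned value, exploiting the $k$-DNF structure to save a factor of roughly $5p$ per coordinate for an aggregate $(25pk)^t$. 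Comparing the cardinality of bad restrictions to the total measure of $\Rp$ yields the desired inequality.

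The main obstacle is injectivity of the encoding: given the encoded data plus an ``auxiliary'' restriction, a decoder must simulate the CDT construction and recover $\rho$ exactly. This requires (i) defining canonical single-function DTs for each $f_i$ that depend only on the current restriction, (ii) phase-boundary markers recording whether the phase ended because $f_{i^*}$ became resolved or because a strictly smaller hard index entered $S_v$, and (iii) careful bookkeeping so the decoder can replay the greedy procedure. The constant $25$ is exactly H{\aa}stad's constant for a single round of switching; the extra factor of $m$ absorbs the initial choice of $i^*$ and the slack from $t$ not being a multiple of $q$.
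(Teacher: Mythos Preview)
The paper does not prove this lemma at all: it is stated with a citation to H{\aa}stad~\cite{Hastad14} and Tal~\cite{Tal17} and then used as a black box inside the proof of \Cref{lemma:technical}. So there is no ``paper's own proof'' to compare against; your sketch is an attempt to reproduce the argument from the cited literature, and its overall shape---canonical common decision tree, phase decomposition, H{\aa}stad-style encoding---is indeed the standard route.

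That said, there is a genuine inconsistency in your construction of the CDT and the phases. You write that at each node you ``re-check $S_v$ after each query so the currently hard index can change,'' and later that a phase ends ``when \ldots\ a strictly smaller hard index appears.'' If phases can terminate after a single query because some smaller $i$ enters $S_v$, then nothing prevents having up to $t$ phases rather than $\lceil t/q\rceil + 1$, and the $m^{1+t/q}$ factor would blow up to $m^t$. In the actual proof one \emph{commits} to the current index $i^\ast$ and queries a full block of (up to) $q$ variables along the canonical tree for $f_{i^\ast}$ before reassessing $S_v$; only that commitment yields the bound of at most $\lceil t/q\rceil$ phase labels. Your later accounting assumes this block structure, so the fix is simply to drop the ``re-check after each query'' clause and the ``strictly smaller hard index'' termination condition, and instead let each phase run for $q$ queries (or until $f_{i^\ast}$ is resolved). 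With that correction the encoding argument goes through as you describe.
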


\begin{lem}[Multiple output $\AC^0$ circuits under random restrictions]
\label{lemma:technical}
Let $d, t, q, k, s_1, \ldots, s_{d-1}  \in \mathbb{N}$, $p_1, \ldots, p_d \in (0,1)$.
Let $f \in \CKT(d; s_1, \ldots, s_{d-1},m)$ with $n$ inputs and $m$ outputs.
Let $s = s_1 + \ldots + s_{d-1}+m$.
Let $p = p_1 \cdot p_2 \cdots p_d$.
Then
\begin{align}
\Pr_{\rho \sim \Rp}[f|_{\rho} \notin \DT(2t-2) & \circ \DT(q-1)^m] \nonumber \\ 
&\le 
s_1 \cdot O(p_1)^{t/2}
+
\left(\sum_{i=2}^{d-1} 
s_i \cdot O(p_i \log s)^{t/2}\right) + 
m^{1+t/q} \cdot O(p_d \log s)^{t}.
\end{align}
\end{lem}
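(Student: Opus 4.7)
The plan is to decompose a $\mathbf{R}_p$-random restriction as the composition of $d$ restrictions $\rho_1, \ldots, \rho_d$, where $\rho_i \sim \mathbf{R}_{p_i}$ is applied to the variables still alive after $\rho_1, \ldots, \rho_{i-1}$; this indeed reproduces $\mathbf{R}_p$ since $p = p_1 \cdots p_d$. The $i$-th restriction will be used to peel off the $i$-th layer of the circuit: the first $d-1$ steps use Rossman's lemma (\Cref{lemma:Rossman-Induction}) to absorb the bottom AND/OR layers into inner decision trees, and the final step uses the multi-switching lemma (\Cref{lemma:Multi-Switching}) to convert the remaining depth-$1$ circuit into the multi-output decision-tree structure at the output.

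Formally, view $f$ initially as an element of $\DT(t-1) \circ \CKT(d; s_1, \ldots, s_{d-1}, m) \circ \DT(\ell_0)$ with trivial outer decision tree and $\ell_0 = 1$ (the bottom-layer gate inputs are literals). For $i = 1, \ldots, d-1$, apply \Cref{lemma:Rossman-Induction} with restriction $\rho_i$, choosing the new inner depth to be $\ell_i := \lceil \log s_i \rceil + 1 = O(\log s)$, which meets the hypothesis $\ell_i \ge \log s_i + 1$. Except with probability at most $s_i \cdot (200 \, p_i \, \ell_{i-1})^{t/2}$, the restricted function then lies in $\DT(t-1) \circ \CKT(d-i; s_{i+1}, \ldots, s_{d-1}, m) \circ \DT(\ell_i)$. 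Substituting $\ell_0 = 1$ at $i = 1$ gives a first-term bound of $s_1 \cdot O(p_1)^{t/2}$, while for $i \ge 2$ the factor $\ell_{i-1} = O(\log s)$ gives middle-term bounds $s_i \cdot O(p_i \log s)^{t/2}$.

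After the $d-1$ Rossman steps, the surviving function lies in $\DT(t-1) \circ \CKT(1; m) \circ \DT(\ell_{d-1})$. Invoke \Cref{lemma:Multi-Switching} against $\rho_d$ at each of the at most $2^{t-1}$ leaves of the outer decision tree; each application fails with probability at most $m^{1+t/q} \cdot (25 \, p_d \, \ell_{d-1})^t$, and the $2^{t-1}$ loss from union-bounding over the leaves is harmless inside $O(\cdot)^t$ (it can be absorbed by enlarging the hidden constant from $25$ to $50$). This gives a stage-$d$ failure probability of $m^{1+t/q} \cdot O(p_d \log s)^t$, matching the final term. Composing the outer $\DT(t-1)$ preserved by the Rossman stages with the $\DT(t-1)$ newly created by multi-switching yields the target class $\DT(2t-2) \circ \DT(q-1)^m$, and a union bound over the $d$ stage-failure events produces the stated overall bound.

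The main delicate point is checking that the iterative use of \Cref{lemma:Rossman-Induction} is legitimate: namely, whenever the previous bad event does not occur, the restricted function really does sit in the $\DT(t-1) \circ \CKT(\cdot) \circ \DT(\cdot)$ template needed at the next step, with the chosen $\ell_i$ satisfying $\ell_i \ge \log s_i + 1$. Beyond that, the proof is parameter bookkeeping together with the observation that the $2^{t-1}$ leaf-union-bound loss at the final stage is absorbed by the $O(\cdot)^t$ notation, while the analogous loss at each Rossman stage is already baked into the $t/2$ exponent supplied by \Cref{lemma:Rossman-Induction}.
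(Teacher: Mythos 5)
Your proof is correct and follows essentially the same route as the paper: decompose $\rho$ into $\rho_1 \circ \cdots \circ \rho_d$, peel off the bottom $d-1$ layers one at a time with Rossman's lemma (Lemma~\ref{lemma:Rossman-Induction}), then handle the last depth-$1$ layer with the multi-switching lemma (Lemma~\ref{lemma:Multi-Switching}) applied at each of the at most $2^{t-1}$ leaves, absorbing that factor into the $O(\cdot)^t$. The only cosmetic difference is that you use stage-dependent inner depths $\ell_i = \lceil \log s_i \rceil + 1$, whereas the paper fixes a single $\ell = \lceil \log s \rceil + 1$ throughout; both are $O(\log s)$, so the final bound is unchanged.
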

\begin{proof}
Let $s_d = m$ for ease of notation.
Let $\ell = \lceil{\log(s)\rceil} +1$.
We think of $\rho \sim \Rp$ as a composition of $d$ random restrictions $\rho_1 \circ \ldots \circ \rho_d$ where each $\rho_i \sim \Rest_{p_i}$.
For $i=1, \ldots, d-1$ we let $\cE_i$ be the event defined by 
\begin{equation}
	\cE_i 
	\;\;\Leftrightarrow \;\;
	f|_{\rho_1 \circ \cdots \circ \rho_i}  \in  \DT(t-1) \circ \CKT(d-i; s_{i+1}, \ldots, s_d) \circ \DT(\ell),
\end{equation}
	and denote by $\cE_d$ the event that 
\begin{equation}
\cE_d \;\;\Leftrightarrow\;\; f|_{\rho_1 \circ \ldots \circ \rho_d}  \in  \DT(2t-2)  \circ \DT(q-1)^{m}.
\end{equation}
	We shall show that $\cE_1 \wedge \ldots \wedge \cE_d$ happens with high probability.
	We start with the first event $\cE_1$. Since $f \in \CKT(d; s_1, \ldots; s_d)$, it also holds that 
	\begin{equation} f\in \DT(t-1) \circ \CKT(d; s_1, \ldots, s_d) \circ \DT(1).\end{equation} Thus, we may apply \Cref{lemma:Rossman-Induction} with $k = 1$ and get
		\begin{equation}
	\Pr[\neg \cE_1] \le s_1 \cdot O(p_1 )^{t/2}
	.\end{equation}
	For $i = 2, \ldots, d-1$,
	using \Cref{lemma:Rossman-Induction} again,  we have
	\begin{equation}
	\Pr[\neg \cE_i | \cE_1 \wedge \ldots \wedge \cE_{i-1}] \le s_i \cdot O(p_i \cdot \ell )^{t}.
	\end{equation}
	
	We are left with the last event -- showing that $\Pr[\neg \cE_d | \cE_1 \wedge \ldots \wedge \cE_{d-1}]$ is small.
	We condition on $\rho_1, \ldots, \rho_{d-1}$ satisfying $\cE_1 \wedge \ldots \wedge \cE_{d-1}$, and  denote by $g = f|_{\rho_1 \circ \ldots \circ \rho_{d-1}}$.
	Under this conditioning, we have that 
	\begin{equation}g \in \DT(t-1) \circ \CKT(1; m) \circ \DT(\ell).\end{equation}
	For each leaf $\lambda$ of  the partial decision tree of depth at most $t-1$ for $g$, denote by $g_{\lambda}$ the function $g$ restricted by the partial assignment made along the path to $\lambda$. Note that for each leaf $\lambda$, the function $g_{\lambda}$ is in $\CKT(1;m) \circ \DT(\ell)$.
	By \Cref{lemma:Multi-Switching}, for each leaf $\lambda$, 
	\begin{equation}\Pr[g_{\lambda}|_{\rho_d} \notin \DT(t-1) \circ \DT(q-1)^{m}] \le 
m^{1+t/q} \cdot O(p_d \cdot \ell)^{t}.
\end{equation}
Since there are at most $2^{t-1}$ leaves, by union bound, the probability that there exists a leaf $\lambda$ for which the switching does not hold is at most 
\begin{equation}
2^{t-1} \cdot m^{1+t/q} \cdot O(p_d \cdot \ell)^{t}\;.
\end{equation}
In the complement event, the function $g|_{\rho_d} = f|_{\rho_1 \circ \ldots \circ \rho_d}$ is in $\DT(t-1)\circ \DT(t-1) \circ  \DT(q-1)^m$ and thus $\cE_d$ holds. Thus, we showed that 
\begin{equation}
\Pr[\neg \cE_d | \cE_1 \wedge \ldots \wedge \cE_{d-1}] \le 
2^{t-1} \cdot m^{1+t/q} \cdot O(p_d \cdot \ell)^{t} \le m^{1+t/q} \cdot O(p_d \cdot \ell)^{t}.
\end{equation}
Overall, we got that the probability that one of $\cE_1, \ldots,\cE_d$ does not hold is
\begin{align}
\Pr[\neg \cE_1 \vee \ldots \vee \neg \cE_d)]  &= 
\sum_{i=1}^{d} \Pr[\neg \cE_i | \cE_1 \wedge \ldots \cE_{i-1}]
\\
&\le s_1 \cdot O(p_1)^{t/2} + 
\left(\sum_{i=2}^{d-1} s_i \cdot O(p_i \cdot \ell)^{t/2}\right) 
+
m^{1+t/q} \cdot O(p_d \cdot \ell)^{t}\;,
\end{align}
as promised.
\end{proof}

We are now ready to restate and prove \Cref{cor:Rossman}.
(Note that the formulation here is slightly stronger than what we used in \Cref{sec:AC}, as we replace $2t$ and $q$ with $2t-2$ and $q-1$, respectively.)
\begin{lem}
	Let $f:\B^n \to \B^m$ be an $\AC^0$ circuit of size $s$, depth $d$.
Let $p = 1/ (m^{1/q} \cdot O(\log s)^{d-1})$.
Then 
\begin{equation}
\Pr_{\rho \sim \Rp}[f|_{\rho} \notin \DT(2t-2) \circ \DT(q-1)^m] \le 
s \cdot 2^{-t}.
\end{equation}
\end{lem}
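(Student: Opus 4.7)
The plan is to derive the statement directly from the more technical Lemma~\ref{lemma:technical}, whose conclusion already matches the form we want: it bounds the probability that $f|_\rho$ fails to collapse to something in $\DT(2t-2) \circ \DT(q-1)^m$ by a sum of three pieces controlled by independent restriction rates $p_1,\ldots,p_d$. So the entire job here is a clean choice of parameters so that (i) the product $p_1 p_2 \cdots p_d$ is exactly the advertised $p = 1/(m^{1/q} \cdot O(\log s)^{d-1})$, and (ii) each of the three terms on the right-hand side of Lemma~\ref{lemma:technical} is at most $s\cdot 2^{-t}/3$ (or is otherwise absorbed into $s \cdot 2^{-t}$).

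Concretely, I would view $\rho\sim\Rp$ as the composition $\rho_1\circ\rho_2\circ\cdots\circ\rho_d$ with $\rho_i\sim\Rest_{p_i}$, and pick the rates so that each term in Lemma~\ref{lemma:technical} is tamed separately:
\begin{itemize}
\item $p_1 = 1/C$ for a suitable absolute constant $C$, so that $s_1 \cdot O(p_1)^{t/2} \le s_1 \cdot 2^{-t}$;
\item $p_i = 1/(C' \log s)$ for $2\le i \le d-1$, so that $O(p_i \log s)^{t/2}\le 2^{-t}$ and the $i$th summand is at most $s_i \cdot 2^{-t}$;
\item $p_d = 1/(C'' \cdot m^{1/q}\cdot \log s)$, so that $O(p_d\log s)^{t} \cdot m^{t/q} \le 2^{-t}$, and therefore $m^{1+t/q}\cdot O(p_d\log s)^{t}\le m\cdot 2^{-t}$.
\end{itemize}
With these choices the product $p = p_1 p_2 \cdots p_d$ equals (up to constants absorbed into the $O(\cdot)$) exactly $1/(m^{1/q}\cdot O(\log s)^{d-1})$, matching the hypothesis of the lemma. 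Summing the three terms yields $(s_1 + s_2 + \cdots + s_{d-1} + m)\cdot 2^{-t} = s\cdot 2^{-t}$, which is the desired bound.

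The only genuinely subtle point — and in my view the main obstacle hidden in the proof — lies inside Lemma~\ref{lemma:technical} itself, namely the inductive peeling of one $\AC^0$ layer at a time via Rossman's switching lemma (Lemma~\ref{lemma:Rossman-Induction}) and the final application of H{\aa}stad's multi-switching lemma (Lemma~\ref{lemma:Multi-Switching}) on the bottom layer. Since both of these are stated in the excerpt as black boxes, the remaining task is just bookkeeping: check that the ``depth budget'' $t$ used at every intermediate layer still leaves a common partial decision tree of total depth $\le 2t-2$ after composing the layer-by-layer reductions (the $t-1$ contribution from the top $d-1$ switches plus an additional $t-1$ from the final multi-switch on the bottom layer), and that the leaf functions indeed land in $\DT(q-1)^m$ after the last step. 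The parameter setting above then matches the failure bound claimed in the lemma, completing the proof.
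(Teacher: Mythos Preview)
Your proposal is correct and essentially identical to the paper's proof: the paper simply invokes Lemma~\ref{lemma:technical} with the parameter choice $p_1 = 1/O(1)$, $p_2 = \cdots = p_{d-1} = 1/O(\log s)$, and $p_d = 1/O(m^{1/q}\cdot\log s)$, which is exactly what you do. Your write-up adds the verification that the three error terms sum to $s\cdot 2^{-t}$ and that the product of the $p_i$ matches the stated $p$, which the paper leaves implicit.
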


\begin{proof}
	We apply \Cref{lemma:technical} with $p_1 = 1/O(1)$ and $p_2 = \ldots = p_{d-1} = 1/O(\log s)$ and $p_d = 1/ O(m^{1/q} \cdot \log s)$.
\end{proof}

\bibliographystyle{alphaurl}
\phantomsection\addcontentsline{toc}{section}{References} 
\bibliography{ref}{}

\newcommand{\etalchar}[1]{$^{#1}$}
\begin{thebibliography}{BWHKN18}

\bibitem[ABO84]{AB84}
Miklos Ajtai and Michael Ben-Or.
\newblock A theorem on probabilistic constant depth computations.
\newblock In {\em Proceedings of the 16th Annual ACM Symposium on Theory of
  Computing}, STOC '84, pages 471--474, 1984.
\newblock \href {http://dx.doi.org/10.1145/800057.808715}
  {\path{doi:10.1145/800057.808715}}.

\bibitem[Ajt83]{Ajtai83}
Miklos Ajtai.
\newblock {$\Sigma_1^1$}-formulae on finite structures.
\newblock {\em Annals of Pure and Applied Logic}, 24:1--48, 1983.
\newblock \href {http://dx.doi.org/10.1016/0168-0072(83)90038-6}
  {\path{doi:10.1016/0168-0072(83)90038-6}}.

\bibitem[BBT05]{BBT05}
Gilles Brassard, Anne Broadbent, and Alain Tapp.
\newblock Recasting {Mermin's} multi-player game into the framework of
  pseudo-telepathy.
\newblock {\em Quantum Information \& Computation}, 5(7):538--550, November
  2005.
\newblock URL: \url{http://dl.acm.org/citation.cfm?id=2011656.2011658}.

\bibitem[BdW02]{BuhrmanW02}
Harry Buhrman and Ronald de~Wolf.
\newblock Complexity measures and decision tree complexity: a survey.
\newblock {\em Theor. Comput. Sci.}, 288(1):21--43, 2002.
\newblock \href {http://dx.doi.org/10.1016/S0304-3975(01)00144-X}
  {\path{doi:10.1016/S0304-3975(01)00144-X}}.

\bibitem[BGH07]{BGH07}
Debajyoti Bera, Frederic Green, and Steven Homer.
\newblock Small depth quantum circuits.
\newblock {\em {ACM} {SIGACT} News}, 38(2):35--50, June 2007.
\newblock \href {http://dx.doi.org/10.1145/1272729.1272739}
  {\path{doi:10.1145/1272729.1272739}}.

\bibitem[BGK18]{BGK18}
Sergey Bravyi, David Gosset, and Robert K{\"o}nig.
\newblock Quantum advantage with shallow circuits.
\newblock {\em Science}, 362(6412):308--311, 2018.
\newblock \href {http://dx.doi.org/10.1126/science.aar3106}
  {\path{doi:10.1126/science.aar3106}}.

\bibitem[BVHS{\etalchar{+}}18]{BHSRE18}
Juan Bermejo-Vega, Dominik Hangleiter, Martin Schwarz, Robert Raussendorf, and
  Jens Eisert.
\newblock Architectures for quantum simulation showing a quantum speedup.
\newblock {\em Physical Review X}, 8:021010, Apr 2018.
\newblock \href {http://dx.doi.org/10.1103/PhysRevX.8.021010}
  {\path{doi:10.1103/PhysRevX.8.021010}}.

\bibitem[BWHKN18]{watts2018algorithms}
Adam Bene~Watts, Aram~W. Harrow, Gurtej Kanwar, and Anand Natarajan.
\newblock {Algorithms, Bounds, and Strategies for Entangled XOR Games}.
\newblock In {\em 10th Innovations in Theoretical Computer Science Conference
  (ITCS 2019)}, volume 124 of {\em Leibniz International Proceedings in
  Informatics (LIPIcs)}, pages 10:1--10:18. Schloss Dagstuhl--Leibniz-Zentrum
  fuer Informatik, 2018.
\newblock \href {http://dx.doi.org/10.4230/LIPIcs.ITCS.2019.10}
  {\path{doi:10.4230/LIPIcs.ITCS.2019.10}}.

\bibitem[CSV18]{CSV18}
Matthew Coudron, Jalex Stark, and Thomas Vidick.
\newblock Trading locality for time: certifiable randomness from low-depth
  circuits.
\newblock {\em arXiv preprint arXiv:1810.04233}, 2018.
\newblock \href {http://arxiv.org/abs/1810.04233} {\path{arXiv:1810.04233}}.

\bibitem[FFG{\etalchar{+}}06]{FFGHZ06}
Maosen Fang, Stephen Fenner, Frederic Green, Steven Homer, and Yong Zhang.
\newblock Quantum lower bounds for fanout.
\newblock {\em Quantum Information \& Computation}, 6(1):46--57, January 2006.
\newblock URL: \url{http://dl.acm.org/citation.cfm?id=2011679.2011682}.

\bibitem[FGHZ05]{FGHZ03}
Stephen Fenner, Frederic Green, Steven Homer, and Yong Zhang.
\newblock Bounds on the power of constant-depth quantum circuits.
\newblock In {\em Fundamentals of Computation Theory}, pages 44--55, 2005.
\newblock \href {http://dx.doi.org/10.1007/11537311_5}
  {\path{doi:10.1007/11537311_5}}.

\bibitem[FSS84]{FSS84}
Merrick Furst, James~B. Saxe, and Michael Sipser.
\newblock Parity, circuits, and the polynomial-time hierarchy.
\newblock {\em Mathematical systems theory}, 17(1):13--27, Dec 1984.
\newblock \href {http://dx.doi.org/10.1007/BF01744431}
  {\path{doi:10.1007/BF01744431}}.

\bibitem[GHMP02]{GHMP02}
Frederic Green, Steven Homer, Cristopher Moore, and Christopher Pollett.
\newblock Counting, fanout and the complexity of quantum {ACC}.
\newblock {\em Quantum Information \& Computation}, 2(1):35--65, December 2002.
\newblock URL: \url{http://dl.acm.org/citation.cfm?id=2011417.2011420}.

\bibitem[GHZ89]{GHZ89}
Daniel~M. Greenberger, Michael~A. Horne, and Anton Zeilinger.
\newblock Going beyond {B}ell’s theorem.
\newblock In {\em Bell’s theorem, quantum theory and conceptions of the
  universe}, pages 69--72. Springer, 1989.
\newblock \href {http://dx.doi.org/10.1007/978-94-017-0849-4_10}
  {\path{doi:10.1007/978-94-017-0849-4_10}}.

\bibitem[H{\aa}s86]{Has86}
Johan H{\aa}stad.
\newblock {\em Computational limitations of small-depth circuits}.
\newblock PhD thesis, Massachusetts Institute of Technology, 1986.
\newblock URL: \url{https://www.nada.kth.se/~johanh/thesis.pdf}.

\bibitem[H{\aa}s14]{Hastad14}
Johan H{\aa}stad.
\newblock On the correlation of parity and small-depth circuits.
\newblock {\em SIAM Journal on Computing}, 43(5):1699--1708, 2014.
\newblock \href {http://dx.doi.org/10.1137/120897432}
  {\path{doi:10.1137/120897432}}.

\bibitem[H{\v S}05]{HS05}
Peter H{\o}yer and Robert {\v S}palek.
\newblock Quantum {Fan}-out is {Powerful}.
\newblock {\em Theory of Computing}, 1:23, 2005.
\newblock \href {http://dx.doi.org/10.4086/toc.2005.v001a005}
  {\path{doi:10.4086/toc.2005.v001a005}}.

\bibitem[LG18]{LG18}
Fran\c{c}ois Le~Gall.
\newblock Average-case quantum advantage with shallow circuits.
\newblock {\em arXiv preprint arXiv:1810.12792}, 2018.
\newblock \href {http://arxiv.org/abs/1810.12792} {\path{arXiv:1810.12792}}.

\bibitem[Mer90]{Mermin90}
N.~David Mermin.
\newblock Extreme quantum entanglement in a superposition of macroscopically
  distinct states.
\newblock {\em Physical Review Letters}, 65:1838--1840, Oct 1990.
\newblock \href {http://dx.doi.org/10.1103/PhysRevLett.65.1838}
  {\path{doi:10.1103/PhysRevLett.65.1838}}.

\bibitem[MN02]{MN02}
Cristopher Moore and Martin Nilsson.
\newblock Parallel quantum computation and quantum codes.
\newblock {\em SIAM Journal on Computing}, 31(3):799--815, March 2002.
\newblock \href {http://dx.doi.org/10.1137/S0097539799355053}
  {\path{doi:10.1137/S0097539799355053}}.

\bibitem[Moo99]{Moo99}
Cristopher Moore.
\newblock Quantum {Circuits}: {Fanout}, {Parity}, and {Counting}.
\newblock {\em arXiv:quant-ph/9903046}, March 1999.
\newblock \href {http://arxiv.org/abs/quant-ph/9903046}
  {\path{arXiv:quant-ph/9903046}}.

\bibitem[MW18]{MW18}
Cody Murray and Ryan Williams.
\newblock Circuit lower bounds for nondeterministic quasi-polytime: An easy
  witness lemma for {NP} and {NQP}.
\newblock In {\em Proceedings of the 50th Annual ACM SIGACT Symposium on Theory
  of Computing}, STOC 2018, pages 890--901, 2018.
\newblock \href {http://dx.doi.org/10.1145/3188745.3188910}
  {\path{doi:10.1145/3188745.3188910}}.

\bibitem[Rao07]{rao2007exposition}
Anup Rao.
\newblock An exposition of {B}ourgain’s 2-source extractor.
\newblock In {\em Electronic Colloquium on Computational Complexity (ECCC)},
  volume 07-34, 2007.
\newblock URL: \url{https://eccc.weizmann.ac.il/report/2007/034/}.

\bibitem[Raz87]{Raz87}
Alexander~A. Razborov.
\newblock Lower bounds on the size of bounded depth circuits over a complete
  basis with logical addition.
\newblock {\em Mathematical notes of the Academy of Sciences of the USSR},
  41(4):333--338, Apr 1987.
\newblock \href {http://dx.doi.org/10.1007/BF01137685}
  {\path{doi:10.1007/BF01137685}}.

\bibitem[Raz98]{Raz98}
Ran Raz.
\newblock A parallel repetition theorem.
\newblock {\em SIAM Journal on Computing}, 27(3):763--803, 1998.
\newblock \href {http://dx.doi.org/10.1137/S0097539795280895}
  {\path{doi:10.1137/S0097539795280895}}.

\bibitem[Ros17]{Rossman2017}
Benjamin Rossman.
\newblock An entropy proof of the switching lemma and tight bounds on the
  decision-tree size of {AC}$^0$.
\newblock Manuscript, 2017.
\newblock URL: \url{http://www.math.toronto.edu/rossman/logsize.pdf}.

\bibitem[RT19]{RT18}
Ran Raz and Avishay Tal.
\newblock Oracle separation of {BQP} and {PH}.
\newblock In {\em Proceedings of the 51st Annual Symposium on Theory of
  Computing}, STOC 2019, pages 13--23, 2019.
\newblock \href {http://dx.doi.org/10.1145/3313276.3316315}
  {\path{doi:10.1145/3313276.3316315}}.

\bibitem[Sha04]{Sha04}
Ronen Shaltiel.
\newblock Towards proving strong direct product theorems.
\newblock {\em Computational Complexity}, 12(1/2):1--22, July 2004.
\newblock \href {http://dx.doi.org/10.1007/s00037-003-0175-x}
  {\path{doi:10.1007/s00037-003-0175-x}}.

\bibitem[Sho97]{Sho97}
Peter~W. Shor.
\newblock Polynomial-time algorithms for prime factorization and discrete
  logarithms on a quantum computer.
\newblock {\em SIAM Journal on Computing}, 26(5):1484--1509, 1997.
\newblock \href {http://dx.doi.org/10.1137/S0036144598347011}
  {\path{doi:10.1137/S0036144598347011}}.

\bibitem[Smo87]{Smo87}
Roman Smolensky.
\newblock Algebraic methods in the theory of lower bounds for {Boolean} circuit
  complexity.
\newblock In {\em Proceedings of the Nineteenth Annual ACM Symposium on Theory
  of Computing}, STOC '87, pages 77--82, 1987.
\newblock \href {http://dx.doi.org/10.1145/28395.28404}
  {\path{doi:10.1145/28395.28404}}.

\bibitem[Tal17]{Tal17}
Avishay Tal.
\newblock Tight bounds on the fourier spectrum of {AC0}.
\newblock In {\em Computational Complexity Conference}, volume~79 of {\em
  LIPIcs}, pages 15:1--15:31. Schloss Dagstuhl - Leibniz-Zentrum fuer
  Informatik, 2017.
\newblock \href {http://dx.doi.org/10.4230/LIPIcs.CCC.2017.15}
  {\path{doi:10.4230/LIPIcs.CCC.2017.15}}.

\bibitem[TD04]{TD04}
Barbara~M. Terhal and David~P. DiVincenzo.
\newblock Adaptive quantum computation, constant depth quantum circuits and
  {Arthur-Merlin} games.
\newblock {\em Quantum Information \& Computation}, 4(2):134--145, March 2004.
\newblock URL: \url{http://dl.acm.org/citation.cfm?id=2011577.2011582}.

\bibitem[TT16]{TT11}
Yasuhiro Takahashi and Seiichiro Tani.
\newblock Collapse of the hierarchy of constant-depth exact quantum circuits.
\newblock {\em Computational Complexity}, 25(4):849--881, Dec 2016.
\newblock \href {http://dx.doi.org/10.1007/s00037-016-0140-0}
  {\path{doi:10.1007/s00037-016-0140-0}}.

\bibitem[TT18]{TT18}
Yasuhiro Takahashi and Seiichiro Tani.
\newblock {Power of Uninitialized Qubits in Shallow Quantum Circuits}.
\newblock In {\em 35th Symposium on Theoretical Aspects of Computer Science
  (STACS 2018)}, volume~96 of {\em Leibniz International Proceedings in
  Informatics (LIPIcs)}, pages 57:1--57:13, 2018.
\newblock \href {http://dx.doi.org/10.4230/LIPIcs.STACS.2018.57}
  {\path{doi:10.4230/LIPIcs.STACS.2018.57}}.

\bibitem[Vaz86]{Vaz86}
Umesh Vazirani.
\newblock {\em Randomness, Adversaries and Computation}.
\newblock PhD thesis, UC Berkeley, 1986.

\bibitem[Vio14]{Viola14}
Emanuele Viola.
\newblock Extractors for circuit sources.
\newblock {\em {SIAM} Journal on Computing}, 43(2):655--672, 2014.
\newblock \href {http://dx.doi.org/10.1137/11085983X}
  {\path{doi:10.1137/11085983X}}.

\bibitem[Wil14]{Wil14}
Ryan Williams.
\newblock Nonuniform {ACC} circuit lower bounds.
\newblock {\em Journal of the ACM}, 61(1):2:1--2:32, January 2014.
\newblock \href {http://dx.doi.org/10.1145/2559903}
  {\path{doi:10.1145/2559903}}.

\bibitem[Yao85]{Yao85}
Andrew C-C. Yao.
\newblock Separating the polynomial-time hierarchy by oracles.
\newblock In {\em Proc. 26th Annual Symposium on Foundations of Computer
  Science}, pages 1--10, 1985.
\newblock URL: \url{http://dl.acm.org/citation.cfm?id=4479.4487}.

\end{thebibliography}

\end{document}